\newtheorem{proposition}{Proposition}
\newtheorem{corollary}{Corollary}
\newtheorem{lemma}{Lemma}
\newtheorem{assumption}{Assumption}
\newtheorem{theorem}{Theorem}
\theoremstyle{remark}
\newtheorem{remark}{Remark}
\newcommand*\diff{\mathop{}\!\mathrm{d}}
\title{Drawdowns, Drawups, and Occupation Times under General Markov Models}
\author{Pingping Zeng\thanks{Department of Mathematics, Southern University of Science and Technology, Shenzhen, China. Email: zengpp@sustech.edu.cn.} \and Gongqiu Zhang\thanks{Corresponding author. School of Science and Engineering, The Chinese University of Hong Kong (Shenzhen), China. Email: zhanggongqiu@cuhk.edu.cn.} \and Weinan Zhang\thanks{Department of Data Science, City University of Hong Kong, Hong Kong SAR. Email: zhang.weinan@foxmail.com.}}
\begin{document}

\maketitle

\begin{abstract}
 Drawdown risk, an important metric in financial risk management, poses significant computational challenges due to its highly path-dependent nature. This paper proposes a unified framework for computing five important drawdown quantities introduced in \cite{landriault2015frequency-drawdown} and \cite{zhang2015occupation-drawdown} under general Markov models. We first establish linear systems and develop efficient algorithms for such problems under continuous-time Markov chains (CTMCs), and then establish their theoretical convergence to target quantities under general Markov models. Notably, the proposed algorithms for most quantities achieve the same complexity order as those for path-independent problems: cubic in the number of CTMC states for general Markov models and linear when applied to diffusion models.
 Rigorous convergence analysis is conducted under weak regularity conditions, and extensive numerical experiments validate the accuracy and efficiency of the proposed algorithms.
\end{abstract}

\noindent\textbf{Keywords}: Continuous-time Markov chain approximation; Drawdowns; Drawups; Occupation times; Markov models

\noindent\textbf{MSC}: 60J28; 60J60; 60J76; 91G20; 91G30; 91G60

\section{Introduction}
\label{sec:introduction}

Drawdown, defined as the magnitude of decline of the asset price from its historical peak, has emerged as a widely adopted downside risk metric in financial markets. Its prominence is evidenced by growing scholarly and industrial interest, particularly in applications such as performance evaluation of investment funds and algorithmic trading strategies, where maximum drawdown is routinely reported as a key robustness indicator. Notably, empirical studies increasingly associate irrational trading behaviors, including panic selling and momentum overreaction, with exposure to extreme drawdown events. 

Various drawdown quantities have been systematically examined in the academic literature, including the first passage time of drawdown processes 
\citep{taylor1975stopped,lehoczky1977formulas,douady2000probability,magdon2004maximum,hadjiliadis2006drawdowns,pospisil2009formulas,zhang2010drawdowns,mijatovic2012drawdown,zhang2015occupation-drawdown,landriault2017unified,zhang2023drawdown}, maximum drawdown \citep{magdon2004maximum-risk,pospisil2008pde,pospisil2010portfolio,schuhmacher2011sufficient}, speed and duration of drawdown \citep{zhang2012drawdowns,landriault2017magnitude,cui2018magnitude,li2024speed-duration-drawdown}, along with drawdown-linked insurance products and derivatives \citep{shepp1993russian,asmussen2004russian,avram2004exit,vecer2006option,vecer2007preventing,zhang2013stochastic,landriault2015frequency-drawdown,cui2016omega,zhang2021american-drawdown}. Furthermore, drawdown constraints have been rigorously embedded within stochastic control frameworks for portfolio optimization \citep{grossman1993optimal,cvitanic1994portfolio,chekhlov2005drawdown,cherny2013portfolio}, marking a paradigm shift in risk-aware asset allocation methodologies.

Among the extensive literature on drawdown and related derivatives, seminal works by \cite{landriault2015frequency-drawdown} and \cite{zhang2015occupation-drawdown} made significant strides in quantifying the frequency of drawdowns, drawdowns preceding drawups (vice versa), and occupation times of drawdown processes. These drawdown quantities are especially useful in financial engineering applications and of theoretical importance in stochastic modeling. However, these foundational studies predominantly assume a diffusion process or even a Brownian motion, as the underlying model, fail to capture the empirical reality of financial markets where discontinuous price movements (jumps) frequently occur. To address this limitation and enable robust evaluation of these drawdown quantities under general Markov models, we develop a novel computational framework based on the  continuous-time Markov chain (CTMC) approximation.

For a stochastic process $\{X_t\}_{t\geq 0}$, the drawdown process $\{D_t\}_{t\geq 0}$ and drawup process $\{\widehat D_t\}_{t\geq 0}$ can be formally defined as
\begin{equation*}
    \label{eq:drawdown_drawup}
    D_t = \sup_{0 \le s\leq t} X_s - X_t~~\text{and}~~ \widehat{D}_t = X_t - \inf_{0 \le s\leq t} X_s, \qquad t \ge 0,
\end{equation*}
respectively. The drawdown time is then defined as the first passage time of the drawdown process $D$ reaches a predefined threshold level $a$, i.e., $\tau^X_a = \inf \{ t \ge0: D_t \ge a \}$ and similarly the drawup time is $\widehat{\tau}^X_b = \inf \{ t \ge 0: \widehat{D}_t \ge b \}$. Inspired by \cite{landriault2015frequency-drawdown} and \cite{zhang2015occupation-drawdown}, we focus on the following five drawdown quantities under general Markov models: (1) The drawdown time proceeding the drawup time; (2) The generalized occupation time of the Markov process until the drawdown time; (3) The generalized occupation time of the drawdown process until the drawdown time; (4) The $n$-th drawdown time without recovery; (5) The $n$-th drawdown time with recovery. More specifically, we first approximate the Markov process $X$ by a CTMC and then price the target quantities under the CTMC. The CTMC approximation method has been extensively used for derivative pricing under Markov models. Notable applications include European and barrier options \citep{Mijatovic2013barrier},  American options \citep{eriksson2015american}, Asian options \citep{Song2013weakconvergence,cai2015general,cui2018single}, occupation time derivatives \citep{zhang2022analysis-nonsmooth}, lookback options \citep{zhang2023look}, Parisian options \citep{zhang2023parisian}, autocallable products \citep{cui2024pricing}, and drawdown derivatives \citep{zhang2021american-drawdown,zhang2023drawdown,li2024speed-duration-drawdown}.
Working under the CTMC model, we can establish rigorous characterizations of these five target quantities through dedicated linear system formulations and develop efficient recursive procedures to reduce the complexity of solving these linear systems. For most drawdown quantities, our methodology achieves computational complexity scaling as $\mathcal O(N^3)$ with $N$ CTMC states under general Markov models, while attaining complexity $\mathcal O(N)$ when applied to diffusion models, thereby achieving computational parity with path-independent valuations (e.g., European option pricing) under equivalent model configurations. Finally, we perform rigorous convergence analysis under weak regularity conditions, and conduct extensive numerical experiments to demonstrate the accuracy and efficiency of the proposed algorithms.

Our contributions to the literature are three-fold:
\begin{itemize}
    

    \item We have proposed efficient algorithms for evaluating five important drawdown quantities under general Markov models, overcoming the restrictive L\'evy process/diffusion process/Brownian motion assumptions prevalent in existing literature (e.g., \cite{mijatovic2012drawdown}, \cite{landriault2015frequency-drawdown}, and \cite{zhang2015occupation-drawdown}). In addition, the CTMC approximation has been applied to the analysis of drawdown risk \citep{zhang2023drawdown}, American drawdown option pricing \citep{zhang2021american-drawdown}, and the speed and duration of drawdown \citep{li2024speed-duration-drawdown} under general Markov models, while we extend its application to various highly path-dependent drawdown quantities. Our results are nontrivial extensions of the existing papers. For example, we investigate occupation times of the underlying and drawdown/drawup process up to the drawdown/drawup time, which adds another layer of path-dependency. 

    \item We have established the convergence analysis of the proposed algorithms. The analysis adopts the techniques used in \cite{zhang2023drawdown} but with important and nontrivial modifications to handle the additional path-dependency as mentioned above.

    \item We contribute to ample financial engineering applications of drawdown quantities. Specifically, we price various financial products, including digital options associated with the occupation times of the Markov/drawdown processes and insurance products associated with the frequency of relative drawdowns with/without recovery.
\end{itemize}

The remainder of the paper is structured as follows. In Section \ref{sec:preliminary_results}, we introduce preliminary results of drawdown analysis under the CTMC model. In Section \ref{sec:drawdown_drawup_occupation}, we derive linear systems for five drawdown quantities via the CTMC approximation and propose efficient algorithms for solving these systems. We perform convergence analysis of the proposed algorithms under some mild conditions in Section \ref{sec:convergence_rate}, and conduct numerical experiments to demonstrate the accuracy and efficiency of our algorithms in Section \ref{sec:numerical_experiments}. Finally, we conclude the paper in Section \ref{sec:conclusions}. The construction of the CTMC approximation and proofs are provided in Appendices \ref{sec:ctmc_approximation} and \ref{sec:supplementary_proofs}, respectively.

\section{Preliminaries}
\label{sec:preliminary_results}

Consider a 1D time-homogeneous Markov process $X$ living on an interval $I\subseteq  \mathbb{R}$ with its inﬁnitesimal generator $\mathcal L$ acting on $g\in C^2_c (I)$ as follows:
\begin{align}
    \label{eq:infinitesimal_generatorX}
    \mathcal{L}g(x) = b(x)g'(x) +\frac{1}{2}\sigma^2(x)g''(x) + \int_\mathbb{R} \big(g(x+y) - g(x) - yg'(x) \mathbf{1}_{\{|y| \le 1 \}} \big)\, \nu(x, \diff y),
\end{align}
where $b(x)$ is the drift, $\sigma(x)$ is the diffusion coefficient, and $\nu(x, \diff y)$ is the state-dependent jump measure satisfying $\int_{|y|\leq 1} y^2 \,\nu(x, \diff y)<\infty$ for all $x \in I$.
The class of models characterized by \eqref{eq:infinitesimal_generatorX} is very general as it subsumes diffusions, jump-diffusions, and pure-jump models,
and allows the jump intensity to be state-dependent with ﬁnite or inﬁnite jump activity and ﬁnite or inﬁnite jump variation.

The process $X$ can be approximated by a continuous-time Markov chain (CTMC) $Y$ with the finite state space 
$\mathbb{S} = \{y_0, y_1, \cdots, y_N\},$
where $y_0 < y_1 < \cdots < y_N$ and $y_0$ and $y_N$ are absorbing states.
The details regarding the construction of the CTMC can be found in Appendix \ref{sec:ctmc_approximation}. Denote by $\bm G$ the transition rate matrix of $Y$, whose elements satisfy that $G(x, y) \ge 0$ for $y\in \mathbb{S}\backslash\{x\}$ and $G(x, x) = -\sum_{y\in \mathbb{S}\backslash\{x\}} G(x, y)$ for all $x\in \mathbb{S}$. 
The infinitesimal generator of the CTMC $Y$ is then formulated as:
for any function $g:\mathbb{S} \to \mathbb{R}$,
\begin{align}
    \label{eq:ctmc_operator}
    \mathcal{G}g(x) = \sum_{y \in \mathbb{S}} G(x, y) g(y), \qquad x \in \mathbb{S}.
\end{align}
In the following, for any $x$, define by $x^+=\min \{y\in \mathbb{S},~y>x\}$ the right grid point next to $x$.

\subsection{The First Passage Time of the CTMC}

For any $S \subseteq \mathbb{R}$, the first passage time of the CTMC $Y$ is defined as follows:
\begin{align}
    T_S := \inf \{ t > 0: Y_t \notin S \}.
\end{align}
We consider the following quantity:
\begin{align}
    \label{eq:laplace_transform_first_passage_time}
    P_S(k, x; f) := \mathbb{E}\big[ e^{-\int_{0}^{T_S} k(Y_s) \diff s}  \mathbf{1}_{\{T_S<\infty\}} f(Y_{T_S})  \,\big| \,Y_0 = x \big],
\end{align}
where $k, f:\mathbb{S} \to \mathbb{R}$.

\begin{lemma}\label{passageCTMC}
The quantity $P_S(k, x; f)$ satisfies the following linear system:
\begin{align}\label{eq:linear_system_first_passage_time}
    \begin{cases}
    (k(x) - \mathcal{G})P_S(k, x; f) = 0, \qquad &x \in \mathbb{S} \cap S, \\
    P_S(k, x; f) = f(x), \qquad &x \in \mathbb{S} \backslash S,
    \end{cases}
\end{align}
where $\mathcal{G}$ is the infinitesimal generator of the CTMC specified by \eqref{eq:ctmc_operator}.
\end{lemma}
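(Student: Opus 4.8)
The plan is to verify the two branches of the linear system \eqref{eq:linear_system_first_passage_time} separately, disposing of the boundary branch directly and deriving the interior equation by a first-jump analysis of the CTMC.

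For a starting state $x \in \mathbb{S}\backslash S$, I would first observe that $T_S = 0$: since $Y_0 = x \notin S$ and the chain stays put at $x$ over an initial holding interval of a.s.\ positive length, we have $Y_t \notin S$ for all sufficiently small $t>0$, so the infimum defining $T_S$ is attained at $0$. Then the discount integral vanishes, $\mathbf 1_{\{T_S<\infty\}}=1$, and $Y_{T_S}=x$, which gives $P_S(k,x;f)=f(x)$ immediately.

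The substance is the interior branch $x\in\mathbb{S}\cap S$, which I would handle by conditioning on the first jump. Write $\lambda(x)=-G(x,x)=\sum_{y\neq x}G(x,y)$ for the exit rate, let $\xi\sim\mathrm{Exp}(\lambda(x))$ be the holding time at $x$, and let $Z=Y_\xi$ be the first post-jump state, which is independent of $\xi$ and hits $y$ with probability $G(x,y)/\lambda(x)$. Because $x\in S$, the chain cannot exit $S$ before its first jump, so $Y_s=x$ for $s<\xi$ and the discount accumulated over $[0,\xi)$ is exactly $e^{-k(x)\xi}$. Splitting on whether $Z$ lands inside or outside $S$ and invoking the strong Markov property at $\xi$, I would arrive at
\[
P_S(k,x;f)=\mathbb E\big[e^{-k(x)\xi}\big]\sum_{y\neq x}\frac{G(x,y)}{\lambda(x)}\Big(\mathbf 1_{\{y\in S\}}P_S(k,y;f)+\mathbf 1_{\{y\notin S\}}f(y)\Big).
\]
Using the boundary identity just established to replace $f(y)$ by $P_S(k,y;f)$ for $y\notin S$, and substituting $\mathbb E[e^{-k(x)\xi}]=\lambda(x)/(\lambda(x)+k(x))$, the recursion collapses to $(\lambda(x)+k(x))P_S(k,x;f)=\sum_{y\neq x}G(x,y)P_S(k,y;f)$. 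Finally, restoring $\lambda(x)=-G(x,x)$ and rearranging produces $k(x)P_S(k,x;f)=\sum_{y\in\mathbb S}G(x,y)P_S(k,y;f)=\mathcal G P_S(k,x;f)$, i.e.\ $(k(x)-\mathcal G)P_S(k,x;f)=0$.

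The routine algebra is harmless; the points that need genuine care are the well-posedness and finiteness of $P_S(k,x;f)$ (so that the conditional expectation, its factorization across the independent pair $(\xi,Z)$, and the interchange with the finite sum over post-jump states are all legitimate) and the clean application of the strong Markov property at the first jump time. Both rest on standard structural features of finite-state CTMCs --- the memoryless holding times, the holding-time/jump-target independence, and $\xi$ being a stopping time --- so I expect the main obstacle to be stating these regularity facts precisely rather than producing any hard estimate.
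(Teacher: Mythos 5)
Your proof is correct, but it takes a genuinely different route from the paper's. The paper omits the proof of this lemma, pointing to Proposition 3.1 of \cite{zhang2023drawdown}, whose technique (reproduced in this paper's Appendix B.1 for the analogous quantities $R$ and $S$) is a small-time expansion on a discrete skeleton: one introduces the time grid $\mathbb{T}_t=\{kt:k=0,1,\dots\}$, applies the Markov property over one step, expands $p(t,x,z)=\mathbf{1}_{\{x=z\}}+G(x,z)t+o(t)$ and $e^{-k(x)t}=1-k(x)t+o(t)$, divides by $t$, and lets $t\to 0$. You instead perform an exact first-jump decomposition: condition on the exponential holding time $\xi$ and the independent jump target $Z$, apply the strong Markov property at $\xi$, and evaluate $\mathbb{E}[e^{-k(x)\xi}]=\lambda(x)/(\lambda(x)+k(x))$ in closed form. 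Your argument buys exactness and self-containedness --- no $o(t)$ asymptotics, no limit passage from the skeleton to continuous time --- by leaning on the structural facts (memoryless holding times, holding-time/jump-target independence) that you correctly flag as the load-bearing ingredients. The paper's approach is less elementary here but transfers more uniformly to the harder functionals of Propositions \ref{prop:drawdown_preceding_drawup} and \ref{prop:generalized_occupation_drawdown}, where the discount depends on the running maximum or minimum along the path and a first-jump analysis would require extra bookkeeping of the bivariate state. Two small points to tighten in your write-up: (i) your interior recursion divides by $\lambda(x)$, so the absorbing case $\lambda(x)=0$ (which does occur, since $y_0$ and $y_N$ are absorbing and may lie in $\mathbb{S}\cap S$) must be treated separately --- there $T_S=\infty$, the indicator forces $P_S(k,x;f)=0$, and the equation $(k(x)-\mathcal{G})P_S(k,x;f)=0$ holds trivially because every $G(x,y)$ vanishes; (ii) the identity $\mathbb{E}[e^{-k(x)\xi}]=\lambda(x)/(\lambda(x)+k(x))$ requires $k(x)>-\lambda(x)$ (or $\Re(k(x))\ge 0$ in the complex case actually used later in the paper), so you should state that standing assumption rather than allow arbitrary real $k$.
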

The proof of Lemma \ref{passageCTMC} is analogous to that of Proposition 3.1 in \cite{zhang2023drawdown} and hence is omitted here. Following \cite{zhang2023drawdown}, the linear system above admits a closed-form solution. More specifically, define ${\pmb P}_S(k; f) = \big( P_S(k, x; f) \big)_{x \in \mathbb{S}},$
then 
\begin{align}
    \label{eq:P_S_closed_form}
    {\pmb P}_S(k; f) = ({\pmb I}_{S^c} + {\pmb I}_S \operatorname{diag}({\pmb k}) - {\pmb I}_S\, {\pmb G})^{-1} {\pmb I}_{S^c} {\pmb f}.
\end{align}
Here, ${\pmb f} = (f(x))_{x \in \mathbb{S}}$, ${\pmb k} = (k(x))_{x \in \mathbb{S}}$, ${\pmb I}_S = (\mathbf{1}_{\{z = x,\, x \in S \}})_{x,\, z \in \mathbb{S}}$, and ${\pmb I}_{S^c} = ( \mathbf{1}_{\{z=x,\, x \in S^c \}})_{x,\,z \in \mathbb{S}}$. It is noteworthy that the dependence of the solution on $f$ is linear.

\begin{remark}
    The Laplace transform of the first passage time $T_S$ can be obtained as a special case by setting $k(x) \equiv q$ and $f(x)\equiv 1$.
\end{remark}

\subsection{The First Passage Time of the Drawdown Process of the CTMC}
\label{subsec:first_passage_time_drawdown}

Write $\overline{Y}_{t} = \sup_{0 \le u \le t} Y_u$ and denote by $\tau_a=\mathrm{inf}\{t\geq 0: \overline{Y}_{t} -Y_t\geq a\}$ the first passage time of the drawdown process of the CTMC $Y$ to the level $a>0$. We call $\tau_a$ the drawdown time. Define
\begin{align}
    Q_a(q, x; f) := \mathbb{E}_x\big[ e^{-q\tau_a} \mathbf{1}_{\{\tau_a<\infty\}} f(Y_{\tau_a}) \big],
\end{align}
where $q \in \mathbb{C}$ with $\Re(q) > 0$, $f:\mathbb{S} \to \mathbb{R}$, and $\mathbb{E}_x[ \cdot ] = \mathbb{E}[ \cdot \,|\, Y_0 = x, \,\overline{Y}_0 = x]$.

\begin{lemma}\label{passage_duration}
    For any $q \in \mathbb{C}$ with $\Re(q) > 0$ and $x \in \mathbb{S}$, the quantity $ Q_a(q, x; f)$ satisfies the following linear system:
\begin{align}
    \label{eq:laplace_transform_tau_a}
  Q_a(q, x; f)=P_{(x-a, x]}(q, x; f(\cdot)\mathbf{1}_{\{\cdot \le x -a\}}) +\sum_{y\in\mathbb{S},\, y>x} P_{(x-a, x]}\big(q, x; \mathbf{1}_{\{\cdot = y \}} \big)Q_a(q, y; f).
\end{align}
\end{lemma}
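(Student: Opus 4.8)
The plan is to condition on the first time the chain leaves the half-open interval $S := (x-a, x]$ and then invoke the strong Markov property, treating the pair $(Y_t, \overline{Y}_t)$ as the relevant augmented state. The crucial structural observation is that, under the initialization $Y_0 = \overline{Y}_0 = x$, the running maximum stays frozen at $x$ as long as the chain remains in $S$: for every $t < T_S$ we have $Y_t \in (x-a, x]$, hence $Y_t \le x$ and $\overline{Y}_t = x$. Consequently the drawdown satisfies $\overline{Y}_t - Y_t = x - Y_t < a$ for $t < T_S$, which shows $\tau_a \ge T_S$; the drawdown cannot reach the level $a$ before the chain exits $S$.

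First I would classify the exit. Since $Y$ is a pure-jump CTMC valued in $\mathbb{S}$, the point $Y_{T_S}$ is a grid point lying outside $S$, so on $\{T_S < \infty\}$ exactly one of two events occurs: a downward exit $\{Y_{T_S} \le x - a\}$ or an upward exit $\{Y_{T_S} > x\}$. On a downward exit the drawdown at $T_S$ equals $x - Y_{T_S} \ge a$, so $\tau_a = T_S$ and $Y_{\tau_a} = Y_{T_S}$. On an upward exit to a state $y > x$, the running maximum jumps from $x$ to $y$, the drawdown resets to $0$, and therefore $\tau_a > T_S$ with the augmented state immediately after $T_S$ equal to $(y, y)$.

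Next I would decompose $Q_a(q, x; f)$ according to these two exit modes. The downward-exit contribution is $\mathbb{E}_x\big[e^{-qT_S}\mathbf{1}_{\{T_S<\infty,\, Y_{T_S}\le x-a\}} f(Y_{T_S})\big]$, which, by the definition of $P_S$ with $k\equiv q$ (so that $\int_0^{T_S} k(Y_s)\diff s = qT_S$), is precisely $P_{(x-a,x]}(q, x; f(\cdot)\mathbf{1}_{\{\cdot\le x-a\}})$. For the upward-exit contribution I would apply the strong Markov property of $Y$ at the stopping time $T_S$: conditionally on $\{Y_{T_S} = y\}$ with $y > x$, the post-$T_S$ process is a fresh copy started from the augmented state $(y, y)$, the discount factorizes as $e^{-q\tau_a} = e^{-qT_S}\, e^{-q\tau_a'}$ where $\tau_a'$ is the drawdown time of the shifted process, and $Y_{\tau_a} = Y'_{\tau_a'}$. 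Hence the conditional continuation expectation equals $Q_a(q, y; f)$, and since this quantity is deterministic it factors out, leaving $P_{(x-a,x]}(q, x; \mathbf{1}_{\{\cdot=y\}})\, Q_a(q, y; f)$. Summing over all $y \in \mathbb{S}$ with $y > x$ and adding the downward-exit term yields the claimed identity.

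The main obstacle is the careful application of the strong Markov property to the joint process $(Y, \overline{Y})$ rather than to $Y$ alone: I must verify that the running maximum genuinely resets to $y$ on an upward exit, so that the continuation is governed by $Q_a(q, y; \cdot)$ initialized at $(y,y)$, and that the decompositions $\tau_a = T_S + \tau_a'$ and $\{\tau_a < \infty\} = \{T_S < \infty,\, Y_{T_S}\le x-a\} \cup \big(\{Y_{T_S}>x\}\cap\{\tau_a'<\infty\}\big)$ are compatible with the discounting and with the indicator $\mathbf{1}_{\{T_S<\infty\}}$ already built into $P_S$. The remaining bookkeeping, namely the interchange of expectation with the finite sum over $y > x$ and the treatment of the non-exit event $\{T_S = \infty\}$ (which contributes zero to both sides through the respective indicators), is routine.
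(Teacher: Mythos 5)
Your proposal is correct and follows essentially the same route as the paper: the paper omits the proof of Lemma~\ref{passage_duration} as being analogous to Theorem~3.1 of \cite{zhang2023drawdown}, whose argument (also visible in the paper's proof of Proposition~\ref{prop:drawdown_preceding_drawup}) is exactly your decomposition at the first exit of $(x-a,x]$ into the downward-exit event $\{T_{x-a}^- < T_x^+\}$, where $\tau_a = T_{x-a}^-$, and the upward-exit event $\{T_{x-a}^- > T_x^+\}$, where the running maximum resets and the strong Markov property yields the terms $P_{(x-a,x]}(q,x;\mathbf{1}_{\{\cdot = y\}})\,Q_a(q,y;f)$. Your justification that the maximum stays frozen at $x$ inside $(x-a,x]$ and resets to $y$ upon an upward exit is precisely the structural fact the paper's argument rests on.
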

The proof of Lemma \ref{passage_duration} is very similar to that of Theorem 3.1 in \cite{zhang2023drawdown}
and hence is omitted here. The computationally efficient recursive algorithm proposed by \cite{zhang2023drawdown} can be employed for solving the linear system \eqref{eq:laplace_transform_tau_a}. The key steps of their approach can be summarized as follows: 
\begin{itemize}
    \item Step 1: $Q_a(q, y_N; f) = 0$ as $y_N$ is an absorbing state.
    \item Step 2: Assume $x=y_{\eta_x}$ for some integer $\eta_x$. Calculate $Q_a(q, y_i; f)$ by running the backward from $i={N-1}$ to $i=\eta_x$:
    \begin{itemize}

        \item  Compute the quantities $P_{(y_i-a, y_i]}(q, y_i; f(\cdot)\mathbf{1}_{\{\cdot \le y_i -a\}})$ and $P_{(y_i-a, y_i]}\big(q, y_i; \mathbf{1}_{\{\cdot = y_j \}} \big)$ for all $j>i$ via \eqref{eq:P_S_closed_form};

        \item Compute $Q_a(q, y_i; f)$ via \eqref{eq:laplace_transform_tau_a}. 
    \end{itemize}
\end{itemize}

Significant reduction of the computational cost can be achieved in the following two cases.

Case 1: the Markov process $X$ is a L\'evy process. A CTMC $Y$ with the state space $\mathbb{S} = h \mathbb{Z}$ is chosen to approximate $X$ so that $Y$ itself is also a L\'evy process. Here, $h > 0$ and $\mathbb{Z}$ denotes the set of integers. In this case, $P_S(q, x; f) = P_{S-x}(q, 0; f(x+ \cdot))$ and \eqref{eq:laplace_transform_tau_a} can be rewritten as follows:

\begin{align*}
    Q_a(q, x; f) &=  \sum_{y\in\mathbb{S},\, y\leq x-a} P_{(x-a, x]}\big(q, x; \mathbf{1}_{\{\cdot = y \}} \big)f(y) + \sum_{y\in\mathbb{S}, \,y>x} P_{(x-a, x]}\big(q, x; \mathbf{1}_{\{\cdot = y \}} \big)Q_a(q, y; f)\\
    &=  \sum_{y\in\mathbb{S},\, y\leq -a} P_{(-a, 0]}\big(q, 0; \mathbf{1}_{\{\cdot = y \}} \big)f(y+x) + \sum_{y\in\mathbb{S},\, y>0} P_{(-a, 0]}\big(q, 0; \mathbf{1}_{\{\cdot = y \}} \big)Q_a(q, y+x; f).
\end{align*}
Obviously, it suffices to compute $P_{(-a, 0]}\big(q, 0; \mathbf{1}_{\{\cdot = y_i \}} \big)$ for $y_i\leq -a$ and $y_i>0$ via \eqref{eq:P_S_closed_form} to obtain $Q_a(q,x;f)$ recursively.

Case 2: the Markov process $X$ is a diffusion process. The resulting CTMC is a birth-death process and \eqref{eq:laplace_transform_tau_a} can be simplified as follows:
\begin{align}
    Q_a(q, x; f)
    &= P_{(x-a, x]}(q, x; \mathbf{1}_{\{\cdot = (x -a)^{\ominus}\}}) f((x-a)^{\ominus}) + P_{(x-a, x]}(q, x;  \mathbf{1}_{\{\cdot = x^+ \}})  Q_a(q, x^+; f),
\end{align}
where $(x-a)^\ominus = \sup\{y \in \mathbb{S}: y \le x-a\}$. In this case, $P_{(x-a, x]}(q, x; \mathbf{1}_{\{\cdot = (x -a)^{\ominus}\}})$ and $P_{(x-a, x]}(q, x;  \mathbf{1}_{\{\cdot = x^+ \}})$ can be constructed from two independent solutions $\psi^\pm$ as follows: (see \cite{zhang2023drawdown})  
\begin{align}\label{eq:linear-combination}
    &P_{(x-a, x]}(q, x; \mathbf{1}_{\{\cdot = x^+ \}}) = \frac{\psi^+(q, x) \psi^-(q, (x-a)^{\ominus}) - \psi^+(q, (x-a)^{\ominus}) \psi^-(q, x)}{\psi^+(q, x^+) \psi^-(q, (x-a)^{\ominus}) - \psi^+(q, (x-a)^{\ominus}) \psi^-(q, x^+)}, \\
    &P_{(x-a, x]}(q, x; \mathbf{1}_{\{\cdot = (x -a)^{\ominus}\}}) = \frac{\psi^+(q, x^+) \psi^-(q, x) - \psi^+(q, x) \psi^-(q, x^+)}{\psi^+(q, x^+) \psi^-(q, (x-a)^{\ominus}) - \psi^+(q, (x-a)^{\ominus}) \psi^-(q, x^+)}.
\end{align}
Here, $\psi^\pm$ satisfy the following linear systems:
\begin{align}\label{eq:psi-linear-system}
    (q - \mathcal{G}) \psi^\pm(q, x) = 0, \qquad x \in \mathbb{S} \backslash \{y_0, y_N\},
\end{align}
with boundary conditions: $\psi^+(q, y_0) = \psi^-(q, y_N) = 0$ and $\psi^-(q, y_0) = \psi^+(q, y_N) = 1$.


\section{Drawdowns, Drawups, and Occupation Times}
\label{sec:drawdown_drawup_occupation}

In this section, we derive linear systems for five drawdown quantities
under the CTMC and then propose efficient algorithms for solving these linear systems.

\subsection{The Drawdown Time Preceding the Drawup Time}
\label{subsec:drawdown_preceding_drawup}

Write $\underline{Y}_{t} = \inf_{0 \le u \le t} Y_u$ and denote by $\widehat{\tau}_b=\mathrm{inf}\{t\geq 0: Y_t-\underline{Y}_t\geq b\}$ the first passage time of the drawup process of the CTMC $Y$ to the level $b>0$. $\widehat{\tau}_b$ is called the drawup time. We consider the drawdown time $\tau_a$ and the drawup time $\widehat{\tau}_b$, especially when the former precedes the latter, and define the following quantity:
\begin{align}
    \label{eq:drawdown_preceding_drawup}
    A(q, x, y) := \mathbb{E}_{x, y}\big[ e^{-q\tau_a} \mathbf{1}_{\{\tau_a < \widehat{\tau}_b\}} f(Y_{\tau_a})  \big].
\end{align}
Here, $q \in \mathbb{C}$ with $\Re(q) > 0$, $f:\mathbb{S} \to \mathbb{R}$, and $\mathbb{E}_{x, y}[\cdot] = \mathbb{E}[\cdot \,|\, Y_0 = x, \,\overline{Y}_0 = x, \,\underline{Y}_0 = y]$.

\begin{proposition}
    \label{prop:drawdown_preceding_drawup}
    Suppose $b \ge a$. For any $q \in \mathbb{C}$ with $\Re(q) > 0$,  $x \in \mathbb{S}$, and 
    $y \in \mathbb{S} \cap (x-a, x]$, the quantity $A(q, x, y)$ satisfies the following linear system: 
\begin{equation}\label{eq:linear_system_drawdown_preceding_drawup}
\begin{aligned}
    A(q, x, y) 
    =&~P_{(x-a, x]}(q, x; f(\cdot) \mathbf{1}_{\{\cdot \le x -a\}}) \\
    &~
    + \sum_{z \in\mathbb{S}\cap (x, y+b),\, w \in\mathbb{S}\cap ((x-a)\vee(z-b), y]}R_{(x-a, x]}(q, x, y;  \mathbf{1}_{\{\cdot =z, \,\tilde{\cdot}=w\}})A(q,z,w),
\end{aligned}
\end{equation}
where
\begin{align}
    R_{(\ell, r]}(q, x, y; g) := \mathbb{E} \big[ e^{-qT_r^+} \mathbf{1}_{\{ T_\ell^->T_r^+\}} g(Y_{T_r^+}, \underline{Y}_{T_r^+})\, \big| \, Y_0 = x,\, \underline{Y}_0 = y \big],
\end{align}
which satisfies the following linear system: for $x, y \in \mathbb{S}$ with $x \ge y$,
\begin{align}\label{eq:liner-system-min}
    \begin{cases}
        q R_{(\ell, r]}(q, x, y; g) - \sum\limits_{z \in \mathbb{S}} G(x, z) R_{(\ell, r]}(q, z, y \wedge z; g) = 0, \qquad & x \in (\ell, r], \\
        R_{(\ell, r]}(q, x, y; g) = g(x, y) \mathbf{1}_{\{x > r \}}, \qquad & x \notin (\ell, r].
    \end{cases}
\end{align}
Here, $\ell<r$, $T_r^+ := T_{(-\infty, r]}$, and $T_{\ell}^- := T_{(\ell, \infty)}$.
\end{proposition}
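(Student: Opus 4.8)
The plan is to condition on the first exit of $Y$ from the band $(x-a,x]$ and invoke the strong Markov property at the stopping time $\theta := T_{(x-a,x]}$. The starting configuration is the peak, $\overline{Y}_0 = x = Y_0$ with $\underline{Y}_0 = y$, so the first step is to observe that while $Y$ stays in $(x-a,x]$ the running maximum remains $x$; hence the drawdown $\overline{Y}_t - Y_t = x - Y_t < a$ has not reached $a$, and since every state visited exceeds $x-a$ the running minimum stays $>x-a$, so the drawup $Y_t - \underline{Y}_t \le x - \underline{Y}_t < a \le b$ has not reached $b$ either. The hypothesis $b \ge a$ enters precisely here, guaranteeing that $\widehat{\tau}_b$ cannot fire before $\theta$. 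Thus $\tau_a \wedge \widehat{\tau}_b \ge \theta$ and the event $\{\tau_a < \widehat{\tau}_b\}$ is decided at and after $\theta$.

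Next I would split according to the exit mode. On a downward exit, $Y_\theta \le x-a$, the drawdown reaches $a$ exactly at $\theta$ (since $\overline{Y}_\theta = x$ still), so $\tau_a = \theta < \widehat{\tau}_b$; taking expectations over these paths gives $\mathbb{E}_{x,y}[e^{-q\theta}\mathbf{1}_{\{Y_\theta \le x-a\}} f(Y_\theta)] = P_{(x-a,x]}(q,x; f(\cdot)\mathbf{1}_{\{\cdot \le x-a\}})$, which is the first term. On an upward exit, $Y$ crosses $x$ to a new peak $z := Y_\theta > x$ with running minimum $w := \underline{Y}_\theta$; the confinement forces $w > x-a$, monotonicity of the minimum forces $w \le y$, and the crossing jump must not itself trigger the drawup, i.e. $z - w < b$, equivalently $w > z-b$; existence of such $w$ requires $z < y+b$. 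These constraints reproduce exactly the index set $z \in \mathbb{S} \cap (x,y+b)$ and $w \in \mathbb{S} \cap ((x-a)\vee(z-b),\,y]$ in the statement.

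On $\{Y_\theta = z,\ \underline{Y}_\theta = w\}$ with $z-w<b$, neither barrier has been reached up to $\theta$ and the triple $(Y_{\theta+\cdot},\overline{Y}_{\theta+\cdot},\underline{Y}_{\theta+\cdot})$ restarts from $(z,z,w)$; by the strong Markov property the conditional value of $e^{-q(\tau_a-\theta)}\mathbf{1}_{\{\tau_a<\widehat{\tau}_b\}} f(Y_{\tau_a})$ is $A(q,z,w)$. The upward contribution is therefore $\sum_{z,w}\mathbb{E}_{x,y}\big[e^{-q\theta}\mathbf{1}_{\{T_{x-a}^->T_x^+\}}\mathbf{1}_{\{Y_\theta=z,\,\underline{Y}_\theta=w\}}\big]A(q,z,w)$, and recognizing the expectation as $R_{(x-a,x]}(q,x,y;\mathbf{1}_{\{\cdot=z,\,\tilde{\cdot}=w\}})$ yields the second term, completing the identity \eqref{eq:linear_system_drawdown_preceding_drawup}.

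It remains to justify the linear system for $R$. The key structural fact is that the pair $(Y_t,\underline{Y}_t)$ is itself Markov, with generator acting on $h(x,y)$ (for $x \ge y$) by $\mathcal{A}h(x,y) = \sum_{z\in\mathbb{S}} G(x,z)\,h(z,y\wedge z)$, since a transition $x \to z$ of $Y$ moves the minimum to $y\wedge z$ and $\sum_{z\in\mathbb{S}}G(x,z)=0$ with $y\wedge x=y$. A first-step (Dynkin/Feynman–Kac) analysis for this bivariate chain, killed at rate $q$ and stopped on leaving $(\ell,r]$, produces $(q-\mathcal{A})R=0$ on $\{x\in(\ell,r]\}$ together with the absorbing data $R=g\,\mathbf{1}_{\{x>r\}}$ off $(\ell,r]$, i.e. system \eqref{eq:liner-system-min}; this is the bivariate analogue of Lemma~\ref{passageCTMC}. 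I expect the main obstacle to be the bookkeeping of the running minimum rather than any single hard estimate: one must verify rigorously that the drawup cannot fire inside $(x-a,x]$ (the step using $b\ge a$) and that the post-crossing minimum $w=\underline{Y}_\theta$ ranges over precisely $((x-a)\vee(z-b),\,y]$, so that the strong-Markov renewal reattaches to $A(q,z,w)$ over exactly the index set in \eqref{eq:linear_system_drawdown_preceding_drawup}. Establishing the Markov property of $(Y,\underline{Y})$ and the correct boundary behaviour of $R$ at the two ends of $(\ell,r]$ is the other point requiring care.
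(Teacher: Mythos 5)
Your proposal is correct and takes essentially the same approach as the paper: it decomposes $A(q,x,y)$ at the first exit from the band $(x-a,x]$ (equivalently, splitting on $\{\tau_a<T_x^+\}$ versus $\{\tau_a>T_x^+\}$), uses $b\ge a$ to rule out the drawup firing inside the band, applies the strong Markov property at the upward crossing to reattach to $A(q,z,w)$ over exactly the stated index set, and obtains the system \eqref{eq:liner-system-min} by a first-step analysis of the bivariate Markov chain $(Y,\underline{Y})$. The only difference is presentational: the paper implements your ``Dynkin/Feynman--Kac'' step concretely via a discrete-time skeleton $\mathbb{T}_t=\{kt:k=0,1,\dots\}$ and the expansion $p(t,x,z)=\mathbf{1}_{\{x=z\}}+G(x,z)t+o(t)$, which is the rigorous version of the same idea.
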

\begin{proof}
The proof of Proposition \ref{prop:drawdown_preceding_drawup} is given in Appendix \ref{proof:drawdown_preceding_drawup}.
\end{proof}

Next, we develop an efficient algorithm to solve the linear systems \eqref{eq:linear_system_drawdown_preceding_drawup} and \eqref{eq:liner-system-min} recursively; see Algorithm \ref{alg:quantity-1}. 

\begin{algorithm}[htbp]
    \caption{Calculate $A(q,x,y)$}\label{alg:quantity-1}
    \begin{algorithmic}
         \State \textbf{Assumptions} $\eta_x:=\{i:y_i=x\}$ and $\eta_y:=\{i:y_i=y\}$. 
         \State \textbf{Notations} $i_a:=\min\{j:y_j>y_i-a\},~i_b:=\min\{j:y_j>y_i-b\}$, $\bm{G}:=\{G(y_m,y_{l})\}_{m,{l}=0}^{N}$, $\bm{I}_{m}$ denotes an ${m\times m}$ identity matrix, and $\text{Diag}(\bm{c})$ denotes the diagonal matrix with diagonal elements being $\bm{c}$.
        \State \textbf{Input} $\mathbb{S},~\bm{G},~f,~\eta_x,~\eta_y,$ and $q$.
        \State \textbf{Initialize} $A(q,y_N,\cdot)\gets 0$
        \For{$i=N-1$ to $\eta_{x}$}
        \State $\bm{G}_i\gets\{G(y_m,y_{l})\}_{m,{l}=i_a}^{i}$,~~~~$\bm{e}_{i}\gets\text{Diag}(\{\mathbf{1}_{\{y_m=y_i\}}\}_{m=i_a}^{i})$,
        \State $\bm{G}_i^-\gets\{\{G(y_m,y_{l})\}_{m=i_a}^{i}\}_{{l}=0}^{i_a-1}$,~~~~$\bm{G}_i^+\gets\{\{G(y_m,y_{l})\}_{m=i_a}^{i}\}_{l=i+1}^{N}$,
        \State $\bm{A}_i^+\gets\{\{A(q,y_m,y_{l})\}_{m=i+1}^{N}\}_{{l}=i_b}^{i_a}$,~~~~$\bm{f}_{i}\gets\{f(y_m)\}_{m=0}^{i_a-1}$,
        \State $\{g_{i,n}(\cdot,\tilde{\cdot})\}_{n=i_a}^{i-1}\gets\bigg\{\sum\limits_{y_i<y_m<y_n+b,\,y_{i_a}\vee y_{m_b}<y_l\leq y_n}\mathbf{1}_{\{\cdot=y_m,\,\tilde{\cdot}=y_l\}}A(q,y_m,y_l)\bigg\}_{n=i_a}^i$,
        \State $P_{(y_i-a,y_i]}(q,y_i;f)\gets -\bm{e}_i(\bm{G}_i-q\bm{I}_{i-i_a+1})^{-1}(\bm{G}_i^-\bm{f}_i)$,
        \State $\{A(q,y_i,y_{l})\}_{{l}=i_b}^{i_a}\gets P_{(y_i-a,y_i]}(q,y_i;f) -\bm{e}_i(\bm{G}_i-q\bm{I}_{i-i_a+1})^{-1}(\bm{G}_i^+\bm{A}_i^+)$,
        \For{$j=i_a+1$ to $i$}
        \State $\bm{G}_{i,j}\gets\{G(y_m,y_{l})\}_{m,{l}=j}^{i}$,~~~~$\bm{e}_{i,j}=\text{Diag}(\{\mathbf{1}_{\{y_m=y_i\}}\}_{m=j}^{i})$,
        \State $\bm{G}_{i,j}^-\gets\{\{G(y_m,y_{l})\}_{m=j}^{i}\}_{{l}=i_a}^{j-1}$,~~~~$\bm{G}_{i,j}^+\gets\{\{G(y_m,y_{l})\}_{m=j}^{i}\}_{{l}=i+1}^{N}$,
        \State $\bm{R}_{i,j}^-\gets\{R_{(y_i-a,y_i]}(q,y_m,y_m;g_{i,m})\}_{m=i_a}^{j-1}$,~~~~$\bm{A}_{i,j}^+\gets\{A(q,y_m,y_j)\}_{m=i+1}^{N}$,
        \State $A(q,y_i,y_j)\gets P_{(y_i-a,y_i]}(q,y_i;f)-\bm{e}_{i,j}(\bm{G}_{i,j}-q\bm{I}_{i-j+1})^{-1}(\bm{G}_{i,j}^-\bm{R}_{i,j}^-+\bm{G}_{i,j}^+\bm{A}_{i,j}^+)$.
        \EndFor
        \EndFor
        \State \Return $A(q,y_{\eta_x},y_{\eta_y})$
    \end{algorithmic}
\end{algorithm}

\begin{remark}\label{rmk:complexity-quantity-1} 
    The complexity of Algorithm \ref{alg:quantity-1} is dominated by the calculations of inverse matrices and matrix multiplications.
    For each fixed $i$, when $j$ ranges from $i_a+1$ to $i$, the matrix $\bm{G}_{i,j}-q\bm{I}_{i-j+1}$ in each recursion step can be expressed as the matrix in the previous step plus a sum of rank-one matrices.
    This observation enables us to utilize the Sherman–Morrison–Woodbury formula \citep{woodbury1950} to efficiently compute $(\bm{G}_{i,j}-q\bm{I}_{i-j+1})^{-1}$ from the inverse matrix in the previous recursion step, 
   see \citet[Section 3.1]{zhang2023drawdown} for more details.  Consequently, the complexity of  calculating inverse matrices in the inner loop is $\mathcal{O}(N^3)$. Furthermore, when $i$ decreases from $N-1$ to $\eta_x$, the Sherman–Morrison–Woodbury formula can be applied again, resulting in a complexity of order $\mathcal{O}(N^4)$ for the whole loops. Consequently, the overall complexity of computing inverse matrices is $\mathcal{O}(N^4)$.

    
    The complexity in the calculations of matrix multiplications is of order $\mathcal{O}(N^3)$ instead of $\mathcal{O}(N^4)$.
    Take matrix multiplications $\bm{G}_{i,j}^- \bm{R}_{i,j}^-$ with $i=N-1,\dots,\eta_x$ and $j=i_a+1,\dots,i$ for example. The computational cost can be reduced as follows. 
    Introduce two new matrices $ \widetilde{\bm{G}}_{i,j+1}^-$ and $ \widetilde{\bm{G}}_{i,j}^-$ by expanding $\bm{G}_{i,j+1}^-$ and $\bm{G}_{i,j}^-$ as follows:
    \begin{equation}
        \widetilde{\bm{G}}_{i,j+1}^-:=
        \begin{bmatrix}
            \bm{G}_{i,j+1}^-\\
            \bm{0}_{1\times (j-i_a+1)}
        \end{bmatrix},\quad\quad\quad
        \widetilde{\bm{G}}_{i,j}^-:=
        \begin{bmatrix}
            \bm{G}_{i,j}^-&\bm{0}_{(i-j+1)\times 1}
        \end{bmatrix},
    \end{equation}
    where $\bm{0}_{m\times n}$ is an $m\times n$ zero matrix.
    By construction, $\widetilde{\bm{G}}_{i,j+1}^--\widetilde{\bm{G}}_{i,j}^-$ is a sparse matrix as only the last row and column contain non-zero elements. The sparsity ensures that
    the calculation of the matrix multiplication $(\widetilde{\bm{G}}_{i,j+1}^--\widetilde{\bm{G}}_{i,j}^-)\bm{R}_{i,j+1}^-$ only requires $\mathcal{O}(N)$ operations. Moreover, notice that
     \begin{equation}
    \begin{aligned}
        \widetilde{\bm{G}}_{i,j+1}^-\bm{R}_{i,j+1}^-=~&(\widetilde{\bm{G}}_{i,j+1}^--\widetilde{\bm{G}}_{i,j}^-)\bm{R}_{i,j+1}^-+\widetilde{\bm{G}}_{i,j}^-\bm{R}_{i,j+1}^-\\
        =~&(\widetilde{\bm{G}}_{i,j+1}^--\widetilde{\bm{G}}_{i,j}^-)\bm{R}_{i,j+1}^-+\bm{G}_{i,j}^-\bm{R}_{i,j}^-.\\
    \end{aligned}
    \end{equation}
    As a result, once $\bm{G}_{i,j}^-\bm{R}_{i,j}^-$ has been calculated, the complexity of evaluating $ \widetilde{\bm{G}}_{i,j+1}^-\bm{R}_{i,j+1}^-$ is $\mathcal{O}(N)$. So is $\bm{G}_{i,j+1}^-\bm{R}_{i,j+1}^-$.  When $j$ increases from $i_a+1$ to $i$ and $i$ decreases from $N-1$ to $\eta_x$, the computations of these matrix multiplications are of $\mathcal{O}(N^3)$ complexity. 

    Therefore, the overall complexity of Algorithm \ref{alg:quantity-1} is $\mathcal{O}(N^4)$. In particular, when the Markov process $X$ is a diffusion process, the overall complexity of Algorithm \ref{alg:quantity-1} can be reduced to $\mathcal{O}(N^2)$. In this case, similar to \eqref{eq:linear-combination}, the quantity
    $A(q,y_i,y_j)$
    can be represented in terms of the vector $\{\psi^\pm(q, y_m)\}_{m=0}^N$, which requires $\mathcal{O}(N)$ operations by solving the linear systems \eqref{eq:psi-linear-system}. Thus, one circumvents the computations of inverse matrices and matrix multiplications, 
    as previously discussed. When $i$ decreases from $N-1$ to $\eta_x$ and $j$ ranges from $i_a+1$ to $i$, the overall complexity is $\mathcal{O}((N-\eta_x)(i-i_a))+O(N)$, which simplifies to $ O(N^2)$.

\end{remark}

\begin{remark}
    For any $q \in \mathbb{C}$ with $\Re(q) > 0$,  $x \in \mathbb{S}$, and 
    $y\in(x-b,x-a]$, the quantity $A(q, x, y)$ satisfies the following simpler linear system: 
    \begin{equation}
        A(q,x,y) = P_{(x-a,x]}(q, x; f(\cdot)\mathbf{1}_{\{\cdot\leq x-a\}}) + \sum_{z\in\mathbb{S}\cap (x, y+b)}P_{(x-a,x]}(q, x; \mathbf{1}_{\{\cdot=z\}})A(q, z, y),
    \end{equation}
    which can be solved in analogous to the quantity $Q_a(q,x;f)$.
\end{remark}

\begin{remark}\label{rmk:drawdown_preceding_drawup_b<a}
    The case $b < a$ is more complicated and we leave it for future work.     
\end{remark}

\subsection{The Generalized Occupation Time of the CTMC until the Drawdown Time}
\label{subsec:occupation_below_level}

Next, we consider the generalized occupation time of the CTMC $Y$ until the drawdown time $\tau_a$ and introduce the following quantity:
\begin{align}
    \label{eq:occupation_below_level}
    B(k, x) = \mathbb{E}_x\big[ e^{-\int_0^{\tau_a} k(Y_s) \diff s}  \mathbf{1}_{\{\tau_a<\infty\}}f(Y_{\tau_a}) \big].
\end{align}
When $k(x) =q\mathbf{1}_{\{x < \xi\}}$ and $f(x)\equiv 1$, $B(k, x)$ is reduced to the Laplace transform of the occupation time of the CTMC $Y$ below the level $\xi$ until the drawdown time $\tau_a$.

\begin{proposition}
    \label{prop:occupation_below_level}
    For $x \in \mathbb{S}$ and any $k: \mathbb{S} \to \mathbb{C}$ such that $\Re(k(x)) \ge 0$ for all $x \in \mathbb{S}$, the quantity $B(k, x)$ satisfies the following linear system: 
\begin{equation}
    \label{eq:linear_system_occupation_below_level}
    \begin{aligned}
    B(k, x) 
    &=P_{(x-a, x]}(k, x; f(\cdot) \mathbf{1}_{\{\cdot \le x -a\}}) +\sum_{y\in\mathbb{S},\,y>x}P_{(x-a, x]}(k, x;  \mathbf{1}_{\{ \cdot =y \}})B(k, y).
    \end{aligned}
\end{equation}
\end{proposition}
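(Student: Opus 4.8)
The plan is to follow the template of Lemma \ref{passage_duration}, the only genuinely new feature being the replacement of the constant discount rate $q$ by a state-dependent rate $k(\cdot)$. Structurally this changes nothing, because the exponent $\int_0^{\tau_a} k(Y_s)\diff s$ is additive in time and hence still factorizes multiplicatively across any splitting time. I would condition on the first exit time $T := T_{(x-a, x]}$ of the CTMC from the interval $(x-a, x]$. The key geometric observation is that, under $\mathbb{E}_x$ (so $Y_0 = x$ and $\overline{Y}_0 = x$), as long as $Y_t \in (x-a, x]$ one has $Y_t \le x$, so the running maximum stays frozen at $\overline{Y}_t = x$, and simultaneously $Y_t > x - a$, so the drawdown $\overline{Y}_t - Y_t = x - Y_t$ is strictly below $a$. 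Consequently $\tau_a > T$ whenever the process remains in $(x-a,x]$ up to $T$, and the behaviour of $\tau_a$ is entirely governed by how $Y$ first leaves this interval.

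Next I would split according to the exit location. Since $Y$ is a CTMC, $Y_T \notin (x-a, x]$ forces either $Y_T \le x-a$ or $Y_T > x$, and these are mutually exclusive and exhaustive on $\{T<\infty\}$. In the downward case $Y_T \le x - a$, the drawdown hits level $a$ exactly at $T$, so $\tau_a = T$ and $Y_{\tau_a} = Y_T$; the corresponding contribution to $B(k,x)$ is $\mathbb{E}_x\big[e^{-\int_0^T k(Y_s)\diff s}\mathbf{1}_{\{T<\infty\}} f(Y_T)\mathbf{1}_{\{Y_T \le x-a\}}\big]$, which by the definition \eqref{eq:laplace_transform_first_passage_time} is exactly $P_{(x-a, x]}(k, x; f(\cdot)\mathbf{1}_{\{\cdot \le x-a\}})$, the first term of \eqref{eq:linear_system_occupation_below_level}.

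In the upward case $Y_T = y > x$, the process strictly exceeds its previous maximum, so at time $T$ the running maximum updates to $y$ and the drawdown resets to zero; in particular $\tau_a > T$. Writing $\int_0^{\tau_a} k(Y_s)\diff s = \int_0^{T} k(Y_s)\diff s + \int_T^{\tau_a} k(Y_s)\diff s$ and applying the strong Markov property at $T$, the post-$T$ factor conditional on $Y_T = y$ equals $B(k, y)$. The point needing care is the drawdown renewal: because $Y_T = y > x$ and the running maximum had been frozen at $x$ throughout $[0,T)$, the running maximum of the shifted process $\{Y_{T+s}\}_{s \ge 0}$ coincides with the global running maximum $\{\overline{Y}_{T+s}\}_{s\ge 0}$, so the shifted process starts afresh with $\overline{Y}_T = Y_T = y$ and its own drawdown time equals $\tau_a - T$. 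This is precisely what legitimizes replacing the conditional post-$T$ expectation by $B(k, y)$, rather than a more complicated quantity carrying a nonzero initial drawdown. Summing over $y > x$ produces $\sum_{y\in\mathbb{S},\,y>x} P_{(x-a, x]}(k, x; \mathbf{1}_{\{\cdot = y\}}) B(k, y)$, the second term of \eqref{eq:linear_system_occupation_below_level}.

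I expect the only genuine obstacle to be the rigorous justification of this drawdown renewal coupled with the strong Markov application, i.e.\ verifying that conditioning on an upward exit genuinely yields an identically-structured fresh copy of the drawdown problem started from $(y,y)$; once this is settled, adding the two disjoint cases gives the asserted identity. Integrability and the treatment of $\{T=\infty\}$ are immediate: the hypothesis $\Re(k(x)) \ge 0$ gives $\big|e^{-\int_0^t k(Y_s)\diff s}\big| \le 1$, so every expectation is dominated by $\sup_{\mathbb{S}}|f|$, and on $\{T=\infty\}$ the process never leaves $(x-a,x]$, forcing $\tau_a=\infty$, so that both sides vanish consistently through the indicators $\mathbf{1}_{\{T<\infty\}}$ and $\mathbf{1}_{\{\tau_a<\infty\}}$.
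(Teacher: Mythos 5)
Your proposal is correct and follows essentially the same route as the paper's proof: your split at the first exit time of $(x-a,x]$ by exit direction is exactly the paper's decomposition on $\{\tau_a < T_x^+\}$ versus $\{\tau_a > T_x^+\}$ (since that exit time is $T_{x-a}^- \wedge T_x^+$), with the same identification $\tau_a = T_{x-a}^-$ on downward exit and the same strong-Markov/drawdown-renewal argument at upward exits yielding the $B(k,y)$ terms. Your explicit remarks on integrability under $\Re(k)\ge 0$ and on the $\{T=\infty\}$ event are fine additions that the paper leaves implicit.
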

\begin{proof}
The proof of Proposition \ref{prop:occupation_below_level} is deferred to Appendix \ref{proof:occupation_below_level}.
\end{proof}

We construct an efficient algorithm to solve the linear system \eqref{eq:linear_system_occupation_below_level}, formally described and analyzed in Algorithm \ref{alg:quantity-2}. 
\begin{algorithm}[htbp]
    \caption{Calculate $B(k,x)$}\label{alg:quantity-2}
    \begin{algorithmic}
        \State \textbf{Assumption} $\eta_x:=\{i:y_i=x\}$.
        \State \textbf{Notations} $i_a:=\min\{j:y_j>y_i-a\}$, $\bm{G}:=\{G(y_m,y_{l})\}_{m,{l}=0}^{N}$, 
        and $\text{Diag}(\bm{c})$ denotes the diagonal matrix with diagonal elements being $\bm{c}$.
        \State \textbf{Input} $\mathbb{S},~\bm{G},~f,~k,$ and $\eta_x$.
        \State \textbf{Initialize} $B(k,y_N)\gets 0$
        \For{$i=N-1$ to $\eta_{x}$}
        \State $\bm{G}_i\gets\{G(y_m,y_{l})\}_{m,{l}=i_a}^{i}$,~~~~$\bm{k}_i\gets\text{Diag}(\{k(y_m)\}_{m=i_a}^i)$,~~~~$\bm{e}_{i}=\text{Diag}(\{\mathbf{1}_{\{y_m=y_i\}}\}_{m=i_a}^{i})$,
        \State $\bm{G}_i^-\gets\{\{G(y_m,y_{l})\}_{m=i_a}^{i}\}_{{l}=0}^{i_a-1}$,~~~~$\bm{G}_i^+\gets\{\{G(y_m,y_{l})\}_{m=i_a}^{i}\}_{{l}=i+1}^{N}$,
        \State $\bm{f}_{i}\gets\{f(y_m)\}_{m=0}^{i_a-1}$,~~~~$\bm{B}_i^+\gets\{B(k,y_m)\}_{m=i+1}^N$,
        \State $B(k,y_i)\gets -\bm{e}_i(\bm{G}_i-\bm{k}_i)^{-1}(\bm{G}_i^-\bm{f}_i+\bm{G}_i^+\bm{B}_i^+)$.
        \EndFor
        \State \Return $B(k,y_{\eta_x})$
    \end{algorithmic}
\end{algorithm}

\begin{remark}\label{rmk:complexity-quantity-2}
      The complexity analysis of Algorithm \ref{alg:quantity-2} is analogous to that of Algorithm \ref{alg:quantity-1}. The only difference is that the former does not involve the running minimum $\underline{Y}$ and one dimension is reduced. Therefore, the complexity of Algorithm \ref{alg:quantity-2} is reduced to $\mathcal{O}(N^3)$ for general Markov processes and $\mathcal{O}(N)$ for diffusion processes.
\end{remark}


\subsection{The Generalized Occupation Time of the Drawdown Process until the Drawdown Time}
\label{subsec:occupation_drawdown_above_level}

In this subsection, we consider the generalized occupation time of the drawdown process until the drawdown time, which is reflected in the following quantity:
\begin{align}
    \label{eq:generalized_occupation_drawdown}
    C(k, x) = \mathbb{E}_x\big[ e^{-\int_0^{\tau_a} k(Y_s, \overline{Y}_s) \diff s} \mathbf{1}_{\{\tau_a<\infty\}} f(Y_{\tau_a}) \big].
\end{align}
When $k(x, y) = q \mathbf{1}_{\{y-x > \xi\}}$ and $f(x)\equiv 1$, $C(k,x)$ is reduced to the Laplace transform of the occupation time of the drawdown process above the level $\xi$ until the drawdown time $\tau_a$.

\begin{proposition}
    \label{prop:generalized_occupation_drawdown}
    For $x\in \mathbb{S}$ and any $k: \mathbb{S}^2 \to \mathbb{C}$ such that $\Re(k(x,y)) \ge 0$ for all $x, y\in\mathbb{S}$, the quantity $C(k, x)$ satisfies the following linear system:
\begin{equation}\label{eq:linear_system_generalized_occupation_drawdown}
    \begin{aligned}
    C(k, x) 
    &=P_{(x-a, x]}(k(\cdot,x), x; f(\cdot) \mathbf{1}_{\{\cdot \le x -a\}}) +\sum_{z\in\mathbb{S},\,z>x} S_{(x-a, x]}(k, x, x;  \mathbf{1}_{\{ \cdot = z \}})C(k, z),
    \end{aligned}
\end{equation}
where
\begin{align}
    S_{(\ell, r]}(k, x, y; g) := \mathbb{E} \big[ e^{-\int_0^{T_r^+} k(Y_s, \overline{Y}_s) \diff s} \mathbf{1}_{\{ T_\ell^->T_r^+ \}} g(Y_{T_r^+})\, \big|\,  Y_0 = x, \overline{Y}_0 = y \big],
\end{align}
which satisfies the following linear system: for $x, y \in \mathbb{S}$ with $x \le y$,
\begin{align}
    \label{eq:linear_system_generalized_occupation_drawdown_r}
    \begin{cases}
        k(x, y) S_{(\ell, r]}(k, x, y; g) - \sum_{z \in \mathbb{S}} G(x, z) S_{(\ell, r]}(k, z, y \vee z; g) = 0, &\qquad x \in (\ell, r], \\
        S_{(\ell, r]}(k, x, y; g) = g(x) \mathbf{1}_{\{x > r \}}, &\qquad x \notin (\ell, r].
    \end{cases}
\end{align}
\end{proposition}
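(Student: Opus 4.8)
The plan is to establish the two coupled linear systems separately, mirroring the renewal argument behind \eqref{eq:laplace_transform_tau_a} and Proposition \ref{prop:occupation_below_level}, the new ingredient being that the running maximum $\overline{Y}$ now enters the discount functional and therefore forces the introduction of the auxiliary quantity $S$.

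First, for the identity \eqref{eq:linear_system_generalized_occupation_drawdown}, I would condition on the first exit time $T := T_{(x-a,x]}$ of $Y$ from the interval $(x-a,x]$, which under $\mathbb{E}_x$ (so that $Y_0=\overline{Y}_0=x$) coincides with $T_x^+ \wedge T_{x-a}^-$. The key observation is that the running maximum cannot increase while $Y$ stays in $(x-a,x]$, so $\overline{Y}_s = x$ for all $s \le T$ and hence $k(Y_s,\overline{Y}_s) = k(Y_s,x)$ throughout $[0,T]$; this collapses the two-argument discount into the scalar rate $k(\cdot,x)$ over the first sojourn. I then split according to how $Y$ leaves the interval. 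On $\{Y_T \le x-a\}$ the drawdown $\overline{Y}_T - Y_T = x - Y_T$ reaches level $a$, so $\tau_a = T$ and $Y_{\tau_a}=Y_T$; this event contributes exactly $P_{(x-a,x]}(k(\cdot,x),x; f(\cdot)\mathbf{1}_{\{\cdot \le x-a\}})$ by the definition in \eqref{eq:laplace_transform_first_passage_time}. On $\{Y_T = z\}$ with $z>x$ the running maximum jumps to $z$ and the drawdown resets to $0$; by the strong Markov property the process restarts from $(Y_T,\overline{Y}_T)=(z,z)$, so the residual expectation equals $C(k,z)$, weighted by the discounted probability of reaching $z$, namely $S_{(x-a,x]}(k,x,x;\mathbf{1}_{\{\cdot=z\}})$ (which, since $\overline{Y}\equiv x$ up to $T$, agrees with $P_{(x-a,x]}(k(\cdot,x),x;\mathbf{1}_{\{\cdot=z\}})$). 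Summing over $z>x$ and combining the two contributions yields \eqref{eq:linear_system_generalized_occupation_drawdown}.

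Second, for the linear system \eqref{eq:linear_system_generalized_occupation_drawdown_r} satisfied by $S$, I would work with the bivariate process $(Y,\overline{Y})$, which is Markov, and apply the standard Dynkin/Feynman–Kac argument for the discounted first-passage functional $S_{(\ell,r]}(k,x,y;g)$ — exactly as in the derivation of the companion quantity $R$ in \eqref{eq:liner-system-min}, but with the running minimum replaced by the running maximum. Conditioning on the first jump of the CTMC from $x$ (to $z$, at rate $G(x,z)$), the running maximum updates to $y\vee z$ while the discount accrues at rate $k(x,y)$; letting the time step shrink gives the interior relation $k(x,y)S_{(\ell,r]}(k,x,y;g) - \sum_{z\in\mathbb{S}} G(x,z) S_{(\ell,r]}(k,z,y\vee z;g) = 0$ for $x\in(\ell,r]$. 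The boundary conditions are read off directly: if $x>r$ then $T_r^+=0$ and $S=g(x)$, whereas if $x\le\ell$ then $T_\ell^-=0<T_r^+$ so the indicator $\mathbf{1}_{\{T_\ell^->T_r^+\}}$ vanishes and $S=0$; together these give $S_{(\ell,r]}(k,x,y;g)=g(x)\mathbf{1}_{\{x>r\}}$ for $x\notin(\ell,r]$.

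I expect the main obstacle to be the careful bookkeeping of the running maximum across the (possibly large) jumps of the CTMC: one must verify that upon the first up-crossing the process lands exactly at a state $z>x$ with $\overline{Y}_T=z$, so that the post-$T$ dynamics are those of a fresh chain started from $(z,z)$ with drawdown $0$, and that no overshoot or simultaneous boundary interaction spoils the clean restart. Establishing that $\overline{Y}_s=x$ for $s\le T$, and hence that the first-sojourn discount reduces to the scalar rate $k(\cdot,x)$, is precisely what licenses the appeal to the single-argument quantity $P_S$ of Lemma \ref{passageCTMC}; the remaining steps are routine applications of the strong Markov property and of the generator of $(Y,\overline{Y})$, entirely parallel to the arguments already used for \eqref{eq:laplace_transform_tau_a}, \eqref{eq:linear_system_drawdown_preceding_drawup}, and \eqref{eq:linear_system_occupation_below_level}.
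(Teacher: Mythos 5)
Your proposal is correct and follows essentially the same route as the paper: decompose $C(k,x)$ at the first exit of $(x-a,x]$ using that $\overline{Y}_s \equiv x$ before $T_x^+$ (so the discount collapses to the scalar rate $k(\cdot,x)$), invoke the strong Markov property for the restart from $(z,z)$ after an upward exit, and derive the system \eqref{eq:linear_system_generalized_occupation_drawdown_r} for $S$ by a generator/first-step argument on the bivariate chain $(Y,\overline{Y})$, exactly as the paper does by analogy with its derivation of $R$ in Proposition \ref{prop:drawdown_preceding_drawup}. Your added observation that $S_{(x-a,x]}(k,x,x;\mathbf{1}_{\{\cdot=z\}})$ coincides with $P_{(x-a,x]}(k(\cdot,x),x;\mathbf{1}_{\{\cdot=z\}})$ for $z>x$ is a valid refinement not stated in the paper's proof, but it does not change the argument.
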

\begin{proof}
The proof of Proposition \ref{prop:generalized_occupation_drawdown} is provided in Appendix \ref{proof:generalized_occupation_drawdown}.
\end{proof}


Next, we develop an efficient algorithm to solve the linear systems \eqref{eq:linear_system_generalized_occupation_drawdown} and \eqref{eq:linear_system_generalized_occupation_drawdown_r} recursively;
see Algorithm \ref{alg:quantity-3}. 

\begin{algorithm}[htbp]
    \caption{Calculate $C(k,x)$}\label{alg:quantity-3}
    \begin{algorithmic}
        \State \textbf{Assumption} $\eta_x:=\{i:y_i=x\}$.
        \State \textbf{Notations} $i_a:=\min\{j:y_j>y_i-a\}$, $\bm{G}:=\{G(y_m,y_{l})\}_{m,{l}=0}^{N}$, 
        and $\text{Diag}(\bm{c})$ denotes the diagonal matrix with diagonal elements being $\bm{c}$.
        \State \textbf{Input} $\mathbb{S},~\bm{G},~f,~k,$ and $\eta_x$.
        \State \textbf{Initialize} $C(k,y_N)\gets 0$
        \For{$i=N-1$ to $\eta_{x}$}
        \State $\bm{G}_i\gets\{G(y_m,y_{l})\}_{m,{l}=i_a}^{i}$,~~~~$\bm{k}_i\gets\text{Diag}(\{k(y_m,y_i)\}_{m=i_a}^i)$,~~~~$\bm{e}_i=\text{Diag}(\{\mathbf{1}_{\{y_m=y_i\}}\}_{m=i_a}^{i})$,
        \State $\bm{G}_i^-\gets\{\{G(y_m,y_{l})\}_{m=i_a}^{i}\}_{{l}=0}^{i_a-1}$,~~~~$\bm{G}_i^+\gets\{\{G(y_m,y_{l})\}_{m=i_a}^{i}\}_{{l}=i+1}^{N}$,
        \State $\bm{f}_{i}\gets\{f(y_m)\}_{m=0}^{i_a-1}$,~~~~$\bm{C}_i^+\gets\{C(k,y_m)\}_{m=i+1}^N$,
        \State $C(k,y_i)\gets -\bm{e}_i(\bm{G}_i-\bm{k}_i)^{-1}(\bm{G}_i^-\bm{f}_i+\bm{G}_i^+\bm{C}_i^+)$.
        \EndFor
        \State \Return $C(k,y_{\eta_x})$
    \end{algorithmic}
\end{algorithm}

\begin{remark}\label{rmk:complexity-quantity-3}
    The complexity of Algorithm \ref{alg:quantity-3} for general Markov processes is $\mathcal{O}(N^3)$, which is the same as that of Algorithm \ref{alg:quantity-2}. However, the complexity can only be reduced to $\mathcal{O}(N^2)$ instead of $\mathcal{O}(N)$ for diffusion processes. This distinction arises because, unlike the formula \eqref{eq:linear-combination}, the coefficient $S_{(x-a,x]}(k,x,x;\mathbf{1}_{\{\cdot=x^+\}})$
    cannot be represented in terms of the functions $\psi^\pm(q, x)$ but needs to be calculated by solving a linear system involving both states $x$ and $y$ that incurs complexity $O(N^2)$. Once the coefficients $P_{(x-a, x]}(k(\cdot,x), x; f(\cdot) \mathbf{1}_{\{\cdot \le x -a\}})$ and $S_{(x-a, x]}(k, x, x;  \mathbf{1}_{\{ \cdot = x^+ \}})$ in the linear system of $C(k, x)$ have been computed, we can solve $C(k, x)$ recursively. The overall complexity of  the resulting Algorithm \ref{alg:quantity-3} for diffusion processes is $\mathcal{O}(N^2)$.
    
\end{remark}


\begin{remark}\label{complexity_Levy}
     When $k(x, y) = q \mathbf{1}_{\{y-x > \xi\}}$, $f(x)\equiv 1$, and the Markov process $X$ is a L\'evy process, we have $C(k,x)\equiv C(k,0)$, which leads to the following simple solution to the linear system \eqref{eq:linear_system_generalized_occupation_drawdown}:
    \begin{equation}\label{eq:quantity-3-Levy}
        C(k,x)=\frac{P_{(-a, 0]}(k(\cdot,0), 0; \mathbf{1}_{\{\cdot \le  -a\}})}{1-S_{(-a, 0]}(k, 0, 0; \mathbf{1}_{\{ \cdot > 0 \}})}.
    \end{equation}
    In this case, one avoids solving the linear systems \eqref{eq:linear_system_generalized_occupation_drawdown} recursively and hence the computational time can be reduced significantly. Denote by $n_a$ the number of grid points within the interval $(-a,0]$. Since the calculations of the quantities $P_{(-a, 0]}(k(\cdot,0), 0; \mathbf{1}_{\{\cdot \le  -a\}})$ and $S_{(-a, 0]}(k, 0, 0; \mathbf{1}_{\{ \cdot > 0 \}})$ both require $\mathcal{O}(n_a^3)$ operations, the complexity in the computation of $C(k,x)$ is $\mathcal{O}(n_a^3)$.
\end{remark}


\begin{remark}
   Our methodology is applicable to evaluate the following quantity associated with the generalized occupation time of the drawup process until the drawdown time:
   \begin{equation} \label{eq:generalized_occupation_drawup}
        \begin{aligned}
            E(k,x,y)=\mathbb{E}_{x,y}\big[ e^{-\int_0^{\tau_a} k(Y_s, \underline{Y}_s) \diff s}\mathbf{1}_{\{\tau_a<\infty\}} f(Y_{\tau_a}) \big],
        \end{aligned}
    \end{equation}
    which is an extension to the quantity considered in \cite{zhang2015occupation-drawdown}.
    Given that the quantity $E(q,x,y)$ also depends on the running minimum $\underline{Y}$, we define two auxiliary quantities:
    \begin{align}
        &U_{(\ell, r]}(k, x, y; g) := \mathbb{E} \big[ e^{-\int_{0}^{T_\ell^-}k(Y_s,\underline{Y}_s)\diff s} \mathbf{1}_{\{ T_\ell^-<T_r^+\}} g(Y_{T_\ell^-}, \underline{Y}_{T_\ell^-})\, \big| \, Y_0 = x,\, \underline{Y}_0 = y \big],\\
        &V_{(\ell, r]}(k, x, y; g) := \mathbb{E} \big[ e^{-\int_{0}^{T_r^+}k(Y_s,\underline{Y}_s)\diff s} \mathbf{1}_{\{ T_\ell^->T_r^+\}} g(Y_{T_r^+}, \underline{Y}_{T_r^+})\, \big| \, Y_0 = x,\, \underline{Y}_0 = y \big].
    \end{align}
   Similar to Proposition \ref{prop:generalized_occupation_drawdown}, we can derive a linear system of $E(k,x,y)$ involving $U_{(\ell, r]}(k, x, y; g)$ and $V_{(\ell, r]}(k, x, y; g)$ and build linear systems for $U_{(\ell, r]}(k, x, y; g)$ and $V_{(\ell, r]}(k, x, y; g)$.

    Also, our methodology can be extended to investigate the occupation time of the drawup process above a level $\xi$ until an independent exponential time \citep{zhang2015occupation-drawdown}, which is reflected in the following quantity:
    \begin{align}
        \label{eq:occupation_drawdown_exponential}
        F(q, x, y) = \mathbb{E}\big[ e^{-q\int_0^{e_p} 1_{\{Y_s-\underline{Y}_s > \xi \}}\diff s} f(Y_{e_p})\,|\, Y_0=x, \underline{Y}_0=y \big].
    \end{align}
    Here, $e_p$ is an exponential random variable with the mean $1/p$ ($p>0$). Similar to Proposition \ref{prop:generalized_occupation_drawdown}, we can build a linear system for $F(q,x,y)$.
\end{remark}

\subsection{The $n$-th Drawdown Time without Recovery}
\label{subsec:nth_drawdown_without_recovery}

Write $\overline{Y}_{s, t} = \sup_{s \le u \le t} Y_u$. The $n$-th drawdown time without recovery $\tilde{\tau}_{a,n}$ is defined as $\tilde{\tau}_{a,n} = \inf \{ t > \tilde{\tau}_{a,n-1}: \overline{Y}_{\tilde{\tau}_{a,n-1}, t} - Y_t \ge a \}$ for $n \ge 1$ with $\tilde{\tau}_{a,0} = 0$. Consider the quantity as follows:
\begin{align}
    \label{eq:nth_drawdown_without_recovery}
    H_n(q, x) = \mathbb{E}_x\big[ e^{-q\tilde{\tau}_{a,n}} f(Y_{\tilde{\tau}_{a,n}}) \big].
\end{align}

\begin{proposition}
    \label{prop:nth_drawdown_without_recovery}
    For $n \ge 1$, any $q \in \mathbb{C}$ with $\Re(q) > 0$, and $x\in \mathbb{S}$, the quantity $H_n(q, x)$ satisfies the following linear system:
\begin{equation}\label{eq:linear_system_nth_drawdown_without_recovery}
    H_n(q, x) =P_{(x-a, x]}(q, x; H_{n-1}(q,\cdot)\mathbf{1}_{\{\cdot \le x -a\}}) +\sum_{y\in\mathbb{S},\, y>x} P_{(x-a, x]}\big(q, x; \mathbf{1}_{\{\cdot = y \}} \big)H_n(q, y)
\end{equation}
with $H_0(q, x) = f(x)$.
\end{proposition}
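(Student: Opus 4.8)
The plan is to adapt the excursion-decomposition argument used for $Q_a$ in Lemma \ref{passage_duration} to the recursive family $\{\tilde\tau_{a,n}\}_{n\ge 0}$, the only genuinely new ingredient being the correct bookkeeping of the running maximum at the reset instants. First I would dispose of the base case: since $\tilde\tau_{a,0}=0$ and $\overline Y_0 = Y_0 = x$ under $\mathbb E_x$, we get $H_0(q,x)=\mathbb E_x[f(Y_0)]=f(x)$ immediately. For $n\ge 1$, I would condition on the first passage time $T:=T_{(x-a,x]}=\inf\{t>0:Y_t\notin(x-a,x]\}$ of $Y$ out of the interval $(x-a,x]$, started from $Y_0=x$ with $\overline Y_0 = x$. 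Because $Y_t\le x$ for $t\in[0,T)$ while $Y_0=x$, the running maximum stays equal to $x$ on $[0,T]$, so the drawdown process $\overline Y_t-Y_t$ coincides with $x-Y_t$ there; consequently the first drawdown time $\tilde\tau_{a,1}=\tau_a$ cannot occur strictly before $T$, and at $T$ exactly one of two mutually exclusive events occurs.

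In Case 1, $Y_T\le x-a$, so the drawdown attains level $a$ at $T$, whence $\tilde\tau_{a,1}=T$ and $Y_{\tilde\tau_{a,1}}=Y_T\le x-a$. After this first drawdown, the ``without recovery'' mechanism measures the next drawdown against the reference maximum $\overline Y_{\tilde\tau_{a,1},\,\cdot}$, which at the reset instant satisfies $\overline Y_{\tilde\tau_{a,1},\tilde\tau_{a,1}}=Y_{\tilde\tau_{a,1}}$. Thus, conditionally on $\mathcal F_{\tilde\tau_{a,1}}$, the shifted process $(Y_{\tilde\tau_{a,1}+s})_{s\ge0}$ with initial reference maximum $Y_{\tilde\tau_{a,1}}$ generates precisely the $(n-1)$-th drawdown time without recovery, so $\tilde\tau_{a,n}-\tilde\tau_{a,1}$ has, under the strong Markov property and time-homogeneity, the same law as $\tilde\tau_{a,n-1}$ started from $(Y_T,Y_T)$. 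Invoking the strong Markov property at $T$ and the definition of $P_{(x-a,x]}$ from \eqref{eq:laplace_transform_first_passage_time} yields the contribution $\mathbb E\big[e^{-qT}\mathbf{1}_{\{Y_T\le x-a\}}H_{n-1}(q,Y_T)\big]=P_{(x-a,x]}\big(q,x;H_{n-1}(q,\cdot)\mathbf{1}_{\{\cdot\le x-a\}}\big)$.

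In Case 2, $Y_T>x$, say $Y_T=y\in\mathbb S$ with $y>x$; the process attains a new running maximum $y$ without completing any drawdown, so no reset takes place and the index $n$ is unchanged. By the strong Markov property of $Y$ at $T$, the post-$T$ evolution started from state $y$ with reference maximum $y$ reproduces the same $n$-th drawdown quantity, contributing $\sum_{y\in\mathbb S,\,y>x}P_{(x-a,x]}(q,x;\mathbf{1}_{\{\cdot=y\}})H_n(q,y)$. Since the exit value $Y_T$ lands either in $\{\cdot\le x-a\}$ or in $\{\cdot>x\}$, summing the two cases recovers \eqref{eq:linear_system_nth_drawdown_without_recovery}.

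The main obstacle is the careful justification of Case 1, namely verifying that after the first drawdown the residual problem is genuinely the $(n-1)$-th drawdown without recovery started from $(Y_T,Y_T)$. This rests on the elementary but essential observation that $\overline Y_{\tilde\tau_{a,1},\tilde\tau_{a,1}}=Y_{\tilde\tau_{a,1}}$, so the reset reference maximum coincides with the current position; combined with time-homogeneity and the strong Markov property, this identifies the conditional law of $\tilde\tau_{a,n}-\tilde\tau_{a,1}$ with that of $\tilde\tau_{a,n-1}$ under $\mathbb E_{Y_T}$ and thereby produces the shift in index from $n$ to $n-1$. Everything else is a routine application of the strong Markov property, exactly as in the proof of Lemma \ref{passage_duration}.
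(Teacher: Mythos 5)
Your proof is correct, and it differs from the paper's in how the recursion is produced. The paper's proof is a two-step reduction: it conditions at the first drawdown time $\tilde\tau_{a,1}$ and uses the reset identity $\overline Y_{\tilde\tau_{a,1},\tilde\tau_{a,1}}=Y_{\tilde\tau_{a,1}}$ together with the strong Markov property to obtain the one-line identity $H_n(q,x)=\mathbb{E}_x\big[e^{-q\tilde\tau_{a,1}}H_{n-1}(q,Y_{\tilde\tau_{a,1}})\big]=Q_a\big(q,x;H_{n-1}(q,\cdot)\big)$, and then simply quotes the linear system of Lemma \ref{passage_duration} with payoff $H_{n-1}(q,\cdot)$, replacing $Q_a(q,y;H_{n-1}(q,\cdot))$ by $H_n(q,y)$ at the states $y>x$ via the same identity. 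You instead condition at the exit time $T_{(x-a,x]}$ and carry out the two-case analysis (exit below $x-a$, where $\tilde\tau_{a,1}=T$ and the index drops to $n-1$; exit above $x$, where no drawdown has completed and the index stays $n$) by hand — effectively inlining the proof of Lemma \ref{passage_duration} rather than citing it. The essential ingredient is identical in both arguments: the observation that at a drawdown epoch the reset reference maximum coincides with the current position, which is what converts the post-drawdown problem into the $(n-1)$-th problem started from $(Y_T,Y_T)$; your Case 2 also correctly uses that on $\{Y_T>x\}$ one has $\overline Y_{0,t}=\overline Y_{T,t}$ for $t\ge T$, so no index shift occurs. What the paper's route buys is brevity and reuse of established machinery (the sum over $y>x$ falls out of Lemma \ref{passage_duration} for free); what your route buys is self-containedness, at the cost of duplicating a decomposition the paper has already packaged.
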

\begin{proof}
See Appendix \ref{proof:nth_drawdown_without_recovery}.
\end{proof}

Next, we develop an efficient algorithm to solve the linear system \eqref{eq:linear_system_nth_drawdown_without_recovery} recursively from $n = 1$, formally described and analyzed in Algorithm \ref{alg:quantity-4}. 
\begin{algorithm}[htbp]
    \caption{Calculate $H_n(q,x)$}\label{alg:quantity-4}
    \begin{algorithmic}
        \State \textbf{Assumption} $\eta_x:=\{i:y_i=x\}$.
        \State \textbf{Notations} 
        $i_a:=\min\{j:y_j>y_i-a\}$, 
        $\bm{G}:=\{G(y_m,y_{l})\}_{m,{l}=0}^{N}$, $\bm{I}_m$ denotes an $m\times m$ identity matrix, 
        and $\text{Diag}(\bm{c})$ denotes the diagonal matrix with diagonal elements being $\bm{c}$.
        \State \textbf{Input} $\mathbb{S},~\bm{G},~f,~\eta_x,~n,$ and $q$.
        \State \textbf{Initialize} $H_0(q,\cdot)\gets f(\cdot)$
        \For{$k=1$ to $n$}
        \State $H_k(q,y_N)\gets 0$
        \For{$i=N-1$ to $0$}
        \State $\bm{G}_i\gets\{G(y_m,y_{l})\}_{m,{l}=i_a}^{i}$,~~~~$\bm{e}_i=\text{Diag}(\{\mathbf{1}_{\{y_m=y_i\}}\}_{m=i_a}^{i})$,
        \State $\bm{G}_i^-\gets\{\{G(y_m,y_{l})\}_{m=i_a}^{i}\}_{{l}=0}^{i_a-1}$,~~~~ $\bm{G}_i^+\gets\{\{G(y_m,y_{l})\}_{m=i_a}^{i}\}_{{l}=i+1}^{N}$,
        \State $\bm{H}_{k-1,i}^-=\{H_{k-1}(q,y_m)\}_{m=0}^{i_a-1}$,~~~~ $\bm{H}_{k,i}^+=\{H_k(q,y_m)\}_{m=i_a+1}^{N}$,
        \State $H_k(q,y_i)\gets -\bm{e}_i(\bm{G}_i-q\bm{I}_{i-i_a+1})^{-1}(\bm{G}_i^-\bm{H}_{k-1,i}^-+\bm{G}_i^+\bm{H}_{k,i}^+)$.
        \EndFor
        \EndFor
        \State \Return $H_n(q,y_{\eta_x})$
    \end{algorithmic}
\end{algorithm}

\begin{remark}\label{rmk:complexity-quantity-4}
   As revealed from Algorithm \ref{alg:quantity-4}, the complexity analysis for calculating $H_n(q,x)$ is analogous to that of $Q_a(q,x;f)$ except that the former involves an additional recursion with respect to (w.r.t.) $k$. The computations in inverse matrices are still of order $\mathcal{O}(N^3)$ since the inverse matrix $(\bm{G}_i-q\bm{I}_{i-i_a+1})^{-1}$ is independent of the recursion w.r.t. $k$, whereas the matrix multiplications cost $\mathcal{O}(nN^2)$ operations when $k$ increases from $1$ to $n$. Therefore, the overall complexity of Algorithm \ref{alg:quantity-4} for general Markov processes is $\mathcal{O}(\mathrm{max}(n,N)N^2)$. The complexity can be reduced to $\mathcal{O}(nN)$ for diffusion processes.
\end{remark}

    Consider an insurance policy that offers a protection against relative drawdowns. Assume the seller pays the buyer 1 at each relative drawdown time without recovery as long as it occurs before the maturity $T$; see \citet[Section 5]{landriault2015frequency-drawdown}. The price of the insurance policy can be defined as $\sum_{k=1}^{\infty}\mathbb{E}_x\big[e^{-r_f\tilde{\tau}_{a,k}}\mathbf{1}_{\{\tilde{\tau}_{a,k}<T\}}\big]$, whose Laplace transform w.r.t. $T$ is given by
    \begin{align}\label{LaplaceH}
    \int_{0}^{\infty}\sum_{k=1}^{\infty} e^{-qT}\mathbb{E}_x\big[e^{-r_f\tilde{\tau}_{a,k}}\mathbf{1}_{\{\tilde{\tau}_{a,k}<T\}}\big]\,\diff T=\frac{1}{q}\sum_{k=1}^{\infty}\mathbb{E}_x\big[ e^{-(q+r_f)\tilde{\tau}_{a,k}}\big]=\frac{H(q+r_f,x)}{q}.
    \end{align}
    Here, $r_f$ is the risk-free rate and $H(q,x):=\sum_{k=1}^{\infty}\mathbb{E}_x\big[ e^{-q\tilde{\tau}_{a,k}}\big]$.
    The pricing of the insurance product reduces to evaluating $H(q,x)$ as follows.
    
    \begin{corollary}\label{cor:nth_drawdown_without_recovery}
         For any $q \in \mathbb{C}$ with $\Re(q) > 0$ and $x\in \mathbb{S}$, the quantity $H(q, x)$ satisfies the following linear system:
\begin{align}\label{eq:linear_system_drawdown_without_recovery_insurance}
        &H(q,x)\\
       &=~\sum_{y\in\mathbb{S}, \,y\leq x-a}P_{(x-a,x]}(q,x;\mathbf{1}_{\{\cdot=y\}})(1+H(q,y))
        +\sum_{y\in\mathbb{S}, \,y>x}P_{(x-a,x]}(q,x;\mathbf{1}_{\{\cdot=y\}})H(q,y).
    \end{align}
    \end{corollary}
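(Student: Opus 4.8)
The plan is to express $H(q,x)$ through the renewal structure of the successive drawdown times and thereby reduce the statement to the characterization of $Q_a$ already established in Lemma~\ref{passage_duration}. Throughout, write $G_k(x) := \mathbb{E}_x[e^{-q\tilde{\tau}_{a,k}}]$, so that $H(q,x) = \sum_{k=1}^{\infty} G_k(x)$, and note first that $\tilde{\tau}_{a,1} = \tau_a$ because $\overline{Y}_{0,t} = \overline{Y}_t$ and the drawdown is $0$ at $t=0$. In particular $G_1(x) = \mathbb{E}_x[e^{-q\tau_a}\mathbf{1}_{\{\tau_a<\infty\}}] = Q_a(q,x;1)$.

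The first substantive step is to establish the one-step renewal identity
\[
G_k(x) = \mathbb{E}_x\big[ e^{-q\tau_a}\mathbf{1}_{\{\tau_a<\infty\}}\, G_{k-1}(Y_{\tau_a}) \big], \qquad k \ge 2.
\]
The defining recursion $\tilde{\tau}_{a,n} = \inf\{ t > \tilde{\tau}_{a,n-1}: \overline{Y}_{\tilde{\tau}_{a,n-1},t} - Y_t \ge a \}$ resets the running maximum at each drawdown time, so conditionally on $Y_{\tilde{\tau}_{a,1}} = y$ the increment $\tilde{\tau}_{a,k} - \tilde{\tau}_{a,1}$ has the law of $\tilde{\tau}_{a,k-1}$ started from $(Y_0,\overline{Y}_0) = (y,y)$. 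Applying the strong Markov property of the pair $(Y,\overline{Y})$ at the stopping time $\tilde{\tau}_{a,1} = \tau_a$ and factoring $e^{-q\tilde{\tau}_{a,k}} = e^{-q\tau_a}\,e^{-q(\tilde{\tau}_{a,k}-\tau_a)}$ yields the identity.

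Second, I would sum the renewal identity over $k \ge 2$ and add $G_1$. To justify interchanging summation and expectation, I establish a geometric bound: since $a>0$ and the drawdown starts at $0$, we have $\tau_a > 0$ almost surely, whence $\rho := \max_{y\in\mathbb{S}} \mathbb{E}_y[e^{-\Re(q)\tau_a}\mathbf{1}_{\{\tau_a<\infty\}}] < 1$, the maximum over the finite state space $\mathbb{S}$ of quantities each strictly below $1$ (with value $0$ at absorbing states where $\tau_a=\infty$). Iterating the renewal identity gives $|G_k(x)| \le \rho^k$, so the series converges absolutely and uniformly, and dominated convergence legitimizes
\[
H(q,x) = G_1(x) + \mathbb{E}_x\Big[ e^{-q\tau_a}\mathbf{1}_{\{\tau_a<\infty\}}\sum_{k=2}^{\infty} G_{k-1}(Y_{\tau_a}) \Big] = \mathbb{E}_x\big[ e^{-q\tau_a}\mathbf{1}_{\{\tau_a<\infty\}}\big(1 + H(q,Y_{\tau_a})\big)\big],
\]
whose right-hand side is exactly $Q_a(q,x;\,1+H(q,\cdot))$.

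Finally, I would invoke Lemma~\ref{passage_duration} with $f(\cdot) = 1 + H(q,\cdot)$ together with the linearity of $P_S(k,x;f)$ in $f$ noted after Lemma~\ref{passageCTMC}. Using $Q_a(q,y;1+H) = H(q,y)$ in the sum over $y>x$ and splitting the below-threshold term as $P_{(x-a,x]}(q,x;(1+H(\cdot))\mathbf{1}_{\{\cdot\le x-a\}}) = \sum_{y\le x-a} P_{(x-a,x]}(q,x;\mathbf{1}_{\{\cdot=y\}})(1+H(q,y))$ produces the stated linear system \eqref{eq:linear_system_drawdown_without_recovery_insurance}. The main obstacle is the renewal/strong-Markov decomposition of $\tilde{\tau}_{a,k}$ and the accompanying uniform bound $\rho<1$ that guarantees summability and the interchange of sum and expectation; once these are in hand, the reduction to Lemma~\ref{passage_duration} is purely algebraic.
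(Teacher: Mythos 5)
Your proof is correct and takes essentially the same route as the paper: the corollary there follows by applying the renewal decomposition at the first drawdown time (the content of Proposition \ref{prop:nth_drawdown_without_recovery} with $f\equiv 1$), summing over $n$, and using the linearity of $P_S$ in $f$ — exactly your renewal identity combined with Lemma \ref{passage_duration}. The only differences are cosmetic (you sum the renewal identities before invoking the one-step linear system rather than after), and your geometric bound $\rho<1$ makes explicit the summability and interchange of sum and expectation that the paper leaves implicit, which is a worthwhile addition.
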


\begin{remark}\label{rmk:complexity-quantity-4-insurance}
    Define $\bm{H}:=\{H(q,x)\}_{x\in\mathbb{S}}$, $\bm{P}:=\{P_{(x-a,x]}(q,x;\mathbf{1}_{\{\cdot=y\}})\mathbf{1}_{\{y\in (-\infty, x-a]\cup (x,\infty)\}}\}_{x,y\in\mathbb{S}}$, and $\widetilde{\bm{P}}:=\{\sum_{y\in\mathbb{S}, \,y\leq x-a}P_{(x-a,x]}(q,x;\mathbf{1}_{\{\cdot=y\}})\}_{x\in\mathbb{S}}$. The linear system \eqref{eq:linear_system_drawdown_without_recovery_insurance} can be written as the following matrix equation: 
    \[(\bm{I}_{N+1}-\bm{P})\bm{H}=\widetilde{\bm{P}},\]
    leading to the following solution:
    \[\bm{H}=(\bm{I}_{N+1}-\bm{P})^{-1}\widetilde{\bm{P}}.\]
   Here, $\bm{I}_{N+1}$ denote an $(N+1)\times(N+1)$ identity matrix.
   Notice that the computation of the matrix $\bm{P}$ requires $\mathcal{O}(N^3)$ operations; see \citet[Section 3.1]{zhang2023drawdown} for more details. Therefore, the overall complexity to solve the linear system \eqref{eq:linear_system_drawdown_without_recovery_insurance} is $\mathcal{O}(N^3)$. 
    If the Markov process $X$ is a L\'evy process, we have $H(q,x)\equiv H(q,0)$,  leading to the following simple solution to the linear system \eqref{eq:linear_system_drawdown_without_recovery_insurance}:
    \begin{equation}\label{eq:linear_system_drawdown_without_recovery_insurance_Levy}
        H(q,x)=\frac{P_{(-a,0]}(q,0;\mathbf{1}_{\{\cdot\leq -a\}})}{1-P_{(-a,0]}(q,0;\mathbf{1}_{\{\cdot\leq -a\}})-P_{(-a,0]}(q,0;\mathbf{1}_{\{\cdot>0\}})}. 
        \end{equation}
        Following the arguments in Remark \eqref{complexity_Levy}, the complexity in the computation of $H(q,x)$ is $\mathcal{O}(n_a^3)$.
\end{remark}

\subsection{The $n$-th Drawdown Time with Recovery}
\label{subsec:nth_drawdown_with_recovery}

The $n$-th drawdown time with recovery $\tau_{a,n}$ is defined as $\tau_{a,n} = \inf \{ t > \tau_{a,n-1}: \overline{Y}_t - Y_t \ge a,\, \overline{Y}_t \geq \overline{Y}_{\tau_{a,n-1}} \}$ for $n \ge 1$ with $\tau_{a,0} = 0$. Consider the following quantity:
\begin{align}
    \label{eq:nth_drawdown_with_recovery}
    J_n(q, x, y) = \mathbb{E}\big[ e^{-q\tau_{a,n}} f(Y_{\tau_{a,n}}, \overline{Y}_{\tau_{a,n}}) \,\big|\, Y_0 = x, \overline{Y}_0 = y \big].
\end{align}

\begin{proposition}
    \label{prop:nth_drawdown_with_recovery}
    For $n \ge 1$, any $q \in \mathbb{C}$ with $\Re(q) > 0$, and $x, y \in \mathbb{S}$ with $x \le y$, $J_n(q, x, y)$ satisfies the following linear system: 
    \begin{equation}\label{eq:linear_system_nth_drawdown_with_recovery}
        \begin{aligned}
            J_n(q, x, y) &= \sum_{z\in\mathbb{S}, \,z\geq y}P_{(-\infty, y)}(q, x; \mathbf{1}_{\{\cdot =z \}})J_n(q, z, z),\\
            J_n(q, y, y) 
            &=  P_{(y-a, y]}(q, y; J_{n-1}(q, \cdot, y) \mathbf{1}_{\{\cdot \le y-a \}})+  \sum_{z\in\mathbb{S},\,z>y}P_{(y-a, y]}(q, y;  \mathbf{1}_{\{\cdot =z \}})J_n(q, z, z).
        \end{aligned}
    \end{equation}
    The initial condition is $J_0(q, x, y) = f(x, y)$.
\end{proposition}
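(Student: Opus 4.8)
The plan is to prove Proposition \ref{prop:nth_drawdown_with_recovery} by conditioning on the CTMC's behavior during successive drawdown epochs, leveraging the strong Markov property and the structural relationship between drawdown times with recovery. The key observation is that, by definition, the recovery condition $\overline{Y}_t \ge \overline{Y}_{\tau_{a,n-1}}$ forces the running maximum to attain a fresh high before the $n$-th drawdown can be recorded. This naturally decomposes the dynamics into two phases that correspond exactly to the two equations in \eqref{eq:linear_system_nth_drawdown_with_recovery}.

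First I would establish the auxiliary equation for $J_n(q, x, y)$ with $x < y$ (the off-diagonal case). Starting from the state $(Y_0, \overline{Y}_0) = (x, y)$ with $x < y$, the process cannot immediately contribute to a drawdown-with-recovery because the running maximum is already at $y$ and must be refreshed. Hence the process must first climb back up to hit the level $y$ (without the maximum changing, since $Y < y$ throughout this excursion). I would invoke the strong Markov property at the first passage time $T_{(-\infty, y)} = \inf\{t: Y_t \ge y\}$: conditioning on the state $Y_{T_{(-\infty,y)}} = z \ge y$ at which $Y$ first reaches or overshoots $y$, the running maximum resets to $z$ and we restart from the diagonal state $(z,z)$. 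This yields the first identity via the first-passage quantity $P_{(-\infty, y)}(q, x; \mathbf{1}_{\{\cdot = z\}})$, with the discount factor $e^{-qT_{(-\infty,y)}}$ absorbed into $P$ by Lemma \ref{passageCTMC}; the key point is that $\tau_{a,n}$ is unaffected by this initial climbing phase in distribution once we restart from $(z,z)$.

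Next I would handle the diagonal case $J_n(q, y, y)$, where $Y_0 = \overline{Y}_0 = y$ so the process starts at a running maximum. Here I would condition on the first passage time of the drawdown process to level $a$, analogous to the derivation of Lemma \ref{passage_duration}. Starting from $(y,y)$, I track the excursion of $Y$ over the window $(y-a, y]$: either (i) the drawdown reaches $a$ before $Y$ climbs above $y$, meaning $Y$ exits below $y - a$, in which case the $n$-th drawdown-with-recovery time for the current excursion coincides with this first passage, and we collect the contribution $P_{(y-a,y]}(q, y; J_{n-1}(q, \cdot, y)\mathbf{1}_{\{\cdot \le y-a\}})$ (the weight $J_{n-1}$ enters because reaching this drawdown completes one more drawdown event, recursing on $n-1$); or (ii) $Y$ first climbs to some $z > y$, refreshing the running maximum to $z$, whereupon we restart from the diagonal state $(z,z)$ and continue tracking $J_n$. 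Summing over the exit location $z$ via $P_{(y-a,y]}(q, y; \mathbf{1}_{\{\cdot = z\}})$ gives the second identity.

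The main obstacle will be carefully justifying the recursive structure in $n$ and the interplay between the recovery constraint and the running-maximum refresh. Specifically, I must verify that upon the running maximum being refreshed to $z$ at a first passage, the counter $n$ is genuinely preserved (not decremented) because no new drawdown-with-recovery has yet occurred, whereas upon completing a full drawdown of depth $a$ the counter decrements to $n-1$ with the terminal payoff $J_{n-1}$ capturing all prior events. Establishing that the recovery condition $\overline{Y}_t \ge \overline{Y}_{\tau_{a,n-1}}$ is automatically enforced by always restarting from diagonal states $(z,z)$ requires a clean strong-Markov argument at each refresh, and one must confirm the decomposition is exhaustive and non-overlapping, so that the two terms in the diagonal equation partition all sample paths. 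The initial condition $J_0(q, x, y) = f(x, y)$ follows immediately since $\tau_{a,0} = 0$.
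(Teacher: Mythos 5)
Your proposal is correct and follows essentially the same route as the paper's proof: the strong Markov property (law of iterated expectations) applied at the recovery time $T_{(-\infty,y)}$ for the off-diagonal case, and a decomposition of the diagonal case $J_n(q,y,y)$ according to whether the exit from $(y-a,y]$ occurs below $y-a$ (completing a drawdown, so the counter drops to $n-1$ with terminal weight $J_{n-1}(q,\cdot,y)$) or above $y$ (refreshing the maximum, so the counter stays at $n$ and one restarts from $(z,z)$). The bookkeeping points you flag as obstacles — counter preserved on refresh, decremented on completion, maximum remaining $y$ after the drawdown exit — are exactly the observations the paper's proof verifies, so no genuine gap remains.
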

\begin{proof}
The proof of Proposition \ref{prop:nth_drawdown_with_recovery} is given in Appendix \ref{proof:nth_drawdown_with_recovery}.
\end{proof}

The linear system \eqref{eq:linear_system_nth_drawdown_with_recovery} can be solved recursively from $n = 1$ and our algorithm is summarized in Algorithm \ref{alg:quantity-5}. 


\begin{algorithm}[htbp]
    \caption{Calculate $J_n(q,x,y)$}\label{alg:quantity-5}
    \begin{algorithmic}
    \State \textbf{Assumptions} $\eta_x:=\{i:y_i=x\}$ and $\eta_y=\{i: y_i=y\}$.
        \State \textbf{Notations} 
        $i_a:=\min\{j:y_j>y_i-a\}$, $\bm{G}:=\{G(y_m,y_{l})\}_{m,{l}=0}^{N}$, $\bm{I}_m$ denotes an $m\times m$ identity matrix,
        and $\text{Diag}(\bm{c})$ denotes the diagonal matrix with diagonal elements being $\bm{c}$.
        \State \textbf{Input} $\mathbb{S},~\bm{G},~f,~\eta_x,~\eta_y,~n,$ and $q$.
        \State \textbf{Initialize} $J_0(q,\cdot,\tilde{\cdot})\gets f(\cdot,\tilde{\cdot})$
        \For{$k=1$ to $n$}
        \State $J_k(q,\cdot,y_N;f)\gets 0$
        \For{$i=N-1$ to $0$}
        \State $\bm{G}_i\gets\{G(y_m,y_{l})\}_{m,{l}=i_a}^{i}$,~~~~$\bm{e}_i=\text{Diag}(\{\mathbf{1}_{\{y_m=y_i\}}\}_{m=i_a}^{i})$,
        \State $\bm{G}_i^-\gets\{\{G(y_m,y_{l})\}_{m=i_a}^{i}\}_{{l}=0}^{i_a-1}$,~~~~ $\bm{G}_i^+\gets\{\{G(y_m,y_{l})\}_{m=i_a}^{i}\}_{{l}=i+1}^{N}$,
        \State $\widetilde{\bm{G}}_i\gets\{G(y_m,y_{l})\}_{m,{l}=0}^{i-1}$,~~~~ $\widetilde{\bm{G}}_i^+\gets\{\{G(y_m,y_{l})\}_{m=0}^{i-1}\}_{{l}=i}^{N}$,
        \State $\bm{J}_{k-1,i}^-\gets\{J_{k-1}(q,y_m,y_i)\}_{m=0}^{i_a-1}$,~~~~ $\bm{J}_{k,i}^+\gets\{J_k(q,y_m,y_m)\}_{m=i+1}^{N}$,
        \State $J_k(q,y_i,y_i)\gets -\bm{e}_i(\bm{G}_i-q\bm{I}_{i-i_a+1})^{-1}(\bm{G}_i^-\bm{J}_{k-1,i}^-+\bm{G}_i^+\bm{J}_{k,i}^+)$,
        \State $\bm{J}_{k,i-1}^+\gets\{J_k(q,y_m,y_m)\}_{m=i}^{N}$,
        \State$\{J_k(q,y_m,y_i)\}_{m=0}^{i-1}\gets-(\widetilde{\bm{G}}_i-q\bm{I}_{i})^{-1}\widetilde{\bm{G}}_{i}^+\bm{J}_{k,i-1}^+$.
        \EndFor
        \EndFor
        \State \Return $J_n(q,y_{\eta_x},y_{\eta_y})$
    \end{algorithmic}
\end{algorithm}

\begin{remark}\label{rmk:complexity-quantity-5}
    The complexity analysis of Algorithm \ref{alg:quantity-5} is the same as that of Algorithm \ref{alg:quantity-4} and hence is omitted. The complexity of Algorithm \ref{alg:quantity-5} for general Markov processes is $\mathcal{O}(\mathrm{max}(n,N)N^2)$ and can be reduced to $\mathcal{O}(nN)$ for diffusion processes.
\end{remark}

Consider an insurance product that the seller pays the buyer 1 at each relative drawdown time with recovery as long as it occurs before the maturity $T$. The price of the insurance product can be defined as $\sum_{k=1}^{\infty}\mathbb{E}\big[e^{-r_f\tau_{a,k}}\mathbf{1}_{\{\tau_{a,k}<T\}}\,\big|\, Y_0=x, \overline{Y}_0=y\big]$, whose Laplace transform w.r.t. $T$ is given by
\[\int_{0}^{\infty}e^{-qT}\sum_{k=1}^{\infty}\mathbb{E}\big[e^{-r_f\tau_{a,k}}\mathbf{1}_{\{\tau_{a,k}<T\}}\,\big|\, Y_0=x, \overline{Y}_0=y\big]\,\diff T=\frac{1}{q}\sum_{k=1}^{\infty}\mathbb{E}\big[ e^{-(q+r_f)\tau_{a,k}}\,\big|\, Y_0=x, \overline{Y}_0=y\big].\]
We define $J(q,x,y):=\sum_{k=1}^{\infty}\mathbb{E}\big[ e^{-q\tau_{a,k}}\,\big|\, Y_0 = x, \overline{Y}_0 = y\big]$. The pricing of the insurance product reduces to evaluating $J(q,x,y)$ as follows.
\begin{corollary}\label{cor:nth_drawdown_with_recovery}
    For any $q \in \mathbb{C}$ with $\Re(q) > 0$ and $x,y\in \mathbb{S}$, the quantity $J(q, x, y)$ satisfies the following linear system:
    \begin{equation}\label{eq:linear_system_drawdown_with_recovery_insurance}
        \begin{aligned}
            J(q,x,y)=&~\sum_{z\in\mathbb{S},\, z\geq y}P_{(-\infty,y)}(q,x;\mathbf{1}_{\{\cdot =z\}})J(q,z,z),\\
            J(q,y,y)=&~\sum_{z\in\mathbb{S},\, z\leq y-a}P_{(y-a,y]}(q,y;\mathbf{1}_{\{\cdot=z\}})(1+J(q,z,y))\\
            &~+\sum_{z\in\mathbb{S},\, z>y}P_{(y-a,y]}(q,y;\mathbf{1}_{\{\cdot=z\}})J(q,z,z).
        \end{aligned}
    \end{equation}
\end{corollary}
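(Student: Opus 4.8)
The plan is to obtain Corollary \ref{cor:nth_drawdown_with_recovery} by summing the recursion of Proposition \ref{prop:nth_drawdown_with_recovery} over $n$. First I would observe that, upon taking $f \equiv 1$ in the definition \eqref{eq:nth_drawdown_with_recovery}, one has $J_n(q, x, y) = \mathbb{E}\big[ e^{-q\tau_{a,n}} \,\big|\, Y_0 = x, \overline{Y}_0 = y \big]$, so that by the definition of $J$ we have $J(q, x, y) = \sum_{k=1}^{\infty} J_k(q, x, y)$ for this choice of $f$. The strategy is then to sum each of the two identities in \eqref{eq:linear_system_nth_drawdown_with_recovery} over $k = 1, 2, \dots$, interchange the summation over $k$ with the finite state-space sums and with the operator $P_S(q, x; \cdot)$, and identify the resulting series with $J$.

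For the first identity in \eqref{eq:linear_system_nth_drawdown_with_recovery}, the coefficients $P_{(-\infty, y)}(q, x; \mathbf{1}_{\{\cdot = z\}})$ do not depend on $k$, so summing over $k$ and exchanging the order of the two sums immediately yields
\[J(q, x, y) = \sum_{z\in\mathbb{S},\, z\geq y}P_{(-\infty,y)}(q,x;\mathbf{1}_{\{\cdot =z\}})J(q,z,z),\]
which is exactly the first equation of \eqref{eq:linear_system_drawdown_with_recovery_insurance}. For the second identity I would treat the two terms separately. The term $\sum_{z > y} P_{(y-a,y]}(q, y; \mathbf{1}_{\{\cdot = z\}}) J_k(q, z, z)$ is handled exactly as above, producing $\sum_{z > y} P_{(y-a,y]}(q, y; \mathbf{1}_{\{\cdot = z\}}) J(q, z, z)$. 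For the term $P_{(y-a, y]}(q, y; J_{k-1}(q, \cdot, y)\mathbf{1}_{\{\cdot \le y-a\}})$, I would use the linearity of $P_S(q, x; \cdot)$ in its payoff argument (noted below \eqref{eq:P_S_closed_form}) to pull the sum over $k$ inside, then shift the index to $j = k-1$ and invoke the initial condition $J_0(q, \cdot, y) = f(\cdot, y) \equiv 1$. This gives $\sum_{j=0}^{\infty} J_j(q, \cdot, y) = 1 + J(q, \cdot, y)$ inside the argument, i.e. $P_{(y-a, y]}(q, y; (1 + J(q, \cdot, y))\mathbf{1}_{\{\cdot \le y-a\}})$; expanding this over the finitely many grid points $z \le y-a$ via $P_{(y-a,y]}(q,y;\mathbf{1}_{\{\cdot=z\}})$ recovers the first sum of the second equation in \eqref{eq:linear_system_drawdown_with_recovery_insurance}.

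The main obstacle is to justify the interchange of the infinite sum over $k$ with the expectations, the finite state-space sums, and the operator $P_S$; this reduces to establishing absolute convergence of $\sum_{k} J_k(q, x, y)$ for $\Re(q) > 0$. I would argue this via the strong Markov property at the stopping times $\tau_{a,k}$: since each increment $\tau_{a,k} - \tau_{a,k-1}$ requires a fresh relative drawdown of size $a$ accompanied by a strictly higher running maximum, the conditional one-step transforms $\mathbb{E}\big[e^{-q(\tau_{a,k}-\tau_{a,k-1})} \,\big|\, \mathcal{F}_{\tau_{a,k-1}}\big]$ are bounded in modulus, uniformly over the finite state space, by some $\rho \in (0,1)$ (using that the holding times of the CTMC have Laplace transforms strictly below $1$ when $\Re(q)>0$). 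Iterated conditioning then gives $|J_k(q, x, y)| \le C\rho^{\,k}$, so the series converges geometrically. With absolute convergence established, all interchanges are justified by dominated convergence together with Fubini over the finite state-space sums, completing the derivation.
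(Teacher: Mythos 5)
Your proposal is correct and takes essentially the same route as the paper: the paper states Corollary \ref{cor:nth_drawdown_with_recovery} without an explicit proof, as an immediate consequence of Proposition \ref{prop:nth_drawdown_with_recovery} with $f\equiv 1$ summed over $n$, using the linearity of $P_S(q,x;\cdot)$ in the payoff --- exactly your derivation. Your geometric bound $|J_k(q,x,y)|\le C\rho^k$ (via the strong Markov property and the fact that each increment $\tau_{a,k}-\tau_{a,k-1}$ requires at least one exponential holding time of the CTMC) is a sound justification of the term-by-term summation that the paper leaves implicit.
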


\begin{remark}\label{rmk:complexity-quantity-5-insurance}
    Similar to Remark \ref{rmk:complexity-quantity-4-insurance}, the complexity of solving the linear system \eqref{eq:linear_system_drawdown_with_recovery_insurance} for general Markov processes is $\mathcal{O}(N^3)$.
   If the Markov process $X$ is a L\'evy process, we have $J(q,x,y)\equiv J(q,0,y-x)$,  which leads to the following simplified solution to the linear system \eqref{eq:linear_system_drawdown_with_recovery_insurance}:
    \begin{equation}\label{eq:linear_system_drawdown_with_recovery_insurance_Levy}
        \begin{aligned}
            J(q,x,y)&=P_{(-\infty,y-x)}(q,0;\mathbf{1}_{\{\cdot\geq y-x\}})J(q,0,0),\\
            J(q,0,0)&=\frac{P_{(-a,0]}(q,0;\mathbf{1}_{\{\cdot\leq -a\}})}{1-P_{(-a,0]}(q,0;\mathbf{1}_{\{\cdot>0\}})-P_{(-a,0]}(q,0;P_{(-\infty,0)}(q,\cdot;\mathbf{1}_{\{\tilde{\cdot}\geq 0\}})\mathbf{1}_{\{\cdot\leq -a\}})}.
        \end{aligned}
    \end{equation}
   Since a CTMC $Y$ with the state space $\mathbb{S} = h \mathbb{Z}$ is chosen to approximate the L\'evy process $X$, the quantity $P_{(-\infty,y-x)}(q,0;\mathbf{1}_{\{\cdot\geq y-x\}})$ satisfies an infinitely large linear system. 
   To resolve this difficulty, we select a large $A>a>0$ to approximate $P_{(-\infty,y-x)}(q,0;\mathbf{1}_{\{\cdot\geq y-x\}})$ by $P_{(-A,y-x)}(q,0;\mathbf{1}_{\{\cdot\geq y-x\}})$.  Similarly, let $n_A$ denote the number of grid points within the interval $(-A,0]$, then the complexity in the computation of the quantity $P_{(-A,y-x)}(q,0;\mathbf{1}_{\{\cdot\geq y-x\}})$ is $\mathcal{O}(n_A^3)$,
   which results in $\mathcal{O}(n_A^3)$ order for the overall complexity of solving the linear system \eqref{eq:linear_system_drawdown_with_recovery_insurance_Levy}.
    
If the Markov process $X$ is a diffusion process, the linear system \eqref{eq:linear_system_drawdown_with_recovery_insurance} can be simplified as follows:
    \begin{equation}\label{eq:quantity-5-diffusion-simplified}
        \begin{aligned}
            J(q,x,y)=&~P_{(-\infty,y)}(q,x;\mathbf{1}_{\{\cdot=y\}})J(q,y,y),\\
            J(q,y,y)=&~\frac{P_{(y-a,y]}(q,y;\mathbf{1}_{\{\cdot=y^+\}})J(q,y^+,y^+)+P_{(y-a,y]}(q,y;\mathbf{1}_{\{\cdot= (y-a)^\ominus\}})}{1-P_{(y-a,y]}(q,y;\mathbf{1}_{\{\cdot= (y-a)^\ominus\}})P_{(-\infty,y)}(q,(y-a)^\ominus;\mathbf{1}_{\{\cdot=y\}})}.
        \end{aligned}
    \end{equation}
   In this case, $P_{(-\infty,y)}(q,x;\mathbf{1}_{\{\cdot=y\}})$ can be approximated by $P_{[y_0,y)}(q,x;\mathbf{1}_{\{\cdot=y\}})$, where $y_0$ is a boundary point. Thus, by taking advantage of \eqref{eq:linear-combination}, the linear system \eqref{eq:quantity-5-diffusion-simplified} can be solved recursively with $\mathcal{O}(N)$ computational complexity. 
\end{remark}

\section{Convergence Analysis}
\label{sec:convergence_rate}
In this section, we establish the convergence of the CTMC approximation for five quantities analyzed in Section \ref{sec:drawdown_drawup_occupation}.
Let $\{Y_t^{(n)}\}_{n \in \mathbb{Z}^+}$ be a sequence of CTMCs to approximate $X_t$, where the superscript $(n)$ denotes the number of grid points and $\delta_n$ represents the corresponding mesh size. We assume $\delta_n\to 0$ as $n\to \infty$. Sufficient conditions for $Y^{(n)}\Rightarrow X$ in $D(\mathbb{R})$ are laid out in \cite{Mijatovic2013barrier}, where $\Rightarrow$ stands for the weak convergence of stochastic processes and $D(\mathbb{R})$ is the space of real-valued C$\grave{\text{a}}$dl$\grave{\text{a}}$g functions endowed with the Skorokhod topology. 

Denote by $\omega$ a path of the Markov process $X$ and introduce the following subsets of $D(\mathbb{R})$:
\begin{align*}
    U&:=\{\omega\in D(\mathbb{R}):\inf\{t\geq u:\overline{\omega}_{u,t}-\omega_t\geq a\}=\inf\{t\geq u:\overline{\omega}_{u,t}-\omega_t> a\} ~\text{for all}~u\geq 0\},\\
    V&:=\{\omega\in D(\mathbb{R}):\inf\{t\geq u:\overline{\omega}_{u,t}\geq  L\}=\inf\{t\geq u:\overline{\omega}_{u,t}>L\} ~\text{for all}~u\geq 0\},\\
    \widetilde{\mathcal{W}}&:=\{\omega\in D(\mathbb{R}): \text{for all}~t>s> 0,~\text{if}~\sup_{s\leq u\leq t}(\overline{\omega}_{s,u}-\omega_u)<a,~\text{there exists a}~\delta\in (0, s)~\text{such that} \\
    &\qquad \sup_{s-\delta\leq u\leq t}(\overline{\omega}_{s-\delta,u}-\omega_u)<a\},\\
    \mathcal {W}&:=\{\omega\in D(\mathbb{R}): \text{for all} ~t>u>s\geq 0,~\text{if}~s~\text{and}~u~\text{are consecutive drawdown times with }\\
    & \qquad \text{recovery and}~\overline{\omega}_{s,u}>\overline{\omega}_{u,t},~\text{there exists a}~\delta\in (0,u-s)~\text{such that}~u-\delta\in\mathbb{R}^+\setminus J(\omega)~\text{and}\\
    &\qquad \overline{\omega}_{s,u-\delta}>\overline{\omega}_{u-\delta,t}\}.
\end{align*}
Here, $a>0,~L\in\mathbb{R},~J(\omega):=\{t\geq 0: \omega_t\neq\omega_{t-}\},$ and $\overline{\omega}_{s,t}:=\sup_{s\leq u\leq t}\omega_u$.

We first impose the following assumptions on the paths of $X_t$.
\begin{assumption}[\cite{zhang2023drawdown}]\label{assump:path-x}
    (1) $X_t$ does not have monotone sample paths. (2) $X_t$ is quasi-left continuous, i.e., there exists a sequence of totally inaccessible stopping times that exhausts the jumps of $X_t$. (3) For all $x$, the L\'evy measure $\nu(x,\cdot)$ is either absolutely continuous or if it is discrete, $X_t$ must have a diffusion component, i.e., $\sigma(x)>0$ for all $x$ attainable in the state space. Furthermore, if the L\'evy measure $\nu(x, \cdot)$ has finite activity for any $x$, $X_t$ must have a diffusion component. 
\end{assumption}
Assumption \ref{assump:path-x} ensures that $\tau_a^{(n)}$ converges in distribution to $\tau_a$ as $n\to\infty$; see \citet[Section 4.1]{zhang2023drawdown} for more details. In the following analysis, the quantities bearing the superscript $(n)$ are defined under the CTMC $Y^{(n)}$. 

\begin{theorem}\label{thm:convergence}
    Suppose Assumption \ref{assump:path-x} holds and $Y^{(n)}\Rightarrow X$. 
    \begin{enumerate}[label={(\arabic*)}]
        \item $\Big(\tau_a^{(n)},\widehat{\tau}_b^{(n)},Y_{\tau_a^{(n)}}^{(n)}\Big)$ converges in distribution to $(\tau_a,\widehat{\tau}_b,X_{\tau_a})$ as $n\to \infty$.
        \item Suppose $k(\cdot)$ is a bounded and piecewise continuous univariate function, and there exists a subset $I\subseteq D(\mathbb{R})$ satisfying $P(X\in I)=1$ and for any $\omega\in I$, there exists a finite set $\textbf{T}_{dis}:=\{t_{i}\}_{i=1}^{m}$ such that $k(\omega_s)$ is continuous in $s\in(0,\tau_a)\setminus\textbf{T}_{dis}\subset\mathbb{R}^+\setminus J(\omega)$. Then $\Big(\tau_a^{(n)},\int_{0}^{\tau_a^{(n)}}k(Y_s^{(n)})\diff s,Y_{\tau_a^{(n)}}^{(n)}\Big)$ converges in distribution to $(\tau_a,\int_{0}^{\tau_a}k(X_s)\diff s,X_{\tau_a})$ as $n\to \infty$.
        \item Suppose $k(\cdot,\tilde{\cdot})$ is a bounded and piecewise continuous bivariate function,  and there exists a subset $I\subseteq D(\mathbb{R})$ satisfying $P(X\in I)=1$ and for any $\omega\in I$, there exists a finite set $\textbf{T}_{dis}:=\{t_{i}\}_{i=1}^{m}$ such that $k(\omega_s,\overline{\omega}_s)$ is continuous in $s\in(0,\tau_a)\setminus\textbf{T}_{dis}\subset\mathbb{R}^+\setminus J(\omega)$. Then $\Big(\tau_a^{(n)},\int_{0}^{\tau_a^{(n)}}k(Y_s^{(n)},\overline{Y}_s^{(n)})\diff s,Y_{\tau_a^{(n)}}^{(n)}\Big)$ converges in distribution to $(\tau_a,\int_{0}^{\tau_a}k(X_s,\overline{X}_s)\diff s,X_{\tau_a})$ as $n\to \infty$.
        \item  Suppose $\tilde{\tau}_{a,k}<\tilde{\tau}_{a,k+1}$ almost surely for $k=1,2,\dots$ and $P(X\in U\cap \widetilde{\mathcal W})=1$, then $\Big(\tilde{\tau}_{a,k}^{(n)},Y_{\tilde{\tau}_{a,k}^{(n)}}^{(n)}\Big)$ converges in distribution to $(\tilde{\tau}_{a,k},X_{\tilde{\tau}_{a,k}})$ as $n\to \infty$.
        \item Suppose $\tau_{a,k}<\tau_{a,k+1}$ almost surely for $k=1,2,\dots$ and $P(X\in U\cap V\cap {\mathcal W})=1$, then $\Big(\tau_{a,k}^{(n)},Y_{\tau_{a,k}^{(n)}}^{(n)}\Big)$ converges in distribution to $(\tau_{a,k},X_{\tau_{a,k}})$ as $n\to \infty$.
    \end{enumerate}
\end{theorem}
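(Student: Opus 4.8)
The common thread across all five parts is the continuous mapping theorem applied through a Skorokhod coupling. The plan is to first invoke the Skorokhod representation theorem to replace $Y^{(n)} \Rightarrow X$ by almost-sure convergence $Y^{(n)} \to X$ in the $J_1$ topology on a common probability space; because every target quantity is a measurable functional of the path, almost-sure convergence of these functionals on the coupling space transfers to convergence in distribution of the original objects, and — crucially — joint convergence of several coordinates comes for free once each coordinate converges almost surely along the single sequence $Y^{(n)} \to X$. Two elementary $J_1$ facts will be used repeatedly: if $\omega^{(n)} \to \omega$ in $J_1$ and $t_n \to t$ with $t$ a continuity point of $\omega$, then $\omega^{(n)}_{t_n} \to \omega_t$; and the running-supremum map (and, symmetrically, the running-infimum map) is $J_1$-continuous.

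For part (1), the convergence of the pair $\big(\tau_a^{(n)}, Y^{(n)}_{\tau_a^{(n)}}\big)$ to $(\tau_a, X_{\tau_a})$ is exactly the content of \citet[Section 4.1]{zhang2023drawdown} under Assumption \ref{assump:path-x}; applying the same argument to the drawup process $Y - \underline{Y}$ (equivalently, to $-Y$) gives $\widehat{\tau}_b^{(n)} \to \widehat{\tau}_b$ almost surely on the coupling space, and joint convergence of the whole triple then follows immediately. Parts (2) and (3) append the occupation-time coordinate. Here I would control $\int_0^{\tau_a^{(n)}} k(Y_s^{(n)})\,\diff s - \int_0^{\tau_a} k(X_s)\,\diff s$ by splitting it into the contribution of the moving upper limit, bounded via $|\tau_a^{(n)} - \tau_a| \to 0$ together with boundedness of $k$, and the contribution of the integrand on a fixed interval, handled by the bounded convergence theorem: $k(Y_s^{(n)}) \to k(X_s)$ for Lebesgue-almost-every $s$, since the exceptional times are confined to the jump set $J(X)$ (Lebesgue-null) and the finite set $\textbf{T}_{dis}$ supplied by the hypothesis. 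Part (3) is identical once $J_1$-continuity of the running supremum is used to pass $\overline{Y}_s^{(n)} \to \overline{X}_s$ into the bivariate $k$.

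Parts (4) and (5) carry the real weight and proceed by induction on the index $k$, with base case $\tilde{\tau}_{a,1} = \tau_{a,1} = \tau_a$ from part (1). For the drawdown times without recovery, the inductive step must show that, given $\tilde{\tau}_{a,k-1}^{(n)} \to \tilde{\tau}_{a,k-1}$ together with convergence of the state and of the running maximum accumulated since the previous epoch, the restarted first-passage time of $\overline{Y}_{\tilde{\tau}_{a,k-1},\cdot} - Y_\cdot$ to level $a$ again converges. The set $U$ furnishes the crossing regularity (first up-crossing equals first strict up-crossing) that makes each restarted passage time a continuous functional, while $\widetilde{\mathcal W}$ is precisely the stability condition guaranteeing that a slightly earlier restart epoch does not spuriously advance the drawdown event; this is what permits feeding the convergent — but not exactly equal — restart times $\tilde{\tau}_{a,k-1}^{(n)}$ into the functional without a discontinuity, and $\tilde{\tau}_{a,k} < \tilde{\tau}_{a,k+1}$ almost surely prevents consecutive times from coalescing. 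Part (5) runs the same induction for the recovery times $\tau_{a,k}$, the new ingredient being the recovery constraint $\overline{Y}_t \ge \overline{Y}_{\tau_{a,k-1}}$, which ties the event to the global running maximum and hence requires the additional sets $V$ (regularity of maximum-level crossings) and $\mathcal W$ (stability of the recovery epoch under a small leftward perturbation).

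The principal obstacle will be the inductive step of part (5). Unlike the occupation-time functionals of parts (2)–(3), whose integral form smooths away small path discrepancies, the recovery drawdown time is a first-passage time of a functional of the entire past running maximum, so its continuity is fragile: it can fail if the path touches the drawdown threshold tangentially, if a maximum-level crossing is non-transversal, or if a recovery epoch sits at an awkward jump. Showing that membership in $U \cap V \cap \mathcal W$ excludes exactly these pathologies, and that $J_1$ convergence on the coupling space together with this membership forces the full nested sequence of recovery times and terminal states to converge, is the technical crux; I would handle it by tracking, along the almost-surely convergent representation, how each crossing and each recovery of $Y^{(n)}$ shadows the corresponding feature of $X$, leaning on the definitions of $V$ and $\mathcal W$ to rule out tangential or jump-induced mismatches.
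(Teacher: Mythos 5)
Your proposal is correct and takes essentially the same route as the paper: the paper likewise reduces everything to path-level continuity of the relevant functionals under Skorokhod convergence combined with $Y^{(n)}\Rightarrow X$ (the continuous mapping theorem, which your Skorokhod-representation coupling merely repackages), cites \cite{zhang2023drawdown} with the reflection $\inf_{0\le s\le t}X_s=-\sup_{0\le s\le t}(-X_s)$ for part (1), uses the same two-term splitting (moving upper limit bounded by $U_k|\tau_a^{(n)}-\tau_a|$, integrand handled by dominated convergence off a null set of exceptional times) for parts (2)--(3), and proves parts (4)--(5) by induction with $U,\widetilde{\mathcal W}$ and $V,\mathcal W$ playing exactly the roles you assign them, via the decomposition $\tau_{a,2}=T_{M_1}+\inf\{t>T_{M_1}:\overline{\omega}_{T_{M_1},t}-\omega_t\ge a\}$ and matched limsup/liminf bounds. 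Your sketch leaves the two-sided limsup/liminf arguments of parts (4)--(5) at the level of a plan, but the plan is precisely what the paper executes.
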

\begin{proof}
The proof of Theorem \ref{thm:convergence} is deferred to Appendix \ref{proof_convergence}.
\end{proof}


\section{Numerical Experiments}
\label{sec:numerical_experiments}
In this section, we conduct a series of comprehensive numerical experiments to assess the accuracy and efficiency of our proposed five algorithms. Specifically, we price five types of financial derivatives and insurance products
under different Markov models. 
Utilizing the CTMC approximation, the Laplace transforms of these prices can be expressed through the quantities analyzed in Section \ref{sec:drawdown_drawup_occupation}. The prices themselves can be computed by performing Laplace inversions via the Abate-Whitt algorithm \citep{abate1992fourier}.

Let $S_t$ denote the asset price at time $t$ with the interest rate $r_f$ and  dividend yield $d$. We set $S_0=1$, $r_f=0.5$, and $d=0.02$. We consider five quantities associated with the relative drawdown/drawup of the asset price $S_t$, equivalently, the drawdown/drawup of the log-asset price $X_t=\ln S_t$, under the following models:
\begin{itemize}
    \item The Black-Scholes (BS) model \citep{black1973pricing}: 
    \[\diff S_t=(r_f-d)S_t\diff t+\sigma S_t\diff W_t,\]
    where $\{W\}_{t\geq 0}$ is a standard Brownian motion. We set $\sigma=0.3$.
    \item The constant elasticity of variance (CEV) model \citep{cox1996constant}: 
    \[\diff S_t=(r_f-d)S_t\diff t+\sigma S_t^{1+\beta}\diff W_t.\]
    We set $\sigma=0.3$ and $\beta=-0.25$.
    \item The double exponential jump-diffusion model (DEJD) \citep{kou2002jump}: 
    \[\frac{\diff S_t}{S_{t-}}=(r_f-d-\lambda\zeta)\diff t+\sigma\diff W_t+\diff \left(\sum_{i=1}^{N_t}(V_i-1)\right),\]
    where $\{N_t\}_{t\geq 0}$ is a Poisson process with the jump intensity $\lambda$, $\{V_i\}_{i\geq 1}$ is a sequence of i.i.d. random variables with the density of $\ln V_i$ given by $f_{\ln V}(y)=p^+\eta^+e^{-\eta^+ y}\mathbf{1}_{\{y\geq 0\}}+p^-\eta^-e^{\eta^- y}\mathbf{1}_{\{y<0\}}$, and $\zeta=\mathbb{E}[V_i]-1=\frac{p^+\eta^+}{\eta^+-1}+\frac{p^-\eta^-}{\eta^-+1}-1$. We set $\sigma=0.3,~p^+=p^-=0.5,~\eta^+=\eta^-=10,$ and $\lambda=3$.
    \item The variance Gamma (VG) model \citep{madan1998variance}:
    \[S_t=S_0 e^{(r_f-d)t+\theta\gamma_t(1;\nu)+\sigma W_{\gamma_t(1;\nu)}+\ln (1-\theta\nu-\sigma^2\nu/2)/\nu t},\]
    where $\{\gamma_t(1;\nu)\}_{t \geq 0}$ is a Gamma process with the mean 1 and variance $\nu$ that is independent of $\{W_t\}_{t \geq 0}$. We set $\theta=-2.206,~\sigma=0.962,$ and $\nu=0.00254$.
\end{itemize}

Tables \ref{tab:quantity-1}-\ref{tab:quantity-5} present the prices of five types of financial derivatives and insurance products obtained from our five algorithms under the aforementioned four models. To avoid convergence oscillations, we first construct a uniform grid with the step size $h$ over the interval $(x-a,x]$, ensuring that the endpoints $x-a$ and $x$ are grid points. This grid is then extended beyond the interval; further details can be found in \citet[Remark 4.3]{zhang2023drawdown}. 
Denote by $N_x$ the number of grid points within the interval $(x-a,x]$.
The first column lists the values of $N_x$, which serves as a key complexity indicator for our algorithms as revealed from Algorithms \ref{alg:quantity-1}-\ref{alg:quantity-5}. 
The second column provides the benchmarks, obtained by choosing sufficiently wide grid (e.g., $y_0=-4$ and $y_N=4$) and sufficiently large $N_x$ (e.g., $N_x=2^{11}$) for the extrapolation until the first four decimal places of the extrapolated price do not change.
The third to fifth columns display the prices obtained from our algorithms, along with the absolute and relative errors. Moreover, an extrapolation method is adopted to accelerate the convergence of our algorithms, and the resulting prices and the relative errors are shown in the sixth and seventh columns, respectively. Finally, the CPU times are reported in the last column.

Table \ref{tab:quantity-1} shows the numerical results for calculating the probability $P_{x,y}(\tau_a<\widehat{\tau}_b\wedge T):=P(\tau_a<\widehat{\tau}_b\wedge T\,|\, X_0 = x, \,\overline{X}_0 = x, \,\underline{X}_0 = y)$. 
Since this probability relates to the quantity $A(q,x,y)$ with $f\equiv1$ in Section \ref{subsec:drawdown_preceding_drawup} as follows:
$$\int_{0}^{\infty}e^{-qT}P_{x,y}(\tau_a<\widehat{\tau}_b\wedge T)\diff T= \mathbb{E}_{x, y}\big[ e^{-q\tau_a} \mathbf{1}_{\{\tau_a < \widehat{\tau}_b\}} \big]/q=A(q,x,y)/q,$$
it is obtained by performing the Laplace inversion on $A(q,x,y)/q$ w.r.t. $q$ via the Abate-Whitt algorithm. The application of the extrapolation method achieves high accuracy
with the relative error below $1\%$ in less than 30 seconds. The computational times for diffusion models are significantly smaller than those for jump models, which is consistent with the complexity analysis depicted in Remark \ref{rmk:complexity-quantity-1}.  
\renewcommand{\arraystretch}{1.5}
\begin{table}[htbp]
    \caption{Computation results for the probability $P_{x,y}(\tau_a<\widehat{\tau}_b\wedge T)$ under four Markov models. We set $a=0.2$, $b=0.3$, $T=0.5$, $x=\ln 1$, and $y=\ln 0.9$.}\label{tab:quantity-1}
    \vspace{-0.1in}
    \begin{center}
        \begin{tabular}{cccccccc}
             \hline
             $N_x$ & Benchmark & CTMC & Abs. err. & Rel. err. & Extra. & Rel. err.  & Time (sec.)\\ \hline
             \multicolumn{8}{c}{BS model}\\ \hline
             20 & 0.56773 & 0.55212  & 0.01561 & 2.75\% &    &    &\\
             40 & 0.56773 & 0.56000  & 0.00773 & 1.36\% & 0.56789 & 0.03\% &\\
             80 & 0.56773 & 0.56387  & 0.00386 & 0.68\% & 0.56774 & 0.00\% &\\
             160 & 0.56773 & 0.56580  & 0.00193 & 0.34\% & 0.56773 & 0.00\% & 1.63\\ \hline
             \multicolumn{8}{c}{CEV model}\\ \hline
             20 & 0.56690 & 0.55152  & 0.01538 & 2.71\% &    &    &\\
             40 & 0.56690 & 0.55929  & 0.00761 & 1.34\% & 0.56705 & 0.03\% &\\
             80 & 0.56690 & 0.56310  & 0.00380 & 0.67\% & 0.56691 & 0.00\% &\\
             160 & 0.56690 & 0.56500  & 0.00190 & 0.34\% & 0.56690 & 0.00\% & 2.42\\ \hline
             \multicolumn{8}{c}{DEJD model}\\ \hline
             20 & 0.62803 & 0.62049  & 0.00754 & 1.20\% &    &    &\\
             40 & 0.62803 & 0.62428  & 0.00375 & 0.60\% & 0.62806 & 0.00\% &\\
             80 & 0.62803 & 0.62614  & 0.00189 & 0.30\% & 0.62800 & 0.00\% &\\
             160 & 0.62803 & 0.62708  & 0.00095 & 0.15\% & 0.62802 & 0.00\% & 28.95\\ \hline
             \multicolumn{8}{c}{VG model}\\ \hline
             20 & 0.62000 & 0.61624  & 0.00376 & 0.61\% &    &    &\\
             40 & 0.62000 & 0.61867  & 0.00133 & 0.21\% & 0.62110 & 0.18\% &\\
             80 & 0.62000 & 0.61948  & 0.00052 & 0.08\% & 0.62030 & 0.05\% &\\
             160 & 0.62000 & 0.61978  & 0.00022 & 0.04\% & 0.62008 & 0.01\% & 27.36\\ \hline
        \end{tabular}
    \end{center}
\end{table}

Table \ref{tab:quantity-2} displays the results of the digital option with the price
$\mathbb{E}_x\big[e^{-r_f\tau_a}\mathbf{1}_{\{\int_0^{\tau_a}\mathbf{1}_{\{X_s<\xi\}}\diff s <T\}}\big]$. The option holder receives $1$ as long as the occupation time of the log-asset price process below the level $\xi$ until the drawdown time is less than $T$.
The price relates to the quantity $B(k,x)$ with $f\equiv1$ in Section \ref{subsec:occupation_below_level} as follows:
\[\int_0^\infty e^{-qT}\mathbb{E}_x\left[e^{-r_f\tau_a}\mathbf{1}_{\{\int_0^{\tau_a}\mathbf{1}_{\{X_s<\xi\}}\diff s <T\}}\right]\diff T=B(q\mathbf{1}_{\{\cdot<\xi\}}+r_f,x)/q.\]
Similarly, we can obtain the price by performing the Laplace inversion via the Abate-Whitt algorithm. The numerical performances revealed from Table \ref{tab:quantity-2} align with those previously observed from Table \ref{tab:quantity-1}.
\begin{table}[htbp]
    \caption{Computation results for the digital option price $\mathbb{E}_x\big[e^{-r_f\tau_a}\mathbf{1}_{\{\int_0^{\tau_a}\mathbf{1}_{\{X_s<\xi\}}\diff s <T\}}\big]$ under four Markov models. We set $a=0.2,~\xi=0.1,~T=0.5$, and $x=\ln 1$ for the BS, CEV, and DEJD models, and $a=0.5,~\xi=0.2,~T=0.5$ and $x=\ln 1$ for the VG model.}\label{tab:quantity-2}
    \vspace{-0.1in}
    \begin{center}
        \begin{tabular}{cccccccc}
             \hline
             $N_x$ & Benchmark & CTMC & Abs. err. & Rel. err. & Extra. & Rel. err.  & Time (sec.)\\ \hline
             \multicolumn{8}{c}{BS model}\\ \hline
             20 & 0.90338 & 0.89573  & 0.00765 & 0.85\% &    &    &\\
             40 & 0.90338 & 0.89959  & 0.00379 & 0.42\% & 0.90345 & 0.01\% &\\
             80 & 0.90338 & 0.90150  & 0.00188 & 0.21\% & 0.90340 & 0.00\% &\\
             160 & 0.90338 & 0.90244  & 0.00094 & 0.10\% & 0.90339 & 0.00\% & 6.20\\ \hline
             \multicolumn{8}{c}{CEV model}\\ \hline
             20 & 0.90219 & 0.89495  & 0.00724 & 0.80\% &    &    \\
             40 & 0.90219 & 0.89861  & 0.00358 & 0.40\% & 0.90226 & 0.01\% &\\
             80 & 0.90219 & 0.90041  & 0.00178 & 0.20\% & 0.90221 & 0.00\% &\\
             160 & 0.90219 & 0.90130  & 0.00089 & 0.10\% & 0.90220 & 0.00\% & 5.73\\ \hline
             \multicolumn{8}{c}{DEJD model}\\ \hline
             20 & 0.94212 & 0.93743  & 0.00469 & 0.50\% &    &    &\\
             40 & 0.94212 & 0.93978  & 0.00234 & 0.25\% & 0.94214 & 0.00\% &\\
             80 & 0.94212 & 0.94095  & 0.00117 & 0.12\% & 0.94212 & 0.00\% &\\
             160 & 0.94212 & 0.94154  & 0.00058 & 0.06\% & 0.94212 & 0.00\% & 20.13\\ \hline
             \multicolumn{8}{c}{VG model}\\ \hline
             20 & 0.97581 & 0.98441  & 0.00860 & 0.88\% &    &    &\\
             40 & 0.97581 & 0.98056  & 0.00475 & 0.49\% & 0.97671 & 0.09\% &\\
             80 & 0.97581 & 0.97814  & 0.00233 & 0.24\% & 0.97572 & 0.01\% &\\
             160 & 0.97581 & 0.97693  & 0.00112 & 0.11\% & 0.97571 & 0.01\% & 18.53\\ \hline
        \end{tabular}
    \end{center}
\end{table}

In Table \ref{tab:quantity-3}, we consider the digital option with the price $\mathbb{E}_x\big[e^{-r_f\tau_a}\mathbf{1}_{\{\int_0^{\tau_a}\mathbf{1}_{\{\overline{X}_s-X_s>\xi\}}\diff s <T\}}\big]$. The option holder receives $1$ as long as the occupation time of the relative drawdown process above the level $\xi$ until the drawdown time is less than $T$. 
The price of this kind of digital option relates to the quantity $C(k,x)$ with $f\equiv1$ in Section \ref{subsec:occupation_drawdown_above_level} as follows:
\[\int_0^\infty e^{-qT}\mathbb{E}_x\left[e^{-r_f\tau_a}\mathbf{1}_{\{\int_0^{\tau_a}\mathbf{1}_{\{\overline{X}_s-X_s>\xi\}}\diff s <T\}}\right]\diff T=C(q\mathbf{1}_{\{\tilde{\cdot}-\cdot> \xi\}}+r_f,x)/q.\]
The computational time for the L\'evy model can be significantly reduced compared to that of the CEV model, due to the simple solution given in the formula \eqref{eq:quantity-3-Levy}.
\begin{table}[htbp]
    \caption{Computation results for the digital option price $\mathbb{E}_x\big[e^{-r_f\tau_a}\mathbf{1}_{\{\int_0^{\tau_a}\mathbf{1}_{\{\overline{X}_s-X_s>\xi\}}\diff s <T\}}\big]$ under four Markov models. We set $a=0.2,~\xi=0.1,~T=0.1$, and $x=\ln 1$ for the BS, CEV, and DEJD models, and $a=0.5,~\xi=0.2,~T=0.1$, and $x=\ln 1$ for the VG model.}\label{tab:quantity-3}
    \vspace{-0.1in}
    \begin{center}
        \begin{tabular}{cccccccc}
             \hline
             $N_x$ & Benchmark & CTMC & Abs. err. & Rel. err. & Extra. & Rel. err.  & Time (sec.)\\ \hline
             \multicolumn{8}{c}{BS model}\\ \hline
             20 & 0.57770 & 0.62424  & 0.04654 & 8.06\% &  &    &\\
             40 & 0.57770 & 0.60057  & 0.02287 & 3.96\% & 0.57690 & 0.14\% &\\
             80 & 0.57770 & 0.58902  & 0.01132 & 1.96\% & 0.57748 & 0.04\% &\\
             160 & 0.57770 & 0.58333  & 0.00563 & 0.97\% & 0.57764 & 0.01\% & 0.01\\ \hline
             \multicolumn{8}{c}{CEV model}\\ \hline
             20 & 0.57258 & 0.61826  & 0.04568 & 7.98\% &  &    &\\
             40 & 0.57258 & 0.59501  & 0.02243 & 3.92\% & 0.57176 & 0.14\% &\\
             80 & 0.57258 & 0.58368  & 0.01110 & 1.94\% & 0.57236 & 0.04\% &\\
             160 & 0.57258 & 0.57810  & 0.00552 & 0.96\% & 0.57252 & 0.01\% & 8.17\\ \hline
             \multicolumn{8}{c}{DEJD model}\\ \hline
             20 & 0.67087 & 0.71347  & 0.04260 & 6.35\% &  &    &\\
             40 & 0.67087 & 0.69203  & 0.02116 & 3.15\% & 0.67059 & 0.04\% &\\
             80 & 0.67087 & 0.68140  & 0.01053 & 1.57\% & 0.67078 & 0.01\% &\\
             160 & 0.67087 & 0.67612  & 0.00525 & 0.78\% & 0.67085 & 0.00\% & 0.03\\ \hline
             \multicolumn{8}{c}{VG model}\\ \hline
             80 & 0.63236 & 0.65759  & 0.02523 & 3.99\% &  &  &\\
             160 & 0.63236 & 0.64447  & 0.01211 & 1.92\% & 0.63135 & 0.16\% & \\ 
             320 & 0.63236 & 0.63826  & 0.00590 & 0.93\% & 0.63205 & 0.05\% &\\
             640 & 0.63236 & 0.63527  & 0.00291 & 0.46\% & 0.63228 & 0.01\% & 0.36\\ \hline
        \end{tabular}
    \end{center}
\end{table}

Table \ref{tab:quantity-4} presents the numerical results for $\sum_{k=1}^{\infty}\mathbb{E}_x\big[e^{-r_f\tilde{\tau}_{a,k}}\mathbf{1}_{\{\tilde{\tau}_{a,k}<T\}}\big]$, which are the prices of the insurance product associated with the frequency of relative drawdowns without recovery.
The price relates to the quantity $H(q,x)$ in Section \ref{subsec:nth_drawdown_without_recovery} as follows:
\[\int_0^\infty e^{-qT}\sum_{k=1}^{\infty}\mathbb{E}_x\left[e^{-r_f\tilde{\tau}_{a,k}}\mathbf{1}_{\{\tilde{\tau}_{a,k}<T\}}\right]\diff T=H(q+r_f,x)/q.\]
We obtain remarkably accurate prices in several seconds in all cases. 
Notably, for L\'evy models, the prices can be computed by the formula \eqref{eq:linear_system_drawdown_without_recovery_insurance_Levy} within one second.
\begin{table}[htbp]
    \caption{Computation results for the insurance product price $\sum_{k=1}^{\infty}\mathbb{E}_x\big[e^{-r_f\tilde{\tau}_{a,k}}\mathbf{1}_{\{\tilde{\tau}_{a,k}<T\}}\big]$ under four Markov models. We set $a=-\ln(1-0.25),~T=1$, and $x=\ln 1$ for the BS, CEV, and DEJD models, and $a=-\ln(1-0.5),~T=1$, and $x=\ln 1$ for the VG model.}\label{tab:quantity-4}
    \vspace{-0.1in}
    \begin{center}
        \begin{tabular}{cccccccc}
             \hline
             $N_x$ & Benchmark & CTMC & Abs. err. & Rel. err. & Extra. & Rel. err.  & Time (sec.)\\ \hline
             \multicolumn{8}{c}{BS model}\\ \hline
             20 & 0.92475 & 0.87418  & 0.05057 & 5.47\% &    &    &\\
             40 & 0.92475 & 0.89864  & 0.02611 & 2.82\% & 0.92311 & 0.18\% &\\
             80 & 0.92475 & 0.91148  & 0.01327 & 1.43\% & 0.92433 & 0.05\% &\\
             160 & 0.92475 & 0.91807  & 0.00668 & 0.72\% & 0.92465 & 0.01\% & 0.01\\ \hline
             \multicolumn{8}{c}{CEV model}\\ \hline
             20 & 0.94398 & 0.89302  & 0.05096 & 5.40\% &    &    &\\
             40 & 0.94398 & 0.91768  & 0.02630 & 2.79\% & 0.94234 & 0.17\% &\\
             80 & 0.94398 & 0.93062  & 0.01336 & 1.42\% & 0.94356 & 0.04\% &\\
             160 & 0.94398 & 0.93725  & 0.00673 & 0.71\% & 0.94387 & 0.01\% & 1.30\\ \hline
             \multicolumn{8}{c}{DEJD model}\\ \hline
             20 & 1.27996 & 1.22209 & 0.05787 & 4.52\% &    & &\\
             40 & 1.27996 & 1.24972  & 0.03024 & 2.36\% & 1.27735 & 0.20\% &\\
             80 & 1.27996 & 1.26450  & 0.01546 & 1.21\% & 1.27929 & 0.05\% &\\
             160 & 1.27996 & 1.27215  & 0.00781 & 0.61\% & 1.27979 & 0.01\% & 0.03\\ \hline
             \multicolumn{8}{c}{VG model}\\ \hline
             80 & 1.91007 & 1.98464  & 0.07457 & 3.90\% &  &  &\\
             160 & 1.91007 & 1.94389  & 0.03382 & 1.77\% & 1.90314 & 0.36\% & \\ 
             320 & 1.91007 & 1.92584  & 0.01577 & 0.83\% & 1.90778 & 0.12\% & \\ 
             640 & 1.91007 & 1.91763  & 0.00756 & 0.40\% & 1.90943 & 0.03\% & 0.26\\ \hline
        \end{tabular}
    \end{center}
\end{table}

Table \ref{tab:quantity-5} presents the computation results for the prices of the insurance product associated with the frequency of relative drawdowns with recovery.
The price relates to the quantity $J(q,x,y)$ in Section \ref{subsec:nth_drawdown_with_recovery} as follows:
\[\int_0^\infty e^{-qT}\sum_{k=1}^{\infty}\mathbb{E}\left[e^{-r_f\tau_{a,k}}\mathbf{1}_{\{\tau_{a,k}<T\}}\,\Big|\, X_0 = x, \overline{X}_0 = y\right]\diff T=J(q+r_f,x,y)/q.\]
Sufficiently accurate prices are obtained in several seconds. 
Compared to Table \ref{tab:quantity-4}, the efficiency for the CEV model can be improved, due to the simple solution provided by the formula \eqref{eq:quantity-5-diffusion-simplified}.

\begin{table}[htbp]
    \caption{Computation results for the insurance product price $\sum_{k=1}^{\infty}\mathbb{E}\big[e^{-r_f\tau_{a,k}}\mathbf{1}_{\{\tau_{a,k}<T\}}\,\big|\, X_0 = x, \overline{X}_0 = y\big]$ under four Markov models. We set $y=\ln 1$ and the values of other parameters are the same as those in Table \ref{tab:quantity-4}.}
    \label{tab:quantity-5}
    \vspace{-0.1in}
    \begin{center}
        \begin{tabular}{cccccccc}
             \hline
             $N_x$ & Benchmark & CTMC & Abs. err. & Rel. err. & Extra. & Rel. err.  & Time (sec.)\\ \hline
             \multicolumn{8}{c}{BS model}\\ \hline
             20 & 0.68014 & 0.65679  & 0.02335 & 3.43\% &  &    &\\
             40 & 0.68014 & 0.66831  & 0.01183 & 1.74\% & 0.67983 & 0.05\% &\\
             80 & 0.68014 & 0.67419  & 0.00595 & 0.87\% & 0.68006 & 0.01\% &\\
             160 & 0.68014 & 0.67715  & 0.00299 & 0.44\% & 0.68012 & 0.00\% & 0.14\\ \hline
             \multicolumn{8}{c}{CEV model}\\ \hline
             20 & 0.65915 & 0.63702  & 0.02213 & 3.36\% &  &    &\\
             40 & 0.65915 & 0.64793  & 0.01122 & 1.70\% & 0.65883 & 0.05\% &\\
             80 & 0.65915 & 0.65350  & 0.00565 & 0.86\% & 0.65907 & 0.01\% &\\
             160 & 0.65915 & 0.65631  & 0.00284 & 0.43\% & 0.65913 & 0.00\% & 0.33\\ \hline
             \multicolumn{8}{c}{DEJD model}\\ \hline
             20 & 0.82254 & 0.80268  & 0.01986 & 2.41\% &  &    &\\
             40 & 0.82254 & 0.81234  & 0.1020 & 1.24\% & 0.82201 & 0.06\% &\\
             80 & 0.82254 & 0.81738  & 0.00516 & 0.63\% & 0.82241 & 0.02\% &\\
             160 & 0.82254 & 0.81994  & 0.00260 & 0.32\% & 0.82251 & 0.00\% & 0.47\\ \hline
             \multicolumn{8}{c}{VG model}\\ \hline
             80 & 0.98319 & 1.00339  & 0.02020 & 2.05\% &  &  &\\
             160 & 0.98319 & 0.99222  & 0.00903 & 0.92\% & 0.98104 & 0.22\% & \\ 
             320 & 0.98319 & 0.98737  & 0.00418 & 0.43\% & 0.98252 & 0.07\% & \\ 
             640 & 0.98319 & 0.98519  & 0.00200 & 0.20\% & 0.98300 & 0.02\% & 5.43\\ \hline
    \end{tabular}
    \end{center}
\end{table}

Figures \ref{fig:Q1}-\ref{fig:Q5} show the convergence behavior of our proposed algorithms when pricing five types of financial derivatives and insurance products, each corresponding to a quantity analyzed in Section \ref{sec:drawdown_drawup_occupation}, under different Markov models. These figures plot the trend of the logarithm of the absolute error (i.e., $\log_{10}(\epsilon)$) against the logarithm of the number of grid points (i.e., $\log_{10}(N_{x})$). The algorithms exhibit approximately first-order convergence w.r.t. the number of grid points, and the application of extrapolation techniques yields an enhancement in convergence rates.

\begin{figure}[htbp]
    \centering
    \includegraphics[trim=2cm 1cm 2cm 0cm, clip, width=\linewidth]{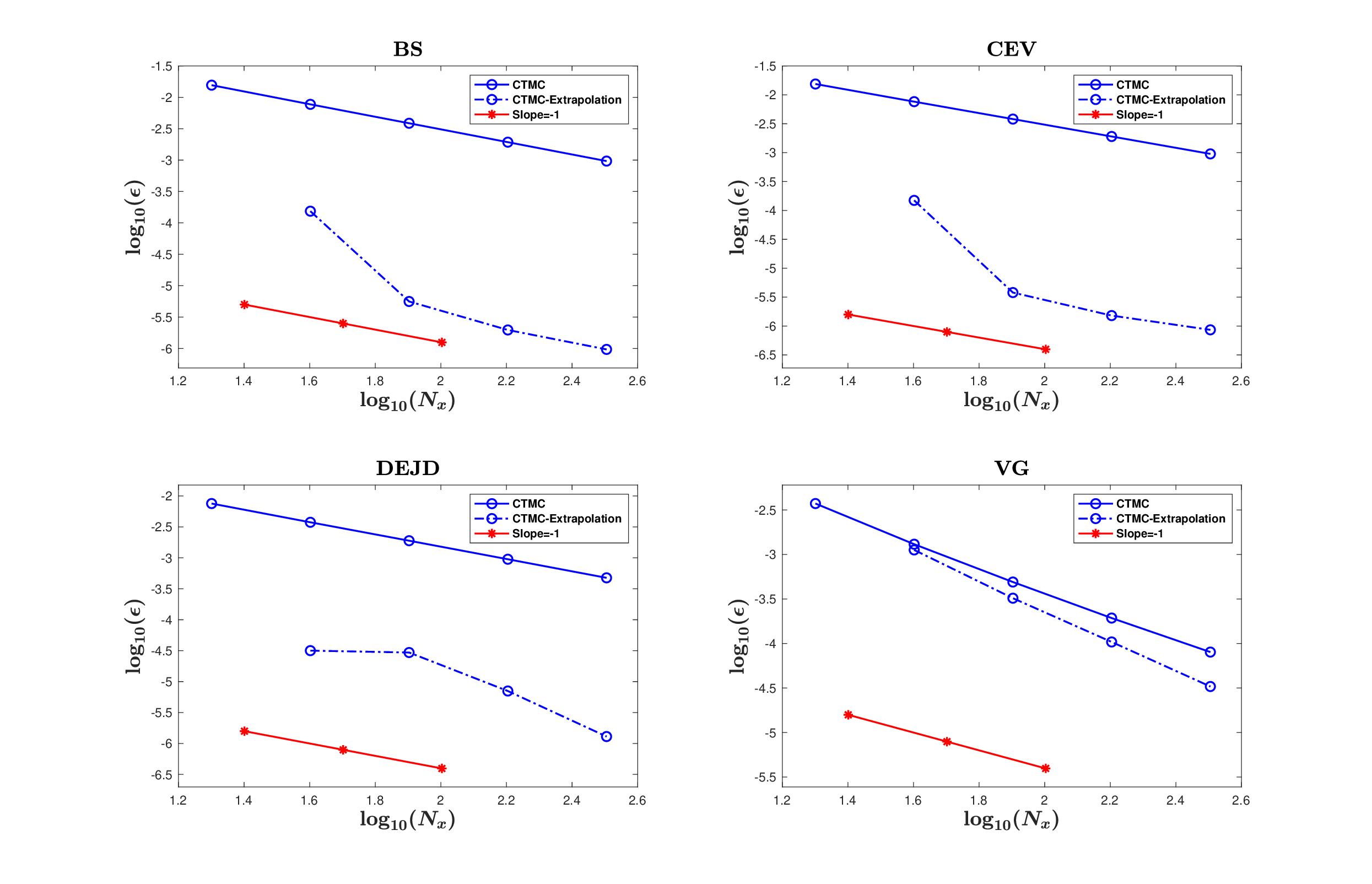}
    \caption{$\log_{10}(\epsilon)$ vs. $\log_{10}(N_x)$ for evaluating $P_{x,y}(\tau_a<\widehat{\tau}_b\wedge T)$ under different models.}
    \label{fig:Q1}
\end{figure}

\begin{figure}[htbp]
    \centering
    \includegraphics[trim=2cm 1cm 2cm 0cm, clip, width=\linewidth]{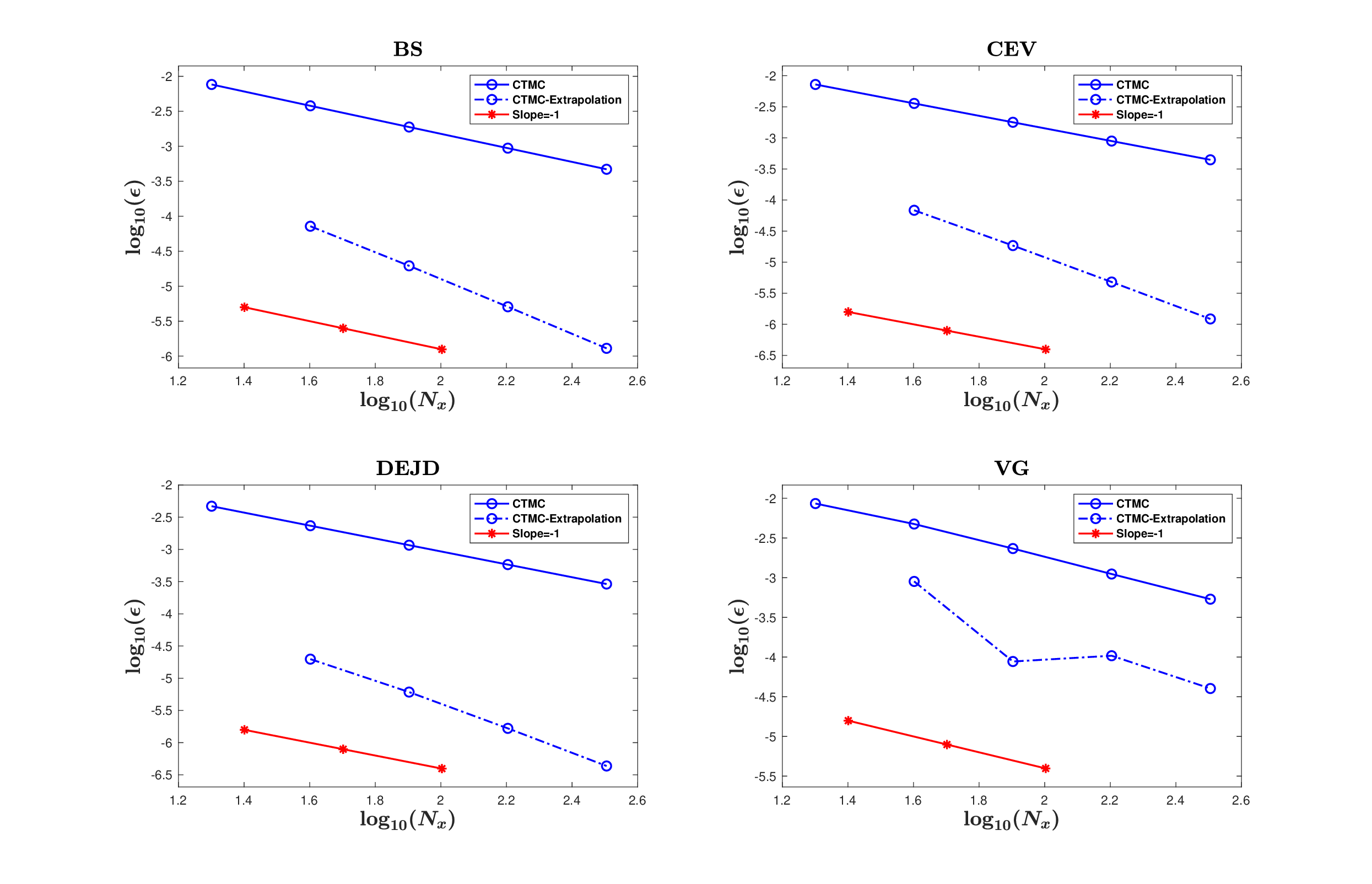}
    \caption{$\log_{10}(\epsilon)$ vs. $\log_{10}({N_x})$ for evaluating $\mathbb{E}_x\Big[e^{-r_f\tau_a}\mathbf{1}_{\{\int_0^{\tau_a}\mathbf{1}_{\{X_s<\xi\}}\diff s <T\}}\Big]$ under different models.}
    \label{fig:Q2}
\end{figure}

\begin{figure}[htbp]
    \centering
    \includegraphics[trim=2cm 1cm 2cm 0cm, clip, width=\linewidth]{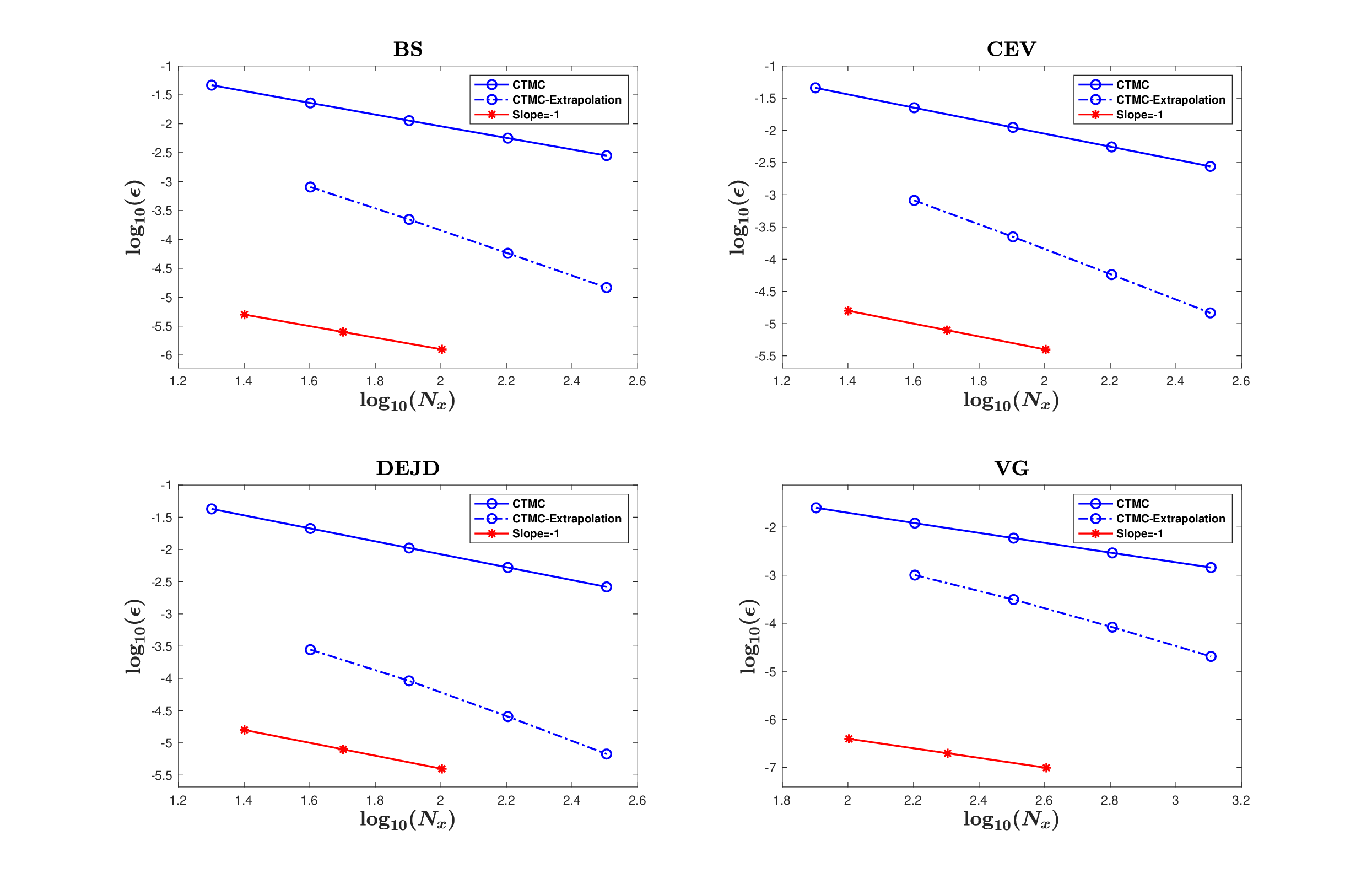}
    \caption{$\log_{10}(\epsilon)$ vs. $\log_{10}(N_x)$ for evaluating $\mathbb{E}_x\Big[e^{-r_f\tau_a}\mathbf{1}_{\{\int_0^{\tau_a}\mathbf{1}_{\{\overline{X}_s-X_s>\xi\}}\diff s <T\}}\Big]$ under different models.}
    \label{fig:Q3}
\end{figure}

\begin{figure}[htbp]
    \centering
    \includegraphics[trim=2cm 1cm 2cm 0cm, clip, width=\linewidth]{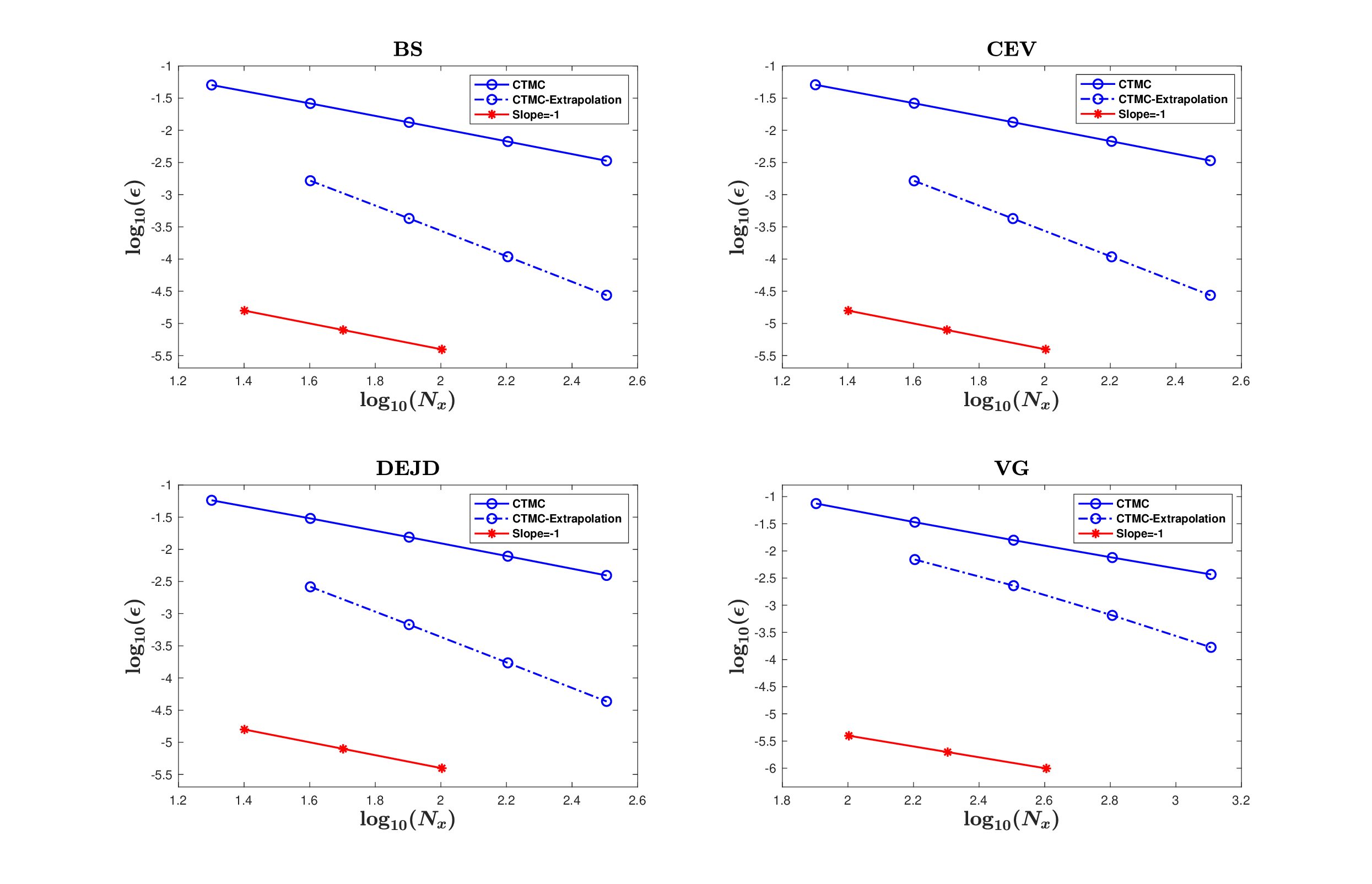}
    \caption{$\log_{10}(\epsilon)$ vs. $\log_{10}(N_x)$ for evaluating $\sum_{k=1}^{\infty}\mathbb{E}_x\Big[e^{-r_f\tilde{\tau}_{a,k}}\mathbf{1}_{\{\tilde{\tau}_{a,k}<T\}}\Big]$ under different models.}
    \label{fig:Q4}
\end{figure}

\begin{figure}[htbp]
    \centering
    \includegraphics[trim=2cm 1cm 2cm 0cm, clip, width=\linewidth]{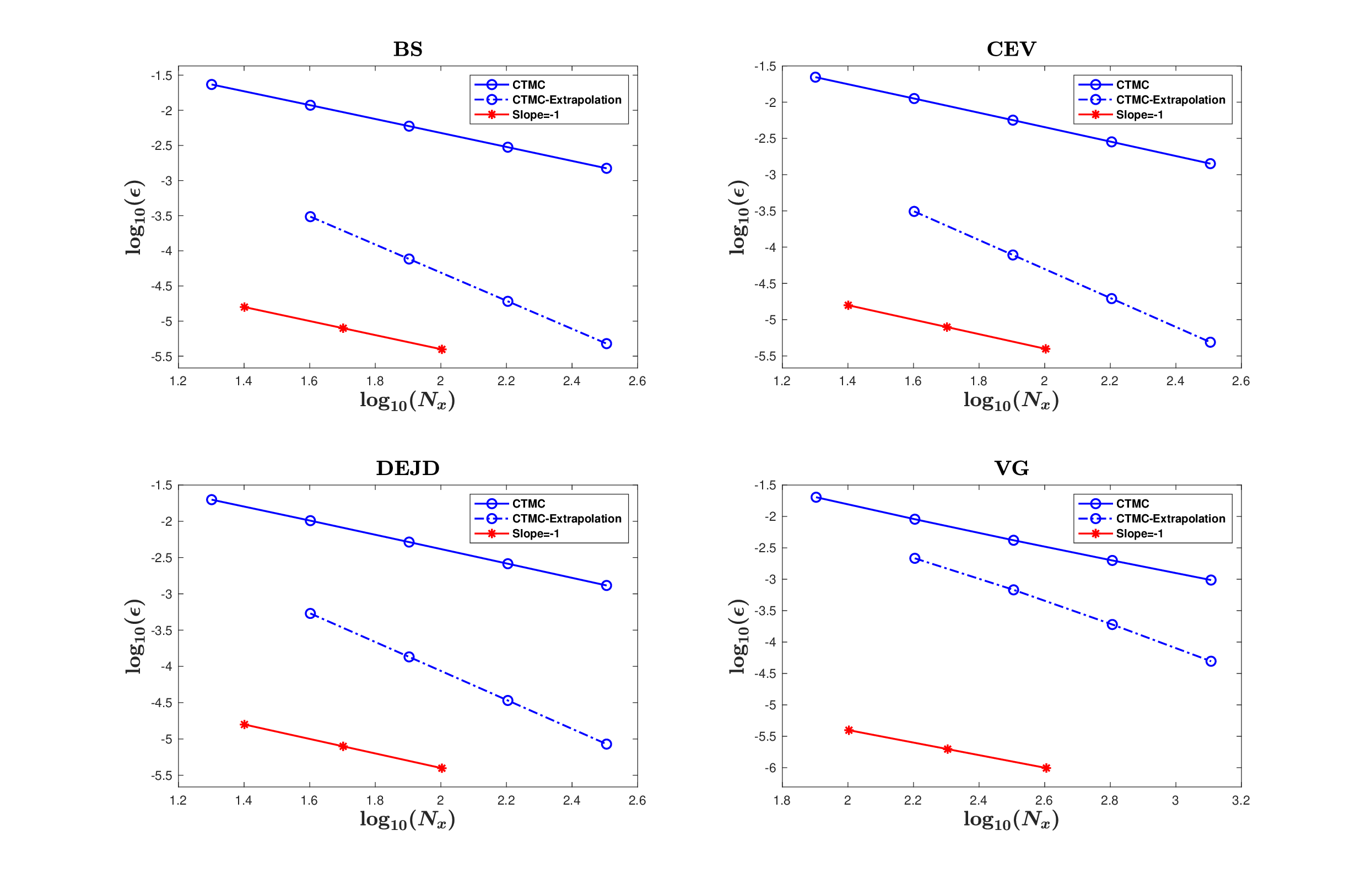}
    \caption{$\log_{10}(\epsilon)$ vs. $\log_{10}(N_x)$ for evaluating  $\mathbb{E}\left[e^{-r_f\tau_{a,k}}\mathbf{1}_{\{\tau_{a,k}<T\}}\,\Big|\, X_0 = x, \overline{X}_0 = y\right]$ under different models.}
    \label{fig:Q5}
\end{figure}

\section{Conclusions}
\label{sec:conclusions}
In this paper, we propose a unified framework for computing five important drawdown quantities under general Markov models. Utilizing the CTMC approximation, we derive linear systems for the associated quantities defined under the CTMC and develop algorithms to solve these systems. In particular, by exploiting the inherent spatial homogeneity of L\'{e}vy models and birth-death characteristic of diffusion models, we construct more efficient algorithms under these two types of models for a majority of quantities. Notably, we circumvent the challenge posed by highly path-dependent quantities and conduct rigorous convergence analysis of our algorithms under mild conditions. Extensive numerical experiments are presented to demonstrate the good performance of our algorithms. Specifically, in most cases, our algorithms achieve remarkable accuracy with relative errors below 1\% within just a few seconds. Even in the worst case, the CPU time remains under 30 seconds. The accuracy can be further enhanced with the aid of the extrapolation technique.

For future research, we can extend our framework in two directions. First, we can consider more complex quantities involving both drawdowns and drawups simultaneously, such as the drawdown time $\tau_a$ preceding the drawup time $\tau_b$ under a more complicated case $b<a$, as discussed in Remark \ref{rmk:drawdown_preceding_drawup_b<a}. 
Second, model assumptions can be generalized beyond Markovian models to incorporate non-Markovian models including rough stochastic local volatility models. The CTMC approximation for rough stochastic local volatility models was constructed in \cite{yang2025general}.

\section*{Acknowledgments}

Pingping Zeng acknowledges the support from the National Natural Science Foundation of China (Grant No. 12171228). Gongqiu Zhang is supported by the National Natural Science Foundation of China (Grant No. 12171408). Weinan Zhang is supported by InnoHK initiative of the ITC of the Hong Kong SAR Government and Laboratory for AI-Powered Financial Technologies.

\section*{Availability of Data and Materials}
The datasets generated during and/or analyzed during the current study are available from the corresponding author on reasonable request.
	 
\section*{Declarations}
We have no conflicts of interest to disclose.

\appendix

\section{Construction of the CTMC Approximation}
\label{sec:ctmc_approximation}
 This section is devoted to constructing a CTMC $Y$ with the state space $\mathbb{S}$ to approximate a Markov process $X$ whose inﬁnitesimal generator $\mathcal L$ is depicted in \eqref{eq:infinitesimal_generatorX}. It suffices to specify the transition rate matrix $\bm{G}$ with elements $G(x, y)$ for $x, y \in \mathbb{S}$, such that the infinitesimal generator $\mathcal{G}$ of $Y$ is an approximation of $\mathcal{L}$ in the sense that:
\begin{align}
    \label{eq:infinitesimal_generator_approximation}
    \mathcal{G}g(x) \approx \mathcal{L}g(x)
\end{align}
for any function $g\in C^2_c (I): I \to \mathbb{R}$ and any $x \in \mathbb{S}$. 

To this end, the derivative terms and the integral term  in \eqref{eq:infinitesimal_generatorX} are approximated by the central difference method and a summation over the discrete state space $\mathbb{S}$, respectively. 
More specifically, the derivative terms are approximated as follows:
\begin{align}
    \label{eq:central_difference_approximation}
    g''(x) &\approx \Delta g(x) := \frac{\nabla^+ g(x) - \nabla^- g(x)}{\delta x},\\
    g'(x) &\approx \nabla g(x) := \frac{\delta^-x}{2\delta x} \nabla^+ g(x) + \frac{\delta^+x}{2\delta x} \nabla^- g(x),
\end{align}
where
\begin{align}
    &x^+ = \inf \{ y \in \mathbb{S}: y > x \}, ~ \delta^+ x = x^+ - x, \qquad &x \in \mathbb{S} \backslash \{y_N\},\\
    &x^- = \sup \{ y \in \mathbb{S}: y < x \}, ~ \delta^- x = x - x^-, \qquad &x \in \mathbb{S} \backslash \{y_0\},\\
    &\delta x=\frac{\delta^+ x +\delta^- x }{2}, \qquad &x \in \mathbb{S} \backslash \{y_0, y_N\},\\
    &\nabla^\pm g(x) = \pm \frac{g(x^\pm) - g(x)}{\delta^\pm x}, \qquad &x \in \mathbb{S} \backslash \{y_0, y_N\}.
\end{align}
On the other hand, the integral term is approximated as follows:
\begin{align}
    \label{eq:jump_approximation}
    &\int_\mathbb{R} \big(g(x+y) - g(x) - yg'(x) \mathbf{1}_{\{|y| \le 1 \}} \big) \nu(x, \diff y) \\
    &\approx~ \frac{1}{2} g''(x)  \int_{ - \delta^-x/2}^{\delta^+x/2} y^2 \nu(x, \diff y) - g'(x) \int_{\mathbb{R} \backslash [-\delta^-x/2, \delta^+x/2]} y\mathbf{1}_{\{|y| \le 1 \}}\nu(x, \diff y)  \\
    &~~~~ + \sum_{z \in \mathbb{S} \backslash \{x\}} \big(g(z) - g(x) \big) \int^{z + \delta^+ z/2 - x + \infty \mathbf{1}_{\{ z= y_N \}}}_{z - \delta^- z/2 - x - \infty \mathbf{1}_{\{ z= y_0 \}}} \nu(x, \diff y),
\end{align}
where the first term approximates the integral for small jumps and the resulting $g''(x)$ is then discretized by the central difference approximation as before. 

For brevity, we introduce
\begin{equation}\label{eq:ctmc-jump-diffusion-notation}
    \begin{aligned}
        &\overline{b}(x)= \int_{\mathbb{R} \backslash [-\delta^-x/2, \delta^+x/2]} y\mathbf{1}_{\{|y| \le 1 \}} \nu(x, \diff y),  \\
        &\overline{\sigma}^2(x) = \frac{1}{2} \int_{ - \delta^-x/2}^{\delta^+x/2} y^2 \nu(x, \diff y), \\
        &\overline{\nu}(x, z)= \int^{z + \delta^+ z/2 - x + \infty 1_{\{ z= y_N \}}}_{z - \delta^- z/2 - x - \infty 1_{\{ z= y_0 \}}} \nu(x, \diff y).
    \end{aligned}
\end{equation}
Thus, the infinitesimal generator $\mathcal{G}$ of $Y$ can be chosen as follows: for $x \in \mathbb{S} \backslash \{y_0, y_N\}$,
\begin{align}
    \mathcal{G} g(x)  = \frac{1}{2} \big( \sigma^2(x) +\overline{\sigma}^2(x) \big) \Delta g(x) + \big(b(x) - \overline{b}(x)\big) \nabla g(x) + \sum_{z \in \mathbb{S} \backslash \{x\}} \overline{\nu}(x, z) \big(g(z) - g(x) \big),
\end{align}
otherwise, $\mathcal{G} g(x) = 0.$
The generator $\mathcal{G}$ implies that the transition rate matrix $G$ satisfies the following conditions:
\begin{align}
& G(x, x^+) = (b(x) - \overline{b}(x)) \frac{\delta^- x}{2\delta^+ x\delta x} + \frac{1}{2} \big( \sigma^2(x) +\overline{\sigma}^2(x) \big) \frac{1}{\delta^+ x\delta x} + \overline{\nu}(x, x^+), ~~~~~~ x \in \mathbb{S} \backslash \{y_0, y_N\},\\
& G(x, x^-) = -(b(x) - \overline{b}(x)) \frac{\delta^+ x}{2\delta^- x \delta x} + \frac{1}{2} \big( \sigma^2(x) +\overline{\sigma}^2(x) \big) \frac{1}{\delta^- x\delta x} + \overline{\nu}(x, x^-), ~~~~ x \in \mathbb{S} \backslash \{y_0, y_N\},\\
& G(x, z) = \overline{\nu}(x, z),~~~~~~~~~~~~~~~~~~~ x \in \mathbb{S}\backslash\{y_0, y_N\},~z \in \mathbb{S} \backslash \{x, x^+, x^-\},\\
& G(x, x) = - \sum_{y \in \mathbb{S} \backslash \{x\}} G(x, y), \qquad x \in \mathbb{S},
\end{align}
and all other elements of $G$ are zeros.

\begin{remark}
    For L\'evy processes, it is desirable to construct a CTMC that preserves the property of independent and stationary increments. To this end, 
 a CTMC $Y$ with the state space $\mathbb{S} = h \mathbb{Z}$ is chosen to approximate the L\'evy process $X$ so that $Y$ itself is also a L\'evy process. Here, $h > 0$ and $\mathbb{Z}$ denotes the set of integers. 
   In this case, for any $x\in\mathbb{S}$, we have
    $x^\pm=x\pm h$ and $\delta^\pm x=\delta x=h.$
    Moreover, $b(x),~\sigma^2(x),$ and $\nu(x,\diff y)$ in \eqref{eq:infinitesimal_generatorX}
    are independent of $x$, 
   which enables us to write $b=b(x),$ $\sigma^2=\sigma^2(x),$ and $\nu(\diff y)=\nu(x,\diff y)$.
Therefore, we introduce
$\overline{b}= \int_{\mathbb{R} \backslash [-h/2, h/2]} y\mathbf{1}_{\{|y| \le 1 \}} \nu(\diff y),$ $\overline{\sigma}^2= \frac{1}{2} \int_{ - h/2}^{h/2} y^2 \nu( \diff y),$  and $\overline{\nu}(z-x)= \int^{z -x + h/2}_{z - x - h/2} \nu( \diff y)$.
It follows that the transition rate matrix $G$ is characterized by
     \begin{align}
         & G(x, x+h) = \frac{b - \overline{b}}{2 h} + \frac{\sigma^2 +\overline{\sigma}^2}{2 h^2} + \overline{\nu}(h), \\
        & G(x, x-h) = -\frac{b - \overline{b}}{2 h} + \frac{\sigma^2 +\overline{\sigma}^2}{2 h^2} + \overline{\nu}(-h),\\
        & G(x, z) = \overline{\nu}(z-x),~~~~ z\in\mathbb{S}\backslash\{x,~x\pm h\},\\
        & G(x, x) = - \sum_{y \in \mathbb{S} \backslash \{x\}} G(x, y).
     \end{align}
\end{remark}

\begin{remark}
    For diffusion processes, we have  $\overline{b}(x) = 0$, $\overline{\sigma}^2(x) = 0$, and $\overline{\nu}(x, z) = 0$ for all $x, z \in \mathbb{S}$. The resulting CTMC is a birth-death process.
\end{remark}

\section{Proofs}
\label{sec:supplementary_proofs}
\subsection{Proof of Proposition \ref{prop:drawdown_preceding_drawup}}\label{proof:drawdown_preceding_drawup}
\begin{proof}
Similar to the proof of Theorem 3.1 in \cite{zhang2023drawdown}, by the introduction of $T_x^+$, $A(q, x, y)$ can be decomposed as follows:
\begin{align}
    A(q, x, y) = \mathbb{E}_{x, y}\big[ e^{-q\tau_a} \mathbf{1}_{\{\tau_a < \widehat{\tau}_b,\, \tau_a < T_x^+ \}} f(Y_{\tau_a})  \big] + \mathbb{E}_{x, y}\big[ e^{-q\tau_a} \mathbf{1}_{\{\tau_a < \widehat{\tau}_b, \,\tau_a > T_x^+ \}} f(Y_{\tau_a})  \big].
\end{align}
On the event $\{\tau_a < T_x^+\}$, we have $\tau_a = T_{x-a}^-$ and the drawup up to $\tau_a$ is less than $b$, leading to $\tau_a < \widehat{\tau}_b$. On the other hand, notice that $\{\tau_a > T_x^+\} = \{T_{x-a}^- > T_x^+\}$. Therefore, for $x \in \mathbb{S}$ and $y \in \mathbb{S} \cap (x-a, x]$, we have
\begin{align}
    &A(q, x, y) \\
    &=\mathbb{E}_{x, y}\big[ e^{-qT_{x-a}^-} \mathbf{1}_{\{T_{x-a}^- < T_x^+ \}} f(Y_{T_{x-a}^-})  \big]\\
    &~~~~+\mathbb{E}_{x, y}\big[ e^{-qT_x^+} \mathbf{1}_{\{T_{x-a}^- > T_x^+ \}}\mathbb{E}[e^{-q(\tau_a-T_x^+)}\mathbf{1}_{\{\tau_a<\widehat{\tau}_b\}}f(Y_{\tau_a})\,|\,Y_{T_x^+}, \overline{Y}_{T_x^+}=Y_{T_x^+}, \underline{Y}_{T_x^+}]\big]\\
    &=  P_{(x-a, x]}(q, x; f(\cdot) \mathbf{1}_{\{\cdot \le x -a\}}) +\mathbb{E}_{x, y}\big[ e^{-qT_x^+} \mathbf{1}_{\{ T_{x-a}^- > T_x^+ \}}A(q, Y_{T_x^+}, \underline{Y}_{T_x^+})  \big]\\
    &=  P_{(x-a, x]}(q, x; f(\cdot) \mathbf{1}_{\{\cdot \le x -a\}}) \\
    &~~~~ +\sum_{z \in \mathbb{S} \cap (x, y + b),\,{w \in\mathbb{S}\cap ((x-a)\vee(z-b), y]}} \mathbb{E}_{x, y}\big[ e^{-qT_x^+} \mathbf{1}_{\{ T_{x-a}^- > T_x^+, \,Y_{T_x^+} = z, \,\underline{Y}_{Y_x^+}=w \}}A(q, Y_{T_x^+}, \underline{Y}_{T_x^+})  \big]\\
    &=P_{(x-a, x]}(q, x; f(\cdot) \mathbf{1}_{\{\cdot \le x -a\}})\\
    &~~~~+\sum_{z \in \mathbb{S} \cap (x, y + b),\,{w \in\mathbb{S}\cap ((x-a)\vee(z-b), y]}} \mathbb{E}_{x, y}\big[ e^{-qT_x^+} \mathbf{1}_{\{ T_{x-a}^- > T_x^+, \,Y_{T_x^+} = z, \,\underline{Y}_{Y_x^+}=w \}}  \big]A(q, z, w),
\end{align}
which in conjunction with the definition of the quantity $R_{(\ell, r]}(q, x, y; g)$ yields (\ref{eq:linear_system_drawdown_preceding_drawup}).

  It remains to derive the linear system for the quantity $R_{(\ell,r]}(q,x,y;g)$. For $x\leq\ell$, we have $\{T_r^+<T_\ell^-\}=\emptyset$, leading to $R_{(\ell, r]}(q,x,y;g)=0$, while for $x>r$, we have $T_r^+=0$, resulting in $R_{(\ell, r]}(q,x,y;g)=g(x,y)$. Consider $x\in(\ell,r]$, inspired by the proof of Proposition 3.1 in \cite{zhang2023drawdown}, we introduce $\mathbb{T}_t=\{kt: k=0,1,\dots\}$ and define
  $\mathbb T_{(\ell,r]}=\inf\{s\in\mathbb{T}_t: Y_s\notin (\ell,r]\},~
  \mathbb T_r^{+}=\mathbb T_{(-\infty,r]},$ and $\mathbb T_\ell^{-}=\mathbb T_{(\ell,\infty)}$. The Markov property of $Y$ yields
\begin{align*}
    &\mathbb{E} \big[ e^{-q\mathbb T_r^{+}} \mathbf{1}_{\{ \mathbb T_\ell^{-}>\mathbb T_r^{+} \}} g(Y_{\mathbb T_r^{+}}, \underline{Y}_{\mathbb T_r^{+}}) \,\big|\,  Y_0 = x, \underline{Y}_0 = y \big]\\
    &=e^{-qt}\sum_{z\in\mathbb{S}}\mathbb{E}\big[ e^{-q\mathbb T_r^{+}} \mathbf{1}_{\{ \mathbb T_\ell^{-}>\mathbb T_r^{+} \}} g(Y_{\mathbb T_r^{+}}, \underline{Y}_{\mathbb T_r^{+}}) \,\big| \, Y_0 = z, \underline{Y}_0 = y \wedge z \big]p(t,x,z)\\
    &=(1-qt)t\sum_{z\in\mathbb{S},\,z\neq x}\mathbb{E}\big[ e^{-q\mathbb T_r^{+}} \mathbf{1}_{\{ \mathbb T_\ell^{-}>\mathbb T_r^{+} \}} g(Y_{\mathbb T_r^{+}}, \underline{Y}_{\mathbb T_r^{+}}) \,\big|\,  Y_0 = z, \underline{Y}_0 = y \wedge z \big]G(x,z)\\
    &~~~~+(1-qt)\mathbb{E}\big[ e^{-q\mathbb T_r^{+}} \mathbf{1}_{\{ \mathbb T_\ell^{-}>\mathbb T_r^{+} \}} g(Y_{\mathbb T_r^{+}}, \underline{Y}_{\mathbb T_r^{+}})\, \big|\,  Y_0 = x, \underline{Y}_0 = y  \big](1+G(x,x)t)+o(t),
\end{align*}
where $p(t,x,z):=P(Y_t=z\,|\,Y_0=x)$. The second equality holds since $p(t,x,z)=\mathbf{1}_{\{x=z\}}+G(x,z)t+o(t)$ and $e^{-qt}=1-qt+o(t)$ when $t\to 0$. 
Dividing both sides by $t$ gives
\begin{align*}
    &\sum_{z\in\mathbb{S}}G(x,z)\mathbb{E}\big[ e^{-q\mathbb T_r^{+}} \mathbf{1}_{\{\mathbb T_r^{+} < \mathbb T_\ell^{-}\}} g(Y_{\mathbb T_r^{+}}, \underline{Y}_{\mathbb T_r^{+}})\, \big|\,  Y_0 = z, \underline{Y}_0 = y \wedge z \big]\\
    &-q\mathbb{E}\big[ e^{-q\mathbb T_r^{+}} \mathbf{1}_{\{\mathbb T_r^{+} < \mathbb T_\ell^{-}\}} g(Y_{\mathbb T_r^{+}}, \underline{Y}_{\mathbb T_r^{+}})\, \big|\,  Y_0 = x, \underline{Y}_0 = y\big]+o(1)=0.
\end{align*}
By similar arguments in the proof of Proposition 3.1 in \cite{zhang2023drawdown}, we finally arrive at \eqref{eq:liner-system-min}.
\end{proof}

\subsection{Proof of Proposition \ref{prop:occupation_below_level}}\label{proof:occupation_below_level}
\begin{proof}
   Recall that on the event $\{\tau_a < T_x^+\}$, we have $\tau_a = T_{x-a}^-$. Moreover, $\{\tau_a > T_x^+\} = \{T_{x-a}^- > T_x^+\}$. Thus,
\begin{align}
   B(k, x)= ~&\mathbb{E}_x\Big[ e^{-\int_0^{\tau_a} k(Y_s) \diff s} \mathbf{1}_{\{\tau_a < T_x^+ \}} f(Y_{\tau_a})  \Big] + \mathbb{E}_x\Big[ e^{-\int_0^{\tau_a} k(Y_s) \diff s} \mathbf{1}_{\{\tau_a > T_x^+ \}} f(Y_{\tau_a})  \Big] \\
   = ~&\mathbb{E}_x\Big[ e^{-\int_0^{T_{x-a}^-} k(Y_s) \diff s} \mathbf{1}_{\{T_{x-a}^- < T_x^+ \}} f(Y_{T_{x-a}^-})  \Big]\\
   ~&+\sum_{z \in \mathbb{S},\, z > x} \mathbb{E}_x\Big[ e^{-\int_0^{T_x^+} k(Y_s) \diff s} \mathbf{1}_{\{T_{x-a}^- > T_x^+, \,Y_{T_x^+}=z\}} B(k, Y_{T_x^+}) \Big]\\
   =~&P_{(x-a, x]}(k, x; f(\cdot) \mathbf{1}_{\{\cdot \le x -a\}})+\sum_{z \in \mathbb{S},\, z > x}\mathbb{E}_x\Big[ e^{-\int_0^{T_x^+} k(Y_s) \diff s} \mathbf{1}_{\{T_{x-a}^- > T_x^+,\, Y_{T_x^+}=z\}} \Big]B(k, z)\\
   =~&P_{(x-a, x]}(k, x; f(\cdot) \mathbf{1}_{\{\cdot \le x -a\}})+\sum_{z \in \mathbb{S},\, z > x}P_{(x-a, x]}(k, x; \mathbf{1}_{\{\cdot =z\}})B(k,z).
\end{align}
\end{proof}

\subsection{Proof of Proposition \ref{prop:generalized_occupation_drawdown}}\label{proof:generalized_occupation_drawdown}
\begin{proof}
    Using similar arguments in the proof of Proposition \ref{prop:occupation_below_level}, we have
    \begin{align}
        C(k, x)=~& \mathbb{E}_x\Big[ e^{-\int_0^{\tau_a} k(Y_s, \overline{Y}_s) \diff s} \mathbf{1}_{\{\tau_a < T_x^+ \}} f(Y_{\tau_a})  \Big] 
         + \mathbb{E}_x\Big[ e^{-\int_0^{\tau_a} k(Y_s, \overline{Y}_s) \diff s} \mathbf{1}_{\{\tau_a > T_x^+ \}} f(Y_{\tau_a})  \Big] \\
        =~& \mathbb{E}_x\Big[ e^{-\int_0^{T_{x-a}^-} k(Y_s, \overline{Y}_s) \diff s} \mathbf{1}_{\{T_{x-a}^- < T_x^+ \}} f(Y_{T_{x-a}^-})  \Big]\\ 
        &+ \sum_{z \in \mathbb{S}, \,z > x} \mathbb{E}_x\Big[ e^{-\int_0^{T_x^+} k(Y_s, \overline{Y}_s) \diff s} \mathbf{1}_{\{T_{x-a}^- > T_x^+, \,Y_{T_x^+}=z \}}C(k, Y_{T_x^+})  \Big]\\
        =~&P_{(x-a, x]}(k(\cdot,x), x; f(\cdot) \mathbf{1}_{\{\cdot \le x -a\}})\\
        &+\sum_{z \in \mathbb{S},\, z > x} \mathbb{E}_x\Big[ e^{-\int_0^{T_x^+} k(Y_s, \overline{Y}_s) \diff s} \mathbf{1}_{\{T_{x-a}^- > T_x^+,\, Y_{T_x^+}=z \}}  \Big]C(k, z)\\
        =~&P_{(x-a, x]}(k(\cdot,x), x; f(\cdot) \mathbf{1}_{\{\cdot \le x -a\}}) + \sum_{z \in \mathbb{S}, \,z > x}S_{(x-a, x]}(k, x, x;  \mathbf{1}_{\{ \cdot =z \}})C(k, z).
    \end{align}
    The linear system for the quantity $S_{(\ell, r]}(k,x,y;g)$ can be derived in a similar way as the quantity $R_{(\ell, r]}(q,x,y;g)$ (see the proof of Proposition \ref{prop:drawdown_preceding_drawup}) and hence is omitted here.
\end{proof}

\subsection{Proof of Proposition \ref{prop:nth_drawdown_without_recovery}}\label{proof:nth_drawdown_without_recovery}
\begin{proof}
   Applying the law of iterated expectations and Lemma \ref{passage_duration} yields
    \begin{align*}
        H_{n}(q,x)=&~\mathbb{E}_x\Big[ e^{-q\tilde{\tau}_{a,n}} f(Y_{\tilde{\tau}_{a,n}}) \Big]\\
        =&~\mathbb{E}_x\Big[ e^{-q\tilde{\tau}_{a,1}}\mathbb{E}\big[e^{-q(\tilde{\tau}_{a,n}-\tilde{\tau}_{a,1})}f(Y_{\tilde{\tau}_{a,n}})\,|\, Y_{\tilde{\tau}_{a,1}},\overline{Y}_{\tilde{\tau}_{a,1}, \tilde{\tau}_{a,1}}\big]  \Big]\\
        =&~\mathbb{E}_x\Big[ e^{-q\tilde{\tau}_{a,1}}H_{n-1}(q,Y_{\tilde{\tau}_{a,1}})  \Big]\\
        =&~Q_a(q,x;H_{n-1}(q,\cdot))\\
        =&~P_{(x-a, x]}(q, x; H_{n-1}(q,\cdot)\mathbf{1}_{\{\cdot \le x -a\}}) +\sum_{y\in\mathbb{S},\, y>x} P_{(x-a, x]}\big(q, x; \mathbf{1}_{\{\cdot = y \}} \big)H_n(q, y).
    \end{align*}
\end{proof}

\subsection{Proof of Proposition \ref{prop:nth_drawdown_with_recovery}}\label{proof:nth_drawdown_with_recovery}
\begin{proof}
    Since $\tau_{a,n}>T_{(-\infty,y)}$ and $\overline{Y}_{T_{(-\infty,y)}}=Y_{T_{(-\infty,y)}}\geq y$, applying the law of iterated expectations yields
    \begin{align*}
        J_n(q,x,y)
        &=\mathbb{E}\Big[ e^{-q T_{(-\infty,y)}}\mathbb{E}\big[e^{-q(\tau_{a,n}-T_{(-\infty,y)})}f(Y_{\tau_{a,n}}, \overline{Y}_{\tau_{a,n}})\,\big|\,Y_{T_{(-\infty,y)}}, \overline{Y}_{T_{(-\infty,y)}}\big]  \,\Big|\, Y_0 = x, \overline{Y}_0 = y \Big]\\
        &=\mathbb{E}\Big[ e^{-q T_{(-\infty,y)}}J_n(q,Y_{T_{(-\infty,y)}},Y_{T_{(-\infty,y)}}) \,\big|\, Y_0 = x, \overline{Y}_0 = y \Big]\\
        &= \sum_{z\in\mathbb{S}, \,z\geq y}P_{(-\infty, y)}(q, x; \mathbf{1}_{\{\cdot =z \}})J_n(q, z, z).
    \end{align*}
    On the other hand, $J_n(q,y,y)$ can be decomposed as follows
    \begin{align*}
        J_n(q,y,y)=&~\mathbb{E}\big[ e^{-q\tau_{a,n}} f(Y_{\tau_{a,n}}, \overline{Y}_{\tau_{a,n}}) \,\big|\, Y_0 = y, \overline{Y}_0 = y \big]\\
        =&~\mathbb{E}\big[ e^{-q\tau_{a,n}} \mathbf{1}_{\{\tau_{a,1}<T_y^+\}}f(Y_{\tau_{a,n}}, \overline{Y}_{\tau_{a,n}}) \,\big|\, Y_0 = y, \overline{Y}_0 = y \big]\\
        &~+\mathbb{E}\big[ e^{-q\tau_{a,n}} \mathbf{1}_{\{\tau_{a,1}>T_y^+\}}f(Y_{\tau_{a,n}}, \overline{Y}_{\tau_{a,n}}) \,\big|\, Y_0 = y, \overline{Y}_0 = y \big]\\
        =&~\mathbb{E}\Big[ e^{-q \tau_{a,1}}\mathbf{1}_{\{\tau_{a,1}<T_y^+\}}\mathbb{E}\big[e^{-q(\tau_{a,n}-\tau_{a,1})}f(Y_{\tau_{a,n}}, \overline{Y}_{\tau_{a,n}})\,\big|\, Y_{\tau_{a,1}}, \overline{Y}_{\tau_{a,1}}\big]  \,\Big|\, Y_0 = y, \overline{Y}_0 = y \Big]\\
        &~+\mathbb{E}\Big[ e^{-q T_y^+}\mathbf{1}_{\{\tau_{a,1}>T_y^+\}}\mathbb{E}\big[e^{-q(\tau_{a,n}-T_y^+)}f(Y_{\tau_{a,n}}, \overline{Y}_{\tau_{a,n}})\,\big|\, Y_{T_y^+}, \overline{Y}_{T_y^+}\big]  \,\Big|\, Y_0 = y, \overline{Y}_0 = y \Big]\\
        =&~\mathbb{E}\Big[ e^{-q T_{y-a}^-}\mathbf{1}_{\{T_{y-a}^-<T_y^+\}}J_{n-1}(q,Y_{T_{y-a}^-},y)  \,\Big|\, Y_0 = y, \overline{Y}_0 = y \Big]\\
        &~+\sum_{z\in\mathbb{S},\, z>y}\mathbb{E}\Big[ e^{-q T_y^+}\mathbf{1}_{\{T_{y-a}^->T_y^+,Y_{T_y^+}=z\}}J_{n}(q,Y_{T_y^+},Y_{T_y^+})  \,\Big|\, Y_0 = y, \overline{Y}_0 = y \Big]\\
        =&~P_{(y-a, y]}(q, y; J_{n-1}(q, \cdot, y) \mathbf{1}_{\{\cdot \le y-a \}})+\sum_{z\in\mathbb{S},\, z>y}P_{(y-a, y]}(q, y; \mathbf{1}_{\{\cdot =z \}})J_n(q,z,z).
    \end{align*}
    Here, the sixth line is attributed to the fact that on the event $\{\tau_{a,1}<T_y^+\}$, we have $\tau_{a,1}=T_{y-a}^-$ and $\overline{Y}_{\tau_{a,1}}=\overline{Y}_0=y$, and the seventh line holds since the condition $Y_0=\overline{Y}_0=y$ implies that $Y_{T_y^+}=\overline{Y}_{T_y^+}>y$.
\end{proof}
\subsection{Proof of Theorem \ref{thm:convergence}}\label{proof_convergence}
\begin{proof}
    For brevity, we write $s'=\tau_a(\omega)$, $s'^{(n)}=\tau_{a}(\omega^{(n)})$, $\overline{\omega}_{s,t}^{(n)}=\sup_{u\in[s,t]}\omega_u^{(n)}$, $MD_{s,t}=\sup_{u\in[s,t]} (\overline{\omega}_{s,u}-\omega_u)$, $MD_{s,t}^{(n)}=\sup_{u\in[s,t]} (\overline{\omega}_{s,u}^{(n)}-\omega_u^{(n)})$, $M_k=\sup_{u\in[\tau_{a,k-1}(\omega),\tau_{a,k}(\omega)]}\omega_{u}$, $M_{k}^{(n)}=\sup_{u\in[\tau_{a,k-1}(\omega^{(n)}), \tau_{a,k}(\omega^{(n)})]}\omega_u^{(n)}$, $T_{M_k}=\inf\{t>\tau_{a,k}(\omega):\omega_t\geq M_k\}$, and $T_{M_k}^{(n)}=\inf\{t>\tau_{a,k}(\omega^{(n)}):\omega_t^{(n)}\geq M_k^{(n)}\},$ where $0\leq s <t$ and $k=1,2,\dots$. Define $\tau_{a,0}(\omega)=\tau_{a,0}(\omega^{(n)})=0$. 
    In the subsequent analysis, we focus on establishing the continuity of the mappings w.r.t. the sample path $\omega$. Combined with the weak convergence $Y^{(n)}\Rightarrow X$, this continuity property ensures 
    the quantities defined under the CTMC $Y^{(n)}$ converge in distribution to their counterparts defined under the Markov process $X$.
    \begin{enumerate}[label={(\arabic*)}]
        \item It suffices to show that $\omega\mapsto(\tau_a(\omega),\widehat{\tau}_b(\omega),\omega_{\tau_a(\omega)})$ is a continuous mapping almost surely. Since the continuity of the mapping $\omega\mapsto(\tau_a(\omega),\omega_{\tau_a(\omega)})$ has been verified by \cite{zhang2023drawdown}, it remains to prove the mapping $\omega\mapsto\widehat{\tau}_b(\omega)$ is continuous. Observe that the running minimum satisfies the identity $\inf_{0\leq s\leq t}X_s=-\sup_{0\leq s\leq t}(-X_s)$. Following the methodology of \citet[Theorem 4.1]{zhang2023drawdown}, we directly verify the continuity of the mapping  $\omega\mapsto\widehat{\tau}_b(\omega)$.

        \item As discussed above, it suffices to show the continuity of the mapping $\omega\mapsto\int_{0}^{\tau_a(\omega)}k(\omega_s)\diff s$. Suppose $\omega^{(n)}\to\omega$ under the Skorokhod topology. For any $\epsilon>0$, denote by $\bm{T}_{\epsilon}=\cup_{1\leq i\leq m}(t_{i}-\epsilon,t_{i}+\epsilon)$ the neighbourhood of $\bm{T}_{dis}$. By \citet[Chapter VI, Section 2, 2.3]{jacod2013limit}, for $s\in(0,\tau_a)\setminus\bm{T}_{\epsilon}$, the convergence $\omega_s^{(n)}\to\omega_s$ holds as $n\to\infty$, ensuring the convergence $k(\omega_s^{(n)})\to k(\omega_s)$. Denote by $U_k$ the upper bound of the function $k(\cdot)$. Notice that $\tau_{a}(\omega^{(n)})\to\tau_a(\omega)$ as $n\to\infty$, then it follows that \[\left|\int_{0}^{\tau_a(\omega)}k(\omega_s^{(n)})\diff s-\int_{0}^{\tau_a(\omega^{(n)})}k(\omega_s^{(n)})\diff s\right|\leq U_k\left|\tau_{a}(\omega)- \tau_{a}(\omega^{(n)})\right|\to 0.\]
        On the other hands,
        \begin{align*}
            &\left|\int_{0}^{\tau_a(\omega)}k(\omega_s)\diff s-\int_{0}^{\tau_a(\omega)}k(\omega_s^{(n)})\diff s\right|\\
            &\leq\int_{\bm{T}_{\epsilon}}\left|k(\omega_s)-k(\omega_s^{(n)})\right|\diff s+\int_{(0,\tau_a(\omega))\setminus\bm{T}_{\epsilon}}\left|k(\omega_s)-k(\omega_s^{(n)})\right|\diff s\\
            &\leq 4mU_k\epsilon+\int_{(0,\tau_a(\omega))\setminus\bm{T}_{\epsilon}}\left|k(\omega_s)-k(\omega_s^{(n)})\right|\diff s\\
            &\to 4mU_k\epsilon.
        \end{align*}
        The second inequality is established by the boundedness of the function $k(\cdot)$ and the last line follows from an application of the dominated convergence theorem.
        Since $\epsilon$ can be arbitrarily small, combining the above two results  yields that $\int_{0}^{\tau_a(\omega^{(n)})}k(\omega_s^{(n)})\diff s\to\int_{0}^{\tau_a(\omega)}k(\omega_s)\diff s$ as $n\to\infty$.
        \item The proof of the continuity of the mapping $\omega\mapsto\int_{0}^{\tau_a(\omega)}k(\omega_s,\overline{\omega}_s)\diff s$ is analogous to that of the mapping $\omega\mapsto\int_{0}^{\tau_a(\omega)}k(\omega_s)\diff s$ and hence is omitted.
        \item According to \citet[Theorem 4.1]{zhang2023drawdown}, $\tilde{\tau}_{a,1}(\omega^{(n)})=\tau_a(\omega^{(n)})\to\tau_a(\omega)=\tilde{\tau}_{a,1}(\omega)$ as $n\to\infty$. 
        Therefore, it suffices to show the continuity of the mapping $\omega\mapsto\tilde{\tau}_{a,2}(\omega)$, and the continuity of the mappings $\omega\mapsto\tilde{\tau}_{a,k}$ for $k=3,\dots$ can be proved by mathematical induction. 

        ~~~~Assume $s'<\tilde{\tau}_{a,2}(\omega)<s,~\omega\in U$, and $s\in\mathbb{R}^+\setminus J(\omega)$, then $MD_{s',s}>a$. 
        By the right-continuity of the path $\omega$ at $s'$, $MD_{s',s}$ inherits this property and is also right-continuous at $s'$. Consequently, there exists a $\delta\in(0,s-s')$ such that $s'+\delta\in\mathbb{R}^+\setminus J(\omega)$ and $MD_{s'+\delta,s}>a$. \citet[Theorem 4.1]{zhang2023drawdown} implies that $MD_{s'+\delta,s}^{(n)}\to MD_{s'+\delta,s}$ as $n\to\infty$, in conjunction with the previous result yields $MD_{s'+\delta,s}^{(n)}>a$ for a sufficiently large $n$. Since $s'^{(n)}\to s'$, for a sufficiently large $n$, it follows that $s'^{(n)}\leq s'+\delta$ and hence $MD_{s'^{(n)},s}^{(n)}\geq MD_{s'+\delta,s}^{(n)}>a$,  equivalently, $\tilde{\tau}_{a,2}(\omega^{(n)})<s$. The arbitrariness of $s$ and dense property of the set $\mathbb{R}^+\setminus J(\omega)$ finally result in $\lim\sup_{n\to\infty}\tilde{\tau}_{a,2}(\omega^{(n)})\leq \tilde{\tau}_{a,2}(\omega)$.

        ~~~~Suppose $s'<s<\tilde{\tau}_{a,2}(\omega)$, $\omega\in \widetilde{\mathcal W}$, and $s\in\mathbb{R}^+\setminus J(\omega)$, then $MD_{s',s}<a$. Moreover, there exists a $\delta\in(0,s')$ such that $MD_{s'-\delta,s}<a$. The convergence $s'^{(n)}\to s'$ implies that $s'^{(n)}>s'-\delta$ holds for a sufficiently large $n$. Furthermore, applying \citet[Theorem 4.1]{zhang2023drawdown} yields $\lim_{n\to\infty}MD_{s'^{(n)},s}^{(n)}\leq\lim_{n\to\infty}MD_{s'-\delta,s}^{(n)}=MD_{s'-\delta,s}<a$, equivalently, $\tilde{\tau}_{a,2}(\omega^{(n)})>s$. As $s$ is arbitrary and $\mathbb{R}^+\setminus J(\omega)$ is dense, it follows that $\lim\inf_{n\to\infty}\tilde{\tau}_{a,2}(\omega^{(n)})\geq \tilde{\tau}_{a,2}(\omega)$.

        ~~~~Combining the above two results concludes the proof.
        \item Following the above proof of conclusion (4) in Theorem \ref{thm:convergence}, it suffices to prove the continuity of the mapping of $\omega\mapsto\tau_{a,2}(\omega)$.
        Observe that $\tau_{a,2}(\omega)=T_{M_1}+\inf\{t>T_{M_1}:\overline{\omega}_{T_{M_1},t}-X_t\geq a\}$ and $\tau_{a,2}(\omega^{(n)})=T_{M_1}^{(n)}+\inf\{t>T_{M_1}^{(n)}:\overline{\omega}_{T_{M_1}^{(n)},t}^{(n)}-Y_t^{(n)}\geq a\}$. It remains to show the convergence of $T_{M_1}^{(n)}$ as the convergence of the second term involved in $\tau_{a,2}(\omega^{(n)})$ can be proved in a similar way as that of $\tilde{\tau}_{a,2}(\omega^{(n)})$.
    
        ~~~~Suppose $s'<T_{M_1}<s,~\omega\in {V}$, and $s\in\mathbb{R}^+\setminus J(\omega)$. Thus, $\overline{\omega}_{0,s'}<\overline{\omega}_{s',s}$.
        By the
        right-continuity of the path $\omega$ at $s'$, there exists a $\delta\in(0,s-s')$ such that $s'+\delta\in\mathbb{R}^+\setminus J(\omega)$ and $\overline{\omega}_{s',s'+\delta}<\overline{\omega}_{s'+\delta,s}$. On the other hand, applications of \citet[Lemma 6]{Song2013weakconvergence} and \citet[Chapter VI, Section 2, 2.3]{jacod2013limit} yield $\overline{\omega}_{0,s'+\delta}^{(n)}\to \overline{\omega}_{0,s'+\delta}$ and $\overline{\omega}_{s'+\delta,s}^{(n)}\to \overline{\omega}_{s'+\delta,s}$ as $n\to\infty$. Besides, since $s'^{(n)}\to s'$, for a sufficiently large $n$, it follows that $s'^{(n)}<s'+\delta$. As a result, $\lim_{n\to\infty}\overline{\omega}_{0,s'^{(n)}}^{(n)}\leq \lim_{n\to\infty}\overline{\omega}_{0,s'+\delta}^{(n)}= \overline{\omega}_{0,s'+\delta}<\overline{\omega}_{s'+\delta,s}=\lim_{n\to\infty} \overline{\omega}_{s'+\delta,s}^{(n)}\leq\lim_{n\to\infty}\overline{\omega}_{s'^{(n)},s}^{(n)}$, equivalently, $s'^{(n)}<T_{M_1}^{(n)}<s$. Finally, by the arbitrariness of $s$ and dense
        property of the set $\mathbb{R}^+\setminus J(\omega)$, the inequality $\lim\sup_{n\to\infty}T_{M_1}^{(n)}\leq T_{M_1}$ holds.

        ~~~~Suppose $s'<s<T_{M_1}$, $\omega\in \mathcal W$, and $s\in\mathbb{R}^+\setminus J(\omega)$, then $\overline{\omega}_{0,s'}>\overline{\omega}_{s',s}$. Moreover, there exists a $\delta>0$ such that $s'-\delta\in\mathbb{R}^+\setminus J(\omega)$ and $\overline{\omega}_{0,s'-\delta}>\overline{\omega}_{s'-\delta,s}$. On the other hand, applying \citet[Lemma 6]{Song2013weakconvergence} and \citet[Chapter VI, Section 2, 2.3]{jacod2013limit} yield $\lim_{n\to\infty}\overline{\omega}_{0,s'-\delta}^{(n)}=\overline{\omega}_{0,s'-\delta}$ and $\lim_{n\to\infty}\overline{\omega}_{s'-\delta,s}^{(n)}=\overline{\omega}_{s'-\delta,s}$. Besides, the convergence $s'^{(n)}\to s'$ implies that $s'^{(n)}>s'-\delta$ for a sufficiently large $n$. As a result, $\lim_{n\to\infty}\overline{\omega}_{0,s'^{(n)}}^{(n)}\geq\lim_{n\to\infty}\overline{\omega}_{0,s'-\delta}^{(n)}=\overline{\omega}_{0,s'-\delta}>\overline{\omega}_{s'-\delta,s}=\lim_{n\to\infty}\overline{\omega}_{s'-\delta,s}^{(n)}\geq\lim_{n\to\infty}\overline{\omega}_{s'^{(n)},s}^{(n)}$, equivalently, $T_{M_1}^{(n)}>s$. The arbitrariness of $s$ and dense property of the set $\mathbb{R}^+\setminus J(\omega)$ finally lead to $\lim\inf_{n\to\infty}T_{M_1}^{(n)}\geq T_{M_1}$.

        ~~~~Combining the above two results completes the proof.
    \end{enumerate}
\end{proof}

\bibliographystyle{chicagoa}
\bibliography{refs}

\end{document}